\pgfplotsset{width=5.5in, compat=1.14}
\pgfplotsset{compat=1.15}
\newcommand*\circled[1]{\tikz[baseline=(char.base)]{
 \node[shape=circle,draw,inner sep=2pt] (char) {#1};}}
\newcommand{\Rmnum}[1]{\expandafter\@slowromancap\romannumeral #1@}
\def\Xint#1{\mathchoice
{\XXint\displaystyle\textstyle{#1}}%
{\XXint\textstyle\scriptstyle{#1}}%
{\XXint\scriptstyle\scriptscriptstyle{#1}}%
{\XXint\scriptscriptstyle\scriptscriptstyle{#1}}%
\!\int}
\def\XXint#1#2#3{{\setbox0=\hbox{$#1{#2#3}{\int}$}
\vcenter{\hbox{$#2#3$}}\kern-.5\wd0}}
\def\dashint{\Xint-}
\numberwithin{equation}{section}
\newtheorem{Theorem}{Theorem}[section]
\newtheorem{Corollary}[Theorem]{Corollary}
\newtheorem{Lemma}[Theorem]{Lemma}
\newtheorem{Proposition}[Theorem]{Proposition}
 { \theoremstyle{definition}
\newtheorem{Example}[Theorem]{Example}
\newtheorem{Remark}[Theorem]{Remark} }
\begin{document}

\newcommand{\arXivNumber}{2104.06471}

\renewcommand{\PaperNumber}{008}

\FirstPageHeading

\ShortArticleName{Simplified Forms of the Transition Probabilities of the Two-Species ASEP}

\ArticleName{Simplified Forms of the Transition Probabilities\\ of the Two-Species ASEP with Some Initial Orders\\ of Particles}

\Author{Eunghyun LEE and Temirlan RAIMBEKOV}
\AuthorNameForHeading{E.~Lee and T.~Raimbekov}
\Address{Department of Mathematics, Nazarbayev University, Nur-sultan, Kazakhstan}
\Email{\href{mailto:eunghyun.lee@nu.edu.kz}{eunghyun.lee@nu.edu.kz}, \href{mailto:temirlan.raimbekov@alumni.nu.edu.kz}{temirlan.raimbekov@alumni.nu.edu.kz}}
\URLaddress{\url{https://sites.google.com/a/nu.edu.kz/eunghyun-lee-s-homepage/}}

\ArticleDates{Received April 15, 2021, in final form January 24, 2022; Published online January 29, 2022}

\Abstract{It has been known that the transition probability of the single species ASEP with $N$ particles is expressed as a sum of $N!$ $N$-fold contour integrals which are related to permutations in the symmetric group $S_N$. On other hand, the transition probabilities of the multi-species ASEP, in general, may be expressed as a sum of much more terms than $N!$. In this paper, we show that if the initial order of species is given by $2\cdots 21$, $12\cdots 2$, $1\cdots 12$ or $21\cdots 1$, then the transition probabilities can be expressed as a sum of at most $N!$ contour integrals, and provide their formulas explicitly.}

\Keywords{multi-species ASEP; transition probability; Bethe ansatz; symmetric group}

\Classification{82C22; 60J27}

\section{Introduction}
\subsection{Definition of the model and previous results}
In the multi-species asymmetric simple exclusion process (ASEP) on $\mathbb{Z}$, particles belong to a species labelled by one of positive integers. Each particle chooses a direction to move one step to the right or to the left after waiting time exponentially distributed with rate 1. The probability to choose the right direction is $p$ and the probability to choose the left direction is $q=1-p$. If the target site is empty, the particle moves to the site, but if the site is already occupied, the following rule is applied: if a particle belonging to species $l$ tries moving to the target site occupied by a particle belonging to species $l'\geq l$, then the move is prohibited, but if $l'<l$, the particle belonging to $l$ can move to the target site by interchanging sites with the particle belonging to $l'$. If we assume that there are $N$ particles, a state is denoted by a~pair $(X, \pi)$ where $X = (x_1,\dots, x_N) \in \mathbb{Z}^N$ with $x_1<\cdots <x_N$ for the positions of particles and $\pi= \pi(1)\pi(2)\cdots\pi(N)$ is a permutation of a multi-set $\mathcal{M} = [i_1,\dots,i_N]$ with elements taken from $\{1,\dots, N\}$ to represent the species of particles. Here, $\pi(i)$ represents the $i^{\rm th}$ leftmost particle's species. Since particles can interchange their positions, the order of species may change over time. The transition probability from the initial state $(Y,\nu)$ to state $(X,\pi)$ at time $t$ is denoted by $P_{(Y,\nu)}(X,\pi;t).$ Earlier works on the transition probabilities and some distributions in the multi-species ASEP are found in \cite{Chatterjee-Schutz-2010,Kuan-2018,Kuan-2020,Lee-2017,Lee-2018,Lee-2020,Tracy-Widom-2009,Tracy-Widom-2013}. Also, the multi-species ASEP can be considered as a special case of the coloured stochastic vertex model (see \cite{Borodin-Bufetov,Borodin-Wheeler}). According to~\cite{Lee-2020}, the transition probability of the multi-species ASEP with $N$ particles from the initial state $(Y,\nu)$ to the state $(X,\pi)$ is a matrix element of an $N^N \times N^N$ matrix $ \mathbf{P}_Y(X;t)$ whose columns $\nu$ and rows $\pi$ are labelled by $11\cdots 1, \dots, NN\cdots N$. In \cite{Lee-2020}, the formula of $ \mathbf{P}_Y(X;t)$ is written
\begin{gather}\label{1223-am-51911}
 \mathbf{P}_Y(X;t) = \dashint_c\cdots \dashint_c\sum_{\sigma\in {S}_N}\mathbf{A}_{\sigma}\prod_{i=1}^N\big(\xi_{\sigma(i)}^{x_i-y_{\sigma(i)}-1}{\rm e}^{\varepsilon(\xi_i) t}\big)\, {\rm d}\xi_1\cdots {\rm d}\xi_N,
\end{gather}
where $\mathbf{A}_{\sigma}$ is a $N^N \times N^N$ matrix. Note that the form of (\ref{1223-am-51911}) resembles~(2.3) in~\cite{Tracy-Widom-2008}. Here, $\dashint_c$ implies $(1/2\pi {\rm i})\int_c$ where the contour~$c$ is a positively oriented circle centered at the origin with sufficiently small radius so that no poles except the one at the origin are included in~$c$, and
\begin{gather*}
\varepsilon(\xi_i) = \frac{p}{\xi_i} +q\xi_i -1
\end{gather*}
and the sum in (\ref{1223-am-51911}) is over all permutations $\sigma$ in the symmetric group $S_N$.
The integral on the right-hand side of (\ref{1223-am-51911}) implies the matrix element-wise integral, so the $(\pi,\nu)^{\rm th}$ term of $\mathbf{P}_Y(X;t)$ is
\begin{align}
[\mathbf{P}_Y(X;t)]_{\pi,\nu}& = P_{(Y,\nu)}(X,\pi;t)\nonumber\\
 & =\dashint_c\cdots \dashint_c\sum_{\sigma\in {S}_N} [\mathbf{A}_{\sigma} ]_{\pi,\nu}\prod_{i=1}^N\big(\xi_{\sigma(i)}^{x_i-y_{\sigma(i)}-1}{\rm e}^{\varepsilon(\xi_i) t}\big)\, {\rm d}\xi_1\cdots {\rm d}\xi_N.\label{1050pm102}
 \end{align}
The procedure introduced in \cite{Lee-2020} to find the matrix $\mathbf{A}_{\sigma}$ is as follows. Let $T_i$ be the simple transposition which interchanges the number at the $i^{\rm th}$ slot and the number at the $(i+1)^{\rm st}$ slot. It is well known that simple transpositions $T_1,\dots, T_{N-1}$ generate the symmetric group $S_N$. Hence, any permutation $\sigma \in S_N$ can be written as a product of simple transpositions, that is,
\begin{gather}\label{853pm35}
\sigma = T_{i_j}\cdots T_{i_1}
\end{gather}
for some $i_1,\dots, i_j \in \{1,\dots, N-1\}$. (Here, the expression (\ref{853pm35}) is not unique.) If $T_i$ is acted on a permutation with $\alpha$ at the $i^{\rm th}$ slot and $\beta$ at the $(i+1)^{\rm st}$ slot, then
\begin{gather*}
T_i(\cdots \alpha\beta \cdots) = (\cdots \beta\alpha \cdots),
\end{gather*}
and we denote this $T_i$ by $T_i(\beta,\alpha)$ when we need to show explicitly which numbers are interchanged. For example, for $N=3$,
\begin{gather*}
T_1T_2T_1 = T_1T_2T_1(123) = T_1T_2(213)=T_1(231) = 321,
\end{gather*}
and we write $321 = T_1(3,2)T_2(3,1)T_1(2,1)$. Hence, we will write (\ref{853pm35}) as
\begin{gather}\label{123am107}
\sigma = T_{i_j}(\beta_j,\alpha_j)\cdots T_{i_1}(\beta_1,\alpha_1)
\end{gather}
 when necessary.
 Corresponding to $T_{i}(\beta,\alpha)$, we define $N^N \times N^N$ matrix $\mathbf{T}_{i}(\beta,\alpha)$ by
\begin{gather*}
\mathbf{T}_{i}(\beta,\alpha) =\underbrace{\mathbf{I}_N \otimes \cdots \otimes \mathbf{I}_N}_{(i-1)~\text{times}} \otimes \mathbf{R}_{\beta\alpha} \otimes \underbrace{\mathbf{I}_N \otimes \cdots \otimes \mathbf{I}_N}_{(N-i-1)~\text{times}},
\end{gather*}
where $\mathbf{I}_N$ is the $N \times N$ identity matrix and $\mathbf{R}_{\beta\alpha}$ is an $N^2 \times N^2$ matrix whose columns and rows are labelled by $11, 12, \dots, 1N, 21,\dots, NN$. The matrix elements of $\mathbf{R}_{\beta\alpha}$ are given by
\begin{gather}\label{625pm72443}
[\mathbf{R}_{\beta\alpha}]_{ij,kl} = \begin{cases}
S_{\beta\alpha}& \text{if}~ij=kl~\text{with}~i=j,\\
P_{\beta\alpha}& \text{if}~ij=kl~\text{with}~i<j,\\
Q_{\beta\alpha}& \text{if}~ij=kl~\text{with}~i>j,\\
pT_{\beta\alpha}&\text{if}~ij=lk~\text{with}~i<j,\\
qT_{\beta\alpha}&\text{if}~ij=lk~\text{with}~i>j,\\
0 &\text{for all other cases},
\end{cases}
\end{gather}
where
\begin{alignat*}{3}
& S_{\beta\alpha} =-\frac{p+q\xi_{\alpha}\xi_{\beta} - \xi_{\beta}}{p+q\xi_{\alpha}\xi_{\beta} - \xi_{\alpha}},\qquad && P_{\beta\alpha} = \frac{(p-q\xi_{\alpha})(\xi_{\beta}-1)}{p+q\xi_{\alpha}\xi_{\beta} - \xi_{\alpha}},& \nonumber\\
 & T_{\beta\alpha} = \frac{\xi_{\beta}-\xi_{\alpha}}{p+q\xi_{\alpha}\xi_{\beta} - \xi_{\alpha}},\qquad && Q_{\beta\alpha} =\frac{(p-q\xi_{\beta})(\xi_{\alpha}-1)}{p+q\xi_{\alpha}\xi_{\beta} - \xi_{\alpha}}.&
\end{alignat*}
With this setting, it was obtained that $\mathbf{A}_{\sigma}$ is given by
\begin{gather}\label{932pm35}
\mathbf{A}_{\sigma} = \mathbf{T}_{i_j}(\beta_j,\alpha_j) \cdots \mathbf{T}_{i_1}(\beta_1,\alpha_1),
\end{gather}
when $\sigma$ is written as in (\ref{123am107}). Although the expression (\ref{123am107}) is not unique, (\ref{932pm35}) is well-defined in the sense that (\ref{932pm35}) represents the same matrix for any expression (\ref{123am107}) (see \cite[Remark~2.2]{Lee-2020}).

\subsection{Motivation and main results}
\subsubsection{Motivation}\label{656pm127}
It can be shown that $[\mathbf{A}_{\sigma}]_{\pi,\nu} = 0$ in (\ref{1050pm102}) unless both $\pi$ and $\nu$ are from the same multi-set (see Section~\ref{454pm321}), and in this case, it is obvious that $P_{(Y,\nu)}(X,\pi;t) = 0$, which physically also makes sense. However, in general, the matrix element of $\mathbf{A}_{\sigma}$ is written as a~\textit{sum} because $\mathbf{A}_{\sigma}$ is a~product of matrices as seen in~(\ref{932pm35}). If we want to know the explicit formula of $P_{(Y,\nu)}(X,\pi;t)$, we should know the corresponding matrix elements, $[\mathbf{A}_{\sigma}]_{\pi,\nu}$, explicitly, for all $\sigma$. It is interesting that some elements of~$\mathbf{A}_{\sigma}$ that are expressed as \textit{sums} can be further factorized. For example, for $\sigma = 321 = T_2(2,1)T_1(3,1)T_2(3,2)$,
\begin{gather*}
 [\mathbf{A}_{321}]_{121,211} = [\mathbf{T}_2(2,1)\mathbf{T}_1(3,1)\mathbf{T}_2(3,2) ]_{121,221} = pP_{32}Q_{31}T_{21}+pQ_{21}S_{31}T_{32}
\end{gather*}
but we can further observe that
\begin{gather*}
pP_{32}Q_{31}T_{21}+pQ_{21}S_{31}T_{32}= Q_{21}pT_{31}S_{32}.
\end{gather*}
Actually, if we use a different expression $\sigma = 321 = T_1(3,2)T_2(3,1)T_1(2,1)$, then we directly obtain
\begin{gather*}
 [\mathbf{A}_{321}]_{121,211} = [\mathbf{T}_1(3,2)\mathbf{T}_2(3,1)\mathbf{T}_1(2,1) ]_{121,221} = Q_{21}pT_{31}S_{32}
\end{gather*}
through matrix multiplication. However, unfortunately, not all $ [\mathbf{A}_{\sigma}]_{\pi,\nu}$ can be simplified to a~factorized form. In this paper, we are interested in the factorized forms of $[\mathbf{A}_{\sigma}]_{\pi,\nu} $ because single-species models for which the Bethe ansatz can be applicable have factorized forms in the transition probabilities and some interesting probability distributions were obtained from these transition probabilities \cite{Korhonen-Lee-2014,Lee-2010,Lee-2012,Lee-2020,Lee-Wang-2019,Nagao, Rakos-Schutz-2006,Tracy-Widom-2008,Tracy-Widom-2018}. In this paper, we show that if the initial permutation $\nu$ of species is one of $\nu = 2\cdots 21$, $1\cdots 12$, $21\cdots 1$ and $12\cdots 2$, and $\sigma$'s expression by simple transpositions is obtained in a special way, then $[\mathbf{A}_{\sigma}]_{\pi,\nu}$ is zero or written as a factorized form
\begin{gather}\label{627am411}
[\mathbf{A}_{\sigma}]_{\pi,\nu} =\prod_{(\beta,\alpha)}R_{\beta\alpha},
\end{gather}
where $R_{\beta\alpha}$ is one of $S_{\beta\alpha}$, $Q_{\beta\alpha}$, $P_{\beta\alpha}$, $pT_{\beta\alpha}$ or $qT_{\beta\alpha}$, which is similar to
\begin{gather*}
[\mathbf{A}_{\sigma}]_{1\cdots1,1\cdots 1} = \prod_{(\beta,\alpha)}S_{\beta\alpha}
\end{gather*}
in the (single species) ASEP \cite{Tracy-Widom-2008}. The notation $\prod_{(\beta,\alpha)}$ implies that the product is taken over all inversions $(\beta,\alpha)$ of $\sigma$. An inversion of a permutation $\sigma=\sigma(1)\cdots\sigma(N)$ is a pair of elements $(\sigma(i), \sigma(j))$ with $i<j$ and $\sigma(i)>\sigma(j)$.
It will remain for future works to see if some interesting probability distributions can be obtained in \textit{neat} forms from the transition probabilities with the formulas of $[\mathbf{A}_{\sigma}]_{\pi,\nu}$ given in this paper.

\subsubsection{Main results}\label{1126pm1010}
Each permutation in $S_N$ may be written as a product of simple transpositions in many ways via
\begin{gather}\label{952pm411}
\begin{aligned}
& T_iT_j = T_jT_i&~~~\text{if $|i-j| \geq 2$},\\
& T_iT_jT_i = T_jT_iT_j&~~~\text{if $|i-j| = 1$},\\
& T_i^2 =1.
\end{aligned}
\end{gather}
One of the findings in this paper is that if $\sigma$ is expressed as in Theorem \ref{700pm410}, which is a known fact, then $[\mathbf{A}_{\sigma}]_{\pi, 2\cdots 21}$ is zero or written as (\ref{627am411}). If $\sigma$ is expressed in a different way from Theorem~\ref{700pm410}, then the form of $[\mathbf{A}_{\sigma}]_{\pi, 2\cdots 21}$ may not be in a factorized form.
\begin{Theorem}[\cite{Kassel}]\label{700pm410}
Consider the following subsets of the symmetric group $S_N$.
\begin{gather*}
\begin{aligned}
&\Sigma_1 = \{1,T_1\}, \\
&\Sigma_2 = \{1,T_2,T_2T_1\}, \\
&\Sigma_3 = \{1,T_3,T_3T_2,T_3T_2T_1\}, \\
&\hspace{0.3cm} \vdots \\
&\Sigma_{N-1} = \{1,T_{N-1},T_{N-1}T_{N-2},\dots, T_{N-1}\cdots T_2T_1\}.
\end{aligned}
\end{gather*}
For any permutation $\sigma \in S_N$, there is a unique element
\begin{gather*}
(w_1,\dots, w_{N-1}) \in \Sigma_1 \times \Sigma_2 \times \cdots \times \Sigma_{N-1}
\end{gather*}
such that $\sigma = w_1w_2\cdots w_{N-1}$.
\end{Theorem}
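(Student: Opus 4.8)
The plan is to prove existence and uniqueness of the factorization simultaneously, by induction on $N$, by identifying $\Sigma_{N-1}$ as a complete set of coset representatives for a natural subgroup. Throughout I regard $S_N$ as the group of permutations of $\{1,\dots,N\}$ with $T_i$ the transposition swapping $i$ and $i+1$, and I write $H=\langle T_1,\dots,T_{N-2}\rangle$ for the subgroup generated by all but the last simple transposition. Since $T_1,\dots,T_{N-2}$ move only the letters $1,\dots,N-1$, the subgroup $H$ fixes the letter $N$, so $H\cong S_{N-1}$ and $[S_N:H]=N$.

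The single structural fact I would establish is that the $N$ elements of $\Sigma_{N-1}$, namely $c_k=T_{N-1}T_{N-2}\cdots T_{N-k}$ for $k=0,\dots,N-1$ (with $c_0=1$), form a right transversal of $H$ in $S_N$. The point is that right cosets of $H$ are detected by the image of the letter $N$: one has $Hg=Hg'$ iff $g(g')^{-1}\in H$, i.e.\ iff $g(g')^{-1}$ fixes $N$, which happens precisely when $g^{-1}(N)=(g')^{-1}(N)$. A direct computation, applying the transpositions in $c_k^{-1}=T_{N-k}\cdots T_{N-1}$ to $N$ one at a time, gives $c_k^{-1}(N)=N-k$. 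Hence $c_0^{-1}(N),\dots,c_{N-1}^{-1}(N)$ are $N,N-1,\dots,1$, all distinct, so the $c_k$ represent the $N$ distinct right cosets of $H$; since there are exactly $N$ such cosets, they represent all of them exactly once.

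With this in hand the induction is routine. For the inductive step I note that $\Sigma_i\subseteq H$ for every $i\le N-2$, so any product $w_1\cdots w_{N-2}$ automatically lies in $H$; consequently a factorization $\sigma=w_1\cdots w_{N-1}$ forces $\sigma\in H\,w_{N-1}$. By the transversal property there is exactly one $w_{N-1}\in\Sigma_{N-1}$ with $\sigma\in H\,w_{N-1}$, and then $h:=\sigma w_{N-1}^{-1}\in H\cong S_{N-1}$. Applying the induction hypothesis to $h$ produces a unique $(w_1,\dots,w_{N-2})\in\Sigma_1\times\cdots\times\Sigma_{N-2}$ with $h=w_1\cdots w_{N-2}$; assembling the two pieces yields both existence and uniqueness of the factorization of $\sigma$. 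The base case $N=2$ is immediate, since $\Sigma_1=\{1,T_1\}=S_2$.

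I expect the only genuine content to be the transversal claim, i.e.\ the verification that membership in a right coset of $H$ is read off from the image of $N$ together with the identity $c_k^{-1}(N)=N-k$; everything after that is bookkeeping through the cosets. As a consistency check and an alternative route to uniqueness, I note that $\prod_{i=1}^{N-1}|\Sigma_i|=2\cdot 3\cdots N=N!=|S_N|$, so it would in fact suffice to prove surjectivity of the map $(w_1,\dots,w_{N-1})\mapsto w_1\cdots w_{N-1}$ and then invoke this cardinality count; the coset argument above has the advantage of delivering injectivity directly.
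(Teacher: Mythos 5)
Your proof is correct, and it takes a genuinely different route from the paper. The paper never proves Theorem \ref{700pm410} itself --- it is quoted from \cite{Kassel} --- and when it needs the analogous decompositions for the other initial orders (Propositions \ref{442pm321}, \ref{514pm323}, \ref{1044pm44}) it argues purely with the presentation: by Lemma \ref{1140pm45} (or Lemma 4.2 of \cite{Kassel}) some word for $\sigma$ contains the extremal generator at most once, one writes $\sigma=\sigma' T_1\sigma''$, pushes that generator through with the far-commutation relations (\ref{952pm411}), and applies the induction hypothesis twice. You instead exploit the permutation action: $H=\langle T_1,\dots,T_{N-2}\rangle$ is the stabilizer of the letter $N$, $\Sigma_{N-1}$ is a complete right transversal of $H$ because the coset $Hg$ is determined by $g^{-1}(N)$ and $c_k^{-1}(N)=N-k$, and the induction runs down the stabilizer chain $S_N\supset S_{N-1}\supset\cdots$. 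Your route is shorter and delivers uniqueness simultaneously with existence (and makes transparent why $\prod_i|\Sigma_i|=N!$), whereas the word-rewriting route establishes existence directly and in effect leaves injectivity to the cardinality count. What the presentation-theoretic route buys in exchange: it uses only the Coxeter relations, hence transfers to settings like the braid group where no finite counting or point stabilizer is available (which is the context of \cite{Kassel}), and it yields the ``extremal generator appears at most once'' property as an explicit by-product --- a property the paper invokes repeatedly (e.g., right after Theorem \ref{420am36} and in the proof of Theorem \ref{1013pm104}). One small point of care for your write-up: the paper's $T_i$ permutes the $i^{\rm th}$ and $(i+1)^{\rm st}$ slots of a word, while yours transposes the letters $i$ and $i+1$; the two realizations are anti-isomorphic by a map fixing the generators, so your statement transfers to the paper's convention by applying it to $\sigma^{-1}$ (equivalently, use ``the letter in the last slot'' rather than ``the position of the letter $N$'' as the coset invariant). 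This is a presentational remark, not a gap.
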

\begin{Theorem}\label{420am36}
Let $\sigma =w_1\cdots w_{N-1}$ be an expression as in Theorem~{\rm \ref{700pm410}} and let
\begin{gather*}
\mathbf{A}_{\sigma} = \mathbf{T}_{i_j}(\beta_j,\alpha_j)\cdots\mathbf{T}_{i_2}(\beta_2,\alpha_2) \mathbf{T}_{i_1}(\beta_1,\alpha_1)
\end{gather*}
be the matrix corresponding to $\sigma$.
Then, for all $N \geq 2$,
\begin{itemize}\itemsep=0pt
 \item [$(a)$] $ \big[{\mathbf{A}}_{12\cdots N}\big]_{2\cdots 21,2\cdots 21} = 1 $ and $ \big[{\mathbf{A}}_{12\cdots N}\big]_{\pi,2\cdots 21} = 0 $ if $\pi \neq 2\cdots 21$.
 \item [$(b)$] If $\sigma \neq 12\cdots N$, then $\big[{\mathbf{A}}_{\sigma}\big]_{\pi,2\cdots 21}$ is zero or written as
 \begin{gather}\label{428am36}
 \big[{\mathbf{A}}_{\sigma}\big]_{\pi,2\cdots 21} = \prod_{(\beta,\alpha)}R_{\beta\alpha}
\end{gather}
where $R_{\beta\alpha}$ is one of $S_{\beta\alpha}$, $P_{\beta\alpha}$, $Q_{\beta\alpha}$, $pT_{\beta\alpha}$, $qT_{\beta\alpha}$. The product in~\eqref{428am36} is taken over all inversions $(\beta,\alpha)$ in $\sigma$.
\end{itemize}
\end{Theorem}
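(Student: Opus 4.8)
The plan is to exploit the fact that $\nu = 2\cdots21$ contains only a single particle of species~$1$, so that every intermediate configuration produced by multiplying out $\mathbf{A}_\sigma = \mathbf{T}_{i_j}(\beta_j,\alpha_j)\cdots\mathbf{T}_{i_1}(\beta_1,\alpha_1)$ against the unit column $e_{2\cdots21}$ is again a permutation of the multiset $[2,\dots,2,1]$, hence is determined by the single slot occupied by the lone species-$1$ particle. First I would read off from \eqref{625pm72443} the local action of $\mathbf{R}_{\beta\alpha}$ on this reduced state space: a window of two species-$2$ particles contributes the scalar $S_{\beta\alpha}$ and is unchanged; a window $12$ (walker on the left) produces $P_{\beta\alpha}$ on the unchanged state and $qT_{\beta\alpha}$ on the swapped state $21$; and a window $21$ (walker on the right) produces $Q_{\beta\alpha}$ on the unchanged state and $pT_{\beta\alpha}$ on the swapped state $12$. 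Thus the computation of $[\mathbf{A}_\sigma]_{\pi,2\cdots21}$ becomes a sum over the trajectories of a single walker that either stays or hops one slot at each transposition, weighted by the $R_{\beta\alpha}$'s, with the background $2$'s always contributing $S_{\beta\alpha}$. Since the normal form of Theorem~\ref{700pm410} is a reduced word, its letters are in bijection with the inversions $(\beta_l,\alpha_l)$ of $\sigma$, so every individual trajectory contributes exactly one factor per inversion; the whole content of \eqref{428am36} is therefore that, for each fixed $\pi$, at most one trajectory survives, i.e.\ the sum never genuinely collapses two monomials onto the same $\pi$.

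I would prove this by induction on $N$, peeling off the last factor $w_{N-1}\in\Sigma_{N-1}$, which is applied first to the column $e_{2\cdots21}$. Writing $w_{N-1}=T_{N-1}T_{N-2}\cdots T_m$ and applying its letters in the order $T_m,\dots,T_{N-1}$, one checks that while the walker sits at slot $N$ the letters $T_m,\dots,T_{N-2}$ see only $22$ windows and contribute $S_{(m+1,m)}\cdots S_{(N-1,m)}$, after which $T_{N-1}$ acts on the window $21$ and splits the state into the walker-at-$N$ branch (factor $Q_{(N,m)}$) and the walker-at-$(N-1)$ branch (factor $pT_{(N,m)}$). The key structural point is that the remaining factor $\tau=w_1\cdots w_{N-2}$ uses only $T_1,\dots,T_{N-2}$ and hence fixes slot $N$: the walker-at-$N$ branch is frozen at slot $N$ and only accrues further $S$'s, producing the single monomial for $\pi=2\cdots21$, whereas the walker-at-$(N-1)$ branch is literally the same problem for $S_{N-1}$ with initial state $2\cdots21$ (walker at the rightmost of the first $N-1$ slots), to which the induction hypothesis applies. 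Because one branch ends at slot $N$ and the other stays inside slots $1,\dots,N-1$, their final configurations are disjoint, so no two monomials are ever added together — this is exactly the no-merging statement that makes \eqref{428am36} a product rather than a sum.

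To see that the surviving monomial is the product over all inversions of $\sigma$, I would track the inversions alongside the matrix action. Applying $w_{N-1}$ first sends $12\cdots N$ to $(1,\dots,m-1,m+1,\dots,N,m)$, driving the value $m$ to the last slot; the inversions of $\sigma$ involving $m$ are then exactly $(m+1,m),\dots,(N,m)$, and these are precisely the labels generated inside the block $w_{N-1}$, matching the factors $S_{(m+1,m)},\dots,S_{(N-1,m)}$ together with the final $Q_{(N,m)}$ or $pT_{(N,m)}$. The factor $\tau$ acts on slots $1,\dots,N-1$, which carry the values $\{1,\dots,N\}\setminus\{m\}$ in increasing order, and is the normal form (for $S_{N-1}$) of the induced permutation; its inversions are exactly the inversions of $\sigma$ not involving $m$. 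Thus the two groups of factors partition the inversions of $\sigma$, and concatenating the block monomial with the monomial supplied by the induction hypothesis yields $\prod_{(\beta,\alpha)}R_{\beta\alpha}$. The base case $N=2$, and the degenerate case $\sigma=12\cdots N$ of part~$(a)$ where $\mathbf{A}_\sigma$ is the identity, are immediate.

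I expect the main obstacle to be the bookkeeping in this last step: one must verify carefully that the labels $(\beta,\alpha)$ attached to the letters of $\tau$, when computed inside the full $S_N$ word, agree under the order-isomorphism $\{1,\dots,N\}\setminus\{m\}\to\{1,\dots,N-1\}$ with the labels produced by the standalone $S_{N-1}$ normal form, so that the induction hypothesis may be invoked with the correct arguments. The other point requiring care, though conceptually simple, is the claim that the passing transpositions $T_m,\dots,T_{N-2}$ in $w_{N-1}$ genuinely see only $22$ windows; this rests on the fact that $T_{N-1}$ is the leftmost letter of its block and is therefore applied last within the block, so the walker cannot have been disturbed before $T_{N-1}$ reaches it.
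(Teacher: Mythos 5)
Your proposal is correct and takes essentially the same route as the paper's proof: your analysis of $w_{N-1}$ acting first on the column $e_{2\cdots 21}$ (the $S$'s followed by the $Q$/$pT$ split) is exactly the content of Lemma~\ref{715pm310}, your observation that $\tau=w_1\cdots w_{N-2}$ fixes slot $N$ is the paper's direct-sum identity $\mathbf{T}^{\mathcal{M}_N}_l(\beta,\alpha)=\mathbf{T}^{\mathcal{M}_{N-1}}_{l}(\beta,\alpha)\oplus S_{\beta\alpha}$, and your two-branch disjointness (walker at slot $N$ versus walker in slots $1,\dots,N-1$) is precisely the no-merging step in the paper's induction on~$N$. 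The relabeling issue you flag as the main obstacle is in fact harmless, since the inductive claim concerns only the zero pattern and monomial structure of the matrix product, which is independent of the particular labels $(\beta,\alpha)$ attached to the letters.
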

\begin{Remark} Theorem \ref{420am36}(b) asserts the existence of the factorized form but does not provide information on how to choose the form of~$R_{\beta\alpha}$. The choice for $R_{\beta\alpha}$ in~(\ref{428am36}) depends on~$\sigma$ and~$\pi$. The explicit form of~(\ref{428am36}) will be provided in Theorems~\ref{1013pm104} and~\ref{312am109}.
\end{Remark}
 The method to express $\sigma$ in Theorem~\ref{700pm410} does not work for other initial orders $\nu= 12\cdots2,$ $1\cdots 12,$ $21\cdots 1$. In other words, $[\mathbf{A}_{\sigma}]_{\pi,\nu}$ may not be in a factorized form if $\nu\neq 2\cdots 21$. For $\nu= 12\cdots2,$ $1\cdots 12,$ $21\cdots 1$, we need different methods to express $\sigma$ by simple transpositions for factorized forms of $[{\mathbf{A}}_{\sigma}]_{\pi,\nu}$. These methods will be provided in Sections~\ref{1034pm106} and~\ref{1035pm106}.

Now, we state the formulas for $[{\mathbf{A}}_{\sigma}]_{\pi,\nu}$. First, we find the formula of the diagonal terms. That is, this is the case that initial order of species and the order at time $t$ are the same.
\begin{Theorem}\label{1013pm104}
Let $\nu=2\cdots 21$ and
\begin{gather*}
\sigma = w_1\cdots w_{N-1} = {T}_{i_j}(\beta_j,\alpha_j)\cdots {T}_{i_1}(\beta_1,\alpha_1)
\end{gather*}
be given as in Theorem~{\rm \ref{700pm410}}. Let
\begin{gather*}
\mathbf{A}_{\sigma}=\mathbf{T}_{i_j}(\beta_j,\alpha_j)\cdots \mathbf{T}_{i_1}(\beta_1,\alpha_1)
\end{gather*}
 be the matrix corresponding to $\sigma$. Then,
 \begin{align}
[\mathbf{A}_{\sigma}]_{\nu, \nu}& = [\mathbf{T}_{i_j} ]_{\nu, \nu} \cdots [\mathbf{T}_{i_m} ]_{\nu, \nu} \cdots [\mathbf{T}_{i_1} ]_{\nu, \nu} \nonumber\\
& =\begin{cases}
S_{\beta_{j}\alpha_{j}}\cdots S_{\beta_{m+1}\alpha_{m+1}} Q_{\beta_m\alpha_m}S_{\beta_{m-1}\alpha_{m-1}}\cdots S_{\beta_1\alpha_1}& \text{if~$i_m = N-1$},\\
S_{\beta_{j}\alpha_{j}}\cdots S_{\beta_1\alpha_1}& \text{if~$i_i,\dots, i_j \neq N-1$}.
 \end{cases}\label{6166pm104}
\end{align}
\end{Theorem}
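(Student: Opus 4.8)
The plan is to compute the diagonal element directly from its definition as a matrix product, writing $[\mathbf{A}_{\sigma}]_{\nu,\nu} = \sum_{\mu_1,\dots,\mu_{j-1}} [\mathbf{T}_{i_j}]_{\nu,\mu_{j-1}}[\mathbf{T}_{i_{j-1}}]_{\mu_{j-1},\mu_{j-2}}\cdots[\mathbf{T}_{i_1}]_{\mu_1,\nu}$, where the sum runs over all intermediate states, and then to show that exactly one path is nonzero, namely the one for which every $\mu_\ell$ equals $\nu$. The starting point is a structural observation drawn from Theorem~\ref{700pm410}: since $\Sigma_1,\dots,\Sigma_{N-2}$ involve only the generators $T_1,\dots,T_{N-2}$, the letter $T_{N-1}$ can occur only inside the canonical factor $w_{N-1}$, and each nontrivial element of $\Sigma_{N-1}$ contains $T_{N-1}$ exactly once, as its leftmost letter. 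Hence the reduced word $T_{i_j}\cdots T_{i_1}$ contains the letter $T_{N-1}$ at most once; I would record the index $m$ (if it exists) with $i_m=N-1$, and note that $i_\ell\leq N-2$ for every $\ell\neq m$.

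Next I would track the unique species-$1$ particle, which starts at slot $N$ in $\nu=2\cdots21$. Because $\mathbf{T}_i(\beta,\alpha)=\mathbf{I}_N^{\otimes(i-1)}\otimes\mathbf{R}_{\beta\alpha}\otimes\mathbf{I}_N^{\otimes(N-i-1)}$ acts nontrivially only on slots $i$ and $i+1$, any factor with $i\leq N-2$ leaves slot $N$ untouched, so only a factor $\mathbf{T}_{N-1}$ can move the $1$ out of slot $N$. I would combine this with the entries of $\mathbf{R}_{\beta\alpha}$ in~(\ref{625pm72443}): whenever a factor with $i\leq N-2$ acts on a state whose slots $i,i+1$ both carry species $2$, the row $22$ of $\mathbf{R}_{\beta\alpha}$ has its single nonzero entry $S_{\beta\alpha}$ on the diagonal (the case $i=j$), so the state is preserved and the factor simply contributes $S_{\beta\alpha}$.

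Reading the product from the right, I would then argue by cases. Applying $\mathbf{T}_{i_1},\dots,\mathbf{T}_{i_{m-1}}$ to $\nu$, each with index $\leq N-2$ acting on a $(2,2)$ pair, leaves the state equal to $\nu$ and contributes $S_{\beta_{m-1}\alpha_{m-1}}\cdots S_{\beta_1\alpha_1}$. The factor $\mathbf{T}_{i_m}=\mathbf{T}_{N-1}$ then acts on the pair $(2,1)$ at slots $N-1,N$; by~(\ref{625pm72443}) the row $21$ has exactly two nonzero entries, the diagonal $Q_{\beta_m\alpha_m}$ keeping the state at $\nu$ and the off-diagonal $qT_{\beta_m\alpha_m}$ sending the state to $2\cdots2\,1\,2$, with the $1$ at slot $N-1$ and a $2$ at slot $N$. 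The crux is that the off-diagonal branch cannot contribute to the \emph{diagonal} element: all remaining factors $\mathbf{T}_{i_{m+1}},\dots,\mathbf{T}_{i_j}$ have index $\leq N-2$ and hence never touch slot $N$, so that slot keeps species $2$ and the state can never return to $\nu$. Thus only the diagonal branch survives; it keeps the state at $\nu$ through the remaining $(2,2)$-factors, and the product collapses to $[\mathbf{T}_{i_j}]_{\nu,\nu}\cdots[\mathbf{T}_{i_1}]_{\nu,\nu}$ with the stated $S$'s and the single $Q_{\beta_m\alpha_m}$, while if $T_{N-1}$ is absent the same reasoning yields the all-$S$ product. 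The main obstacle, and the step I would spell out most carefully, is the ``at most once'' structural fact together with the argument that the off-diagonal branch is killed; everything else is a direct reading of the entries of $\mathbf{R}_{\beta\alpha}$.
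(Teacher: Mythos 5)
Your argument is correct, but it is organized genuinely differently from the paper's. The paper works inside the $N\times N$ block $\mathbf{A}_{\sigma}^{\mathcal{M}_N}$: it first proves Lemma~\ref{715pm310} (by induction on $N$, via the splitting (\ref{333pm322})), which gives the whole last column of the descending product $\mathbf{w}_{N-1}=\mathbf{T}_{N-1}\cdots\mathbf{T}_{N-m}$, in particular $[\mathbf{w}_{N-1}]_{N,N}=Q_{\beta_m\alpha_m}S_{\beta_{m-1}\alpha_{m-1}}\cdots S_{\beta_1\alpha_1}$; it then notes that every remaining factor $\mathbf{T}_{i_\ell}$, $\ell>m$, has the block form $\mathbf{G}_{N-1}\oplus S_{\beta_\ell\alpha_\ell}$ by (\ref{400pm320}), so the $N$-th row of their product is $\bigl[0~\cdots~0~\prod S\bigr]$ and the $(N,N)$ entry falls out of a single matrix multiplication. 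You instead expand $[\mathbf{A}_{\sigma}]_{\nu,\nu}$ as a sum over paths of intermediate states and kill every path except the constant one, using only: (i) factors of index $\leq N-2$ fix slot $N$; (ii) the $(2,2)$ pair is acted on diagonally with entry $S$; (iii) after the unique $T_{N-1}$ is spent, the off-diagonal branch can never return species $1$ to slot $N$. This bypasses Lemma~\ref{715pm310} and its induction entirely, and it uses only the fact that $T_{N-1}$ occurs at most once in the word, not the finer fact (which the paper's appeal to Lemma~\ref{715pm310} requires) that this occurrence is the leading letter of the descending run $w_{N-1}$. Accordingly your argument proves slightly more: the product formula $[\mathbf{A}_{\sigma}]_{\nu,\nu}=\prod_{\ell}[\mathbf{T}_{i_\ell}]_{\nu,\nu}$ holds for any word for $\sigma$ containing at most one $T_{N-1}$. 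What the paper's route buys is economy within the paper: Lemma~\ref{715pm310} is needed anyway for Theorem~\ref{420am36} and for Lemma~\ref{213am106}, so the diagonal entry comes almost for free once it is available.

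One small slip you should fix when writing this up: with your right-to-left reading, the input state $\nu$ is a \emph{column} index, so when $\mathbf{T}_{N-1}$ hits the local pair $(2,1)$ you must enumerate the nonzero entries of the column $21$ of $\mathbf{R}_{\beta\alpha}$ in (\ref{625pm72443}); these are the diagonal entry $Q_{\beta_m\alpha_m}$ and the off-diagonal entry $pT_{\beta_m\alpha_m}$ in row $12$. The entry $qT_{\beta_m\alpha_m}$ that you cite sits in row $21$, column $12$, which is the opposite convention. This does not affect your conclusion --- the off-diagonal branch contributes zero no matter what its coefficient is --- but the correct coefficient is $pT_{\beta_m\alpha_m}$, consistent with the $i=N-1$ entry in Lemma~\ref{715pm310}.
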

Recall that $T_{N-1}$ appears at most once in $\sigma = w_1\cdots w_{N-1}$ in Theorem \ref{700pm410}. In~(\ref{6166pm104}), the appearance of $Q_{\beta\alpha}$ depends on the existence of $T_{N-1}$ in $\sigma$.
\begin{Remark}
We will provide the formula for $[\mathbf{A}_{\sigma}]_{\nu, \nu}$ with $\nu=12\cdots2,$ $1\cdots 12,$ $21\cdots 1$ in Section~\ref{300pm106}. Their formulas are similar to~(\ref{6166pm104}). The formulas of $[\mathbf{A}_{\sigma}]_{\nu, \nu}$ for some $\nu$ in the two-species TASEP are found in \cite[Lemma~2.2]{Lee-2017} and \cite[Lemma~3.3]{Lee-2018}. Moreover, it was shown that the transition probabilities of the multi-species TASEP are expressed as a determinant when the initial order and the order at time $t$ are the same (see \cite[Theorem~3.1]{Lee-2020}). In case of the totally asymmetric model, the matrices $\mathbf{T}$ in (\ref{932pm35}) are upper-triangular, so $[\mathbf{A}_{\sigma}]_{\nu,\nu}$ is expressed as a product of some factors over all inversions in~$\sigma$ for any~$\nu$.
\end{Remark}
If $\pi \neq \nu$, it is possible that $ [\mathbf{A}_{\sigma}]_{\pi, \nu} = 0$. Let $\pi^{(i)}$ be the permutation on the multi-set $[1,2,\dots,2]$ with $\pi(i) = 1$, for example, $\pi^{(1)}= 12\cdots 2$ and $\pi^{(N)}= 2\cdots 21=\nu.$

\begin{Theorem}\label{302pm106}
Let $\sigma= w_1\cdots w_{N-1}$ be expressed as in Theorem~{\rm \ref{700pm410}}. If $l$ is the largest integer such that $w_l = 1$, then $ [\mathbf{A}_{\sigma}]_{\pi^{(i)}, \nu} = 0$ for all $i=1,\dots,l$ and $ [\mathbf{A}_{\sigma}]_{\pi^{(i)}, \nu} \neq 0$ for all $i=l+1,\dots, N$. Moreover, if $w_i\neq 1$ for all $i$ in $\sigma= w_1\cdots w_{N-1}$, then $ [\mathbf{A}_{\sigma}]_{\pi^{(i)}, \nu} \neq 0$ for all $i$.
\end{Theorem}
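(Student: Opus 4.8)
The plan is to exploit the fact that, since the underlying multi-set is $[1,2,\dots,2]$, any configuration with this content is completely determined by the slot occupied by the unique species-$1$ particle; thus $\pi^{(i)}$ is encoded by the single number $i$, and $\nu$ by $N$. I would first record how $\mathbf{R}_{\beta\alpha}$ moves that particle. Reading off \eqref{625pm72443}, on a pair of slots carrying two $2$'s the column is reproduced with weight $S_{\beta\alpha}$; on a pair $21$ (the $1$ in the right slot) the image is $Q_{\beta\alpha}$ times $21$ plus $pT_{\beta\alpha}$ times $12$; and on a pair $12$ (the $1$ in the left slot) the image is $P_{\beta\alpha}$ times $12$ plus $qT_{\beta\alpha}$ times $21$. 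In words: a transposition $\mathbf{T}_s$ acting on slots $(s,s+1)$ either fixes the $1$ (weights $S,P,Q$) or slides it by one slot (weights $pT,qT$), and does nothing when the $1$ lies outside $\{s,s+1\}$. Consequently $[\mathbf{A}_\sigma]_{\pi^{(i)},\nu}$ is the sum, over all trajectories of the $1$ from slot $N$ to slot $i$ consistent with the ordered list of simple transpositions coming from $\sigma=w_1\cdots w_{N-1}$, of the corresponding products of weights.

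Next I would fix the order in which the factors act. Expanding $\sigma=w_1\cdots w_{N-1}$ into simple transpositions and applying them rightmost-first to the column indexed by $\nu$ (the order in which they build up $\mathbf{A}_\sigma$), the blocks are processed as $w_{N-1},w_{N-2},\dots,w_1$, and inside a nontrivial block $w_s=T_sT_{s-1}\cdots T_m$ the factors act in increasing index $T_m,\dots,T_{s-1},T_s$, so the top factor $T_s$ (the one on slots $(s,s+1)$) acts last. The heart of the argument is the invariant, proved by induction on this order: just before $w_s$ is processed, the $1$ occupies some slot $\ge s+1$. It holds at the start (the $1$ is in slot $N=(N-1)+1$); during $w_s$ the non-top factors $T_m,\dots,T_{s-1}$ act only on slots $\le s$, hence cannot touch a $1$ sitting in a slot $\ge s+1$, while the top factor $T_s$ touches the $1$ only if it sits in slot $s+1$, in which case it either keeps it there (weight $Q_{\beta\alpha}$) or slides it to slot $s$ (weight $pT_{\beta\alpha}$). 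Two things follow at once: the $1$ moves at most one slot per block, always to the left or not at all (it can never move right, since it is never in the left slot of a firing transposition), so its trajectory is monotone non-increasing; and after $w_s$ the $1$ lies in a slot $\ge s$, which re-establishes the invariant for $w_{s-1}$.

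From the invariant I would read off both claims. For the vanishing part, let $l$ be the largest index with $w_l=1$. Before $w_l$ the $1$ is in a slot $\ge l+1$, and since $w_l=1$ contributes nothing it is still in a slot $\ge l+1$ afterwards; every remaining block $w_{l-1},\dots,w_1$ acts only on slots $\le l$ and therefore cannot touch it. Hence the final slot of the $1$ is $\ge l+1$, so $[\mathbf{A}_\sigma]_{\pi^{(i)},\nu}=0$ for $1\le i\le l$. For the non-vanishing part, fix $j\in\{l+1,\dots,N\}$. Since $l$ is the largest index with a trivial block, every block $w_{N-1},\dots,w_j$ is nontrivial, so each top factor $T_{N-1},\dots,T_j$ is present; choosing the sliding option $pT_{\beta\alpha}$ at each of $T_{N-1},\dots,T_j$ carries the $1$ monotonically from slot $N$ down to slot $j$, and choosing the fixing option $Q_{\beta\alpha}$ at the top $T_{j-1}$ (when $w_{j-1}\neq 1$) leaves it at slot $j$, after which no later factor touches it. Monotonicity forces this to be the \emph{unique} trajectory ending at slot $j$: the first time the $1$ declines to slide it is frozen, so the exit slot is pinned down by where the sliding stops. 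Hence there is no cancellation and $[\mathbf{A}_\sigma]_{\pi^{(i)},\nu}$ equals a single product of the nonzero rational weights $S_{\beta\alpha},pT_{\beta\alpha},Q_{\beta\alpha}$, which is nonzero. The \emph{moreover} statement is the special case in which no $w_i$ equals $1$: then the descent is never blocked and the $1$ can reach every slot $1,\dots,N$.

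The main obstacle is precisely the bookkeeping in the invariant: one must pin down the exact order in which the simple transpositions act, coming from the (non-unique) expansion of the canonical factorization $\sigma=w_1\cdots w_{N-1}$ of Theorem~\ref{700pm410}, and then verify the ``only the top factor can disturb the tagged $1$'' property within each block. Once the ordering and that property are secured, monotonicity of the trajectory and the uniqueness of the path to each reachable slot make both the vanishing and the non-vanishing (no-cancellation) conclusions immediate.
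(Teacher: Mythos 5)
Your proposal is correct, but it takes a genuinely different route from the paper's own proof of this theorem. The paper argues entirely at the level of matrix block structure: it notes that each $\mathbf{w}_m$ has the form $\mathbf{G}_{m+1}\oplus\mathbf{D}_{N-m-1}$ with $\mathbf{D}$ diagonal, derives from Lemma~\ref{715pm310} the two-entry zero pattern of the $(m+1)^{\rm st}$ column of $\mathbf{w}_m$ (Lemma~\ref{213am106}), and then proves by induction on the number of nontrivial blocks that the last column of $\mathbf{w}_{l+1}\cdots\mathbf{w}_{N-1}$ vanishes in rows $1,\dots,l$ and is nonzero in rows $l+1,\dots,N$ (Lemma~\ref{304pm105}); the theorem follows by multiplying on the left by $\mathbf{w}_1\cdots\mathbf{w}_{l-1}$, whose relevant rows are diagonal. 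Your proof replaces this linear-algebra bookkeeping by a path expansion that tracks the unique species-$1$ particle through the ordered list of transpositions, with the invariant ``before block $w_s$ the $1$ sits in a slot $\geq s+1$,'' monotonicity of its trajectory, and uniqueness of the trajectory reaching each exit slot. The two arguments encode the same combinatorial facts (your invariant is exactly the trajectory-language version of the zero pattern in Lemma~\ref{304pm105}), but your version buys something extra: the uniqueness-of-trajectory step shows there is a \emph{single} contributing product, so you get non-cancellation and, in effect, the explicit formulas of Theorem~\ref{312am109} and Proposition~\ref{459pm106} for free, whereas the paper needs a separate induction (Theorem~\ref{427pm106}) for those. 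It is worth noting that your viewpoint is essentially the one the paper itself develops later as an ``alternate approach'' in Appendix~\ref{appendix-A}, where the operators $T^*_i$ and $\hat{T}_i$ and Corollary~\ref{235am1011} formalize precisely this tagged-particle picture; conversely, the paper's block lemmas have the advantage of being reused elsewhere (e.g., in the proofs of Theorems~\ref{420am36} and~\ref{1013pm104}). One shared caveat, not a gap relative to the paper: both arguments implicitly use that $S_{\beta\alpha}$, $Q_{\beta\alpha}$ and $pT_{\beta\alpha}$ are nonzero rational functions, which requires the generic asymmetric setting ($p\neq 0$).
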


According to Theorem \ref{302pm106}, if $w_{N-1} = 1$, then $[\mathbf{A}_{\sigma}]_{\pi, 2\cdots 21} = 0$ for all $\pi \neq 2\cdots 21$. Now, we give the most general formula of $ [\mathbf{A}_{\sigma}]_{\pi^{(i)}, \nu}$ with $\nu=2\cdots 21$. First, we introduce some notations. In the expression $\sigma= w_1\cdots w_{N-1}$ given by Theorem \ref{700pm410}, each $w_i \in \Sigma_i$ is 1 or a product of simple transpositions, that is,
\begin{gather*}
w_i = T_i(\beta_i,\alpha_i)T_{i-1}(\beta_{i-1},\alpha_{i-1})\cdots T_l(\beta_l,\alpha_l)
\end{gather*}
for some $l$. We write
\begin{gather*}
\mathbf{w}_i = \mathbf{T}_i(\beta_i,\alpha_i)\mathbf{T}_{i-1}(\beta_{i-1},\alpha_{i-1})\cdots \mathbf{T}_l(\beta_l,\alpha_l)
\end{gather*}
for the matrix corresponding to $w_i$. We define
\begin{gather*}
\prod_{i = l}^m A_i =1 \quad \text{if~$l>m$}.
\end{gather*}
\begin{Theorem}\label{312am109}
Let $\sigma= w_1\cdots w_{N-1}$ be expressed as in Theorem~{\rm \ref{700pm410}}. Then,
\begin{gather}\label{226am109}
[\mathbf{A}_{\sigma}]_{\pi^{(i)}, \nu} =\prod_{k=1}^{i-1} [\mathbf{w}_k ]_{\pi^{(i)},\pi^{(i)}}\prod_{k=i}^{N-1}[\mathbf{w}_k]_{\pi^{(k)},\pi^{(k+1)}}.
\end{gather}
\end{Theorem}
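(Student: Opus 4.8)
The plan is to prove the product formula \eqref{226am109} by carefully tracking how the matrix $\mathbf{A}_\sigma = \mathbf{w}_1 \mathbf{w}_2 \cdots \mathbf{w}_{N-1}$ acts on the column indexed by $\nu = 2\cdots21$, reading off the row indexed by $\pi^{(i)}$. The central observation I would exploit is that the initial order $\nu = 2\cdots 21$ has exactly one $1$, sitting at the last slot, and that $\pi^{(i)}$ also has exactly one $1$, sitting at slot $i$. Since every $\mathbf{T}_k(\beta,\alpha)$ is built from $\mathbf{R}_{\beta\alpha}$, whose nonzero entries according to \eqref{625pm72443} either fix a pair of adjacent species or transpose them, the product $\mathbf{A}_\sigma$ applied to $\nu$ can be understood as summing over all ways of ``routing'' the single $1$ from slot $N$ to slot $i$ through the successive blocks $\mathbf{w}_{N-1}, \dots, \mathbf{w}_1$ (recall matrices act right-to-left, so $\mathbf{w}_{N-1}$ acts first on the column $\nu$).

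First I would set up the key structural fact about the factors $w_k \in \Sigma_k$: each $w_k$ is either $1$ or of the form $T_k T_{k-1}\cdots T_l$, a descending run of adjacent transpositions whose effect on a permutation is to cyclically shift one entry leftward across positions $l$ through $k+1$. Next, I would analyze the action of a single block $\mathbf{w}_k$ on states of the form $\pi^{(s)}$, showing that $[\mathbf{w}_k]_{\pi^{(s)},\pi^{(t)}}$ is nonzero only for specific pairs $(s,t)$ dictated by where the single $1$ can move under the run $T_k\cdots T_l$; this is where the entries $P_{\beta\alpha}$, $Q_{\beta\alpha}$, $pT_{\beta\alpha}$, $qT_{\beta\alpha}$ from \eqref{625pm72443} enter. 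The crucial claim, which I would establish by induction on the number of blocks applied, is that when we compute the full product $[\mathbf{w}_1\cdots\mathbf{w}_{N-1}]_{\pi^{(i)},\nu}$ the only surviving ``path'' of intermediate states is the telescoping one: blocks $\mathbf{w}_{N-1},\dots,\mathbf{w}_i$ successively carry the state from $\pi^{(N)}=\nu$ up to $\pi^{(i)}$ via $[\mathbf{w}_k]_{\pi^{(k)},\pi^{(k+1)}}$, after which blocks $\mathbf{w}_{i-1},\dots,\mathbf{w}_1$ must each fix the state $\pi^{(i)}$, contributing the diagonal factors $[\mathbf{w}_k]_{\pi^{(i)},\pi^{(i)}}$.

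The reason the path collapses to a single surviving term is the decisive structural point, and it is where I would spend the most care: because $w_k \in \Sigma_k$ only involves transpositions $T_l,\dots,T_k$ acting on slots $\le k+1$, the block $\mathbf{w}_k$ can move the distinguished $1$ only into slots indexed no larger than $k+1$, and once the $1$ has been placed at slot $i$ by the block $\mathbf{w}_i$ it cannot be moved again by any later-applied (i.e., smaller-indexed) block $\mathbf{w}_{i-1},\dots,\mathbf{w}_1$ without leaving the span of the $\pi^{(s)}$ states in a way that fails to return to $\pi^{(i)}$ with a nonzero coefficient. I would make this precise by tracking the position of the $1$ as an invariant and showing the off-diagonal contributions of the trailing blocks vanish against the fixed target row $\pi^{(i)}$.

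The main obstacle I anticipate is rigorously justifying that no \emph{other} routing of the $1$ contributes, i.e., that the naive matrix-product expansion, which a priori is a large sum over intermediate states, genuinely telescopes to the single product in \eqref{226am109}. This requires a clean lemma describing exactly which entries $[\mathbf{w}_k]_{\pi^{(s)},\pi^{(t)}}$ are nonzero and verifying, block by block, that the constraints imposed by the descending-run structure of $\Sigma_k$ (together with the fact that $\mathbf{w}_k$ acts as the identity on slots above $k+1$) force $t = s$ for the trailing blocks and force the compatible chain $\pi^{(k)}\!\leftarrow\!\pi^{(k+1)}$ for the leading blocks. Establishing this nonvanishing pattern carefully — rather than the subsequent bookkeeping, which is routine once the pattern is known — is the heart of the argument.
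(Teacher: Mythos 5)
Your proposal is correct and follows essentially the same route as the paper: the ``clean lemma'' you call for on which entries $[\mathbf{w}_k]_{\pi^{(s)},\pi^{(t)}}$ can be nonzero is precisely the paper's Lemma~\ref{213am106} (a consequence of Lemma~\ref{715pm310}), which together with the block-diagonal form $\mathbf{G}_{k+1}\oplus\mathbf{D}_{N-k-1}$ of each $\mathbf{w}_k$ forces the matrix-product expansion to collapse onto the single telescoping path you describe. The paper packages this collapse as an induction on $N$ (Theorem~\ref{427pm106} and Corollary~\ref{11301pm1010}) followed by a splitting at identity factors $w_l=1$ via Theorem~\ref{302pm106}, but the underlying mechanism is identical to your path-tracking argument.
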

Theorem \ref{312am109} includes the results in Theorems~\ref{1013pm104} and~\ref{302pm106}. This will be shown in Section~\ref{329am109}.
The following results implement the formula~(\ref{226am109}).
\begin{Proposition}\label{459pm106}
Let $\sigma=w_1\cdots w_{N-1}$ be expressed as in Theorem~{\rm \ref{700pm410}} and let $\mathbf{w}_k$
be the matrix corresponding to $w_k={T}_{k_1}(\beta_{k_1},\alpha_{k_1})\cdots {T}_{k_l}(\beta_{k_l},\alpha_{k_l})$. $($Hence, $k_1 = k$, $k_2 = k-1$ and so on.$)$ Then,
 \begin{gather}\label{407am106}
 [\mathbf{w}_k]_{\pi^{(i)},\pi^{(i)}} =
 \begin{cases}
 Q_{\beta_{k_1}\alpha_{k_1}}S_{\beta_{k_2}\alpha_{k_2}}\cdots S_{\beta_{k_l}\alpha_{k_l}} &\text{if $i= k+1$},\\
 S_{\beta_{k_1}\alpha_{k_1}}\cdots S_{\beta_{k_l}\alpha_{k_l}} &\text{if $i> k+1$}
 \end{cases}
 \end{gather}
and
 \begin{gather}\label{1234am108}
 [\mathbf{w}_k]_{\pi^{(k)},\pi^{(k+1)}} =pT_{\beta_{k_1}\alpha_{k_1}}S_{\beta_{k_2}\alpha_{k_2}}\cdots S_{\beta_{k_l}\alpha_{k_l}}.
 \end{gather}
\end{Proposition}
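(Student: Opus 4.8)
The plan is to reduce everything to the action of a single transposition matrix $\mathbf{T}_i(\beta,\alpha)$ on the two slots $i,i+1$, and then to track the unique $1$ of $\pi^{(i)}$ as it passes through the descending chain $\mathbf{w}_k=\mathbf{T}_k\mathbf{T}_{k-1}\cdots\mathbf{T}_{k-l+1}$. First I would record, directly from \eqref{625pm72443}, the action of $\mathbf{R}_{\beta\alpha}$ on the only two-slot configurations that can occur for a permutation of the multi-set $[1,2,\dots,2]$, namely $(2,2)$, $(1,2)$ and $(2,1)$. Reading off the relevant entries gives $(2,2)\mapsto S_{\beta\alpha}(2,2)$, $(1,2)\mapsto P_{\beta\alpha}(1,2)+qT_{\beta\alpha}(2,1)$ and $(2,1)\mapsto Q_{\beta\alpha}(2,1)+pT_{\beta\alpha}(1,2)$. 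In particular, the diagonal entry at $(2,2)$ is $S_{\beta\alpha}$, the diagonal entry at $(2,1)$ is $Q_{\beta\alpha}$, and the entry sending $(2,1)$ to $(1,2)$ is $pT_{\beta\alpha}$. These three facts are the only inputs from the $R$-matrix that the argument uses.

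Next I would exploit that $\mathbf{w}_k$ touches only slots $k-l+1,\dots,k+1$ and that, in the product, $\mathbf{T}_{k-l+1}=\mathbf{T}_{k_l}$ is applied first and $\mathbf{T}_k=\mathbf{T}_{k_1}$ last. For $[\mathbf{w}_k]_{\pi^{(i)},\pi^{(i)}}$ with $i>k+1$, the single $1$ lies strictly above every affected slot, so each factor acts on a pair of $2$'s along the unique surviving path and contributes $S$, giving $S_{\beta_{k_1}\alpha_{k_1}}\cdots S_{\beta_{k_l}\alpha_{k_l}}$. For $i=k+1$, the $1$ sits at slot $k+1$ and is untouched by $\mathbf{T}_{k-l+1},\dots,\mathbf{T}_{k-1}$, which therefore each act on a pair of $2$'s and yield $S$; only the final $\mathbf{T}_k$ sees the configuration $(2,1)$ on slots $(k,k+1)$, and returning to $\pi^{(k+1)}$ forces its diagonal entry $Q_{\beta_{k_1}\alpha_{k_1}}$, which proves \eqref{407am106}. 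The same bookkeeping, but with the final step now sending $(2,1)\mapsto(1,2)$, i.e.\ moving the $1$ from slot $k+1$ to slot $k$, replaces $Q$ by $pT$ and yields \eqref{1234am108}.

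The only point that needs care is that each of these matrix elements is a single monomial rather than a sum over intermediate states. This is where I would argue uniqueness of the path: since the lone $1$ starts at a slot that only the last factor $\mathbf{T}_k$ can reach, none of the earlier transpositions can relocate it; moreover on an all-$2$ block the only nonzero $R$-matrix entry is the diagonal $S$, so at every intermediate stage the configuration is completely determined. Hence there is exactly one nonzero sequence of intermediate states, and the matrix element equals the product of the corresponding $R$-matrix entries in the stated order. I expect this uniqueness/no-sum observation to be the main (though mild) obstacle, the remainder being a direct read-off from the table assembled in the first step.
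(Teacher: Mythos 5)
Your proof is correct and is essentially the paper's own argument: the paper's one-line proof simply recalls the block form \eqref{400pm320} of $\mathbf{T}_l^{\mathcal{M}_N}$ (diagonal $S_{\beta\alpha}$ entries except for a $2\times 2$ block on slots $l,l+1$) and performs the matrix multiplication in $\mathbf{w}_k=\mathbf{T}_{k}(\beta_{k_1},\alpha_{k_1})\cdots\mathbf{T}_{k-l+1}(\beta_{k_l},\alpha_{k_l})$ directly, which is precisely your path-tracking computation. The only cosmetic differences are that you re-derive the needed two-site entries from the definition \eqref{625pm72443} of $\mathbf{R}_{\beta\alpha}$ rather than citing \eqref{400pm320}, and that you spell out explicitly the forced uniqueness of the intermediate states (the lone $1$ can only be moved by the last factor, and an all-$2$ pair admits only the diagonal $S$ transition), a point the paper leaves implicit in the phrase ``directly performing matrix multiplication.''
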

\begin{Remark}
The formulas for $\nu = 12\cdots2, 21\cdots 1$ and $1\cdots 12$ corresponding to (\ref{226am109}), (\ref{407am106}) and (\ref{1234am108}) can be obtained in a very similar way by using the techniques used for the case $\nu= 2\cdots 21$, so we omit them in this paper.
\end{Remark}
\begin{Remark}
The technique in \cite{Lee-2020} provided a method to find the explicit formulas of the transition probabilities but $[\mathbf{A}_{\sigma}]_{\pi,\nu}$ could be a sum of multiple terms in general, and it was not sure that it could be simplified further or not. One advantage of using the method of expressing permutations in this paper is that we can directly obtain the simplified forms of $[\mathbf{A}_{\sigma}]_{\pi,\nu}$ without performing factorization. This will improve the computational efficiency to find the transition probabilities.
\end{Remark}

\subsection{Organization of the paper}
This paper is organized as follows. In Section~\ref{main}, we provide special ways of expressing $\sigma$ as a product of simple transpositions for simplified forms of $[\mathbf{A}_{\sigma}]_{\pi, \nu}$ for $\nu = 2{\cdots}21, 12{\cdots}2, 21{\cdots}1, 1{\cdots}12$. The proof of Theorem~\ref{420am36} is given in Section \ref{755pm44}. In Section~\ref{300pm106}, we prove Theorem~\ref{1013pm104} and provide the formulas of $[\mathbf{A}_{\sigma}]_{\nu, \nu}$ for $\nu = 12\cdots 2, 21\cdots 1, 1\cdots 12$ in Propositions \ref{323am109},~\ref{1122pm104}, and~\ref{101149pm104}. The proof of Theorem~\ref{302pm106} is given in Section~\ref{326am109}, the proofs of Theorem~\ref{312am109} and Proposition~\ref{459pm106} are given in Section~\ref{329am109}. In Appendix~\ref{appendix-A}, we introduce an alternate approach to find $[\mathbf{A}_{\sigma}]_{\pi, \nu}$. In particular, the method introduced in Example~\ref{317pm127} is expected to be useful to write a computer code to find $[\mathbf{A}_{\sigma}]_{\pi, \nu}$. In Appendices~\ref{715pm44} and~\ref{7155pm44}, we provide all transition probabilities for $N=4$ when initial orders are $2221$ and ${1112}$ so that one can see what probability distribution can be obtained in closed forms.

\section[How to express sigma for simplified \protect{[A\_\{sigma\}]\_\{pi,nu\}}]{How to express $\boldsymbol{\sigma}$ for simplified $\boldsymbol{[\mathbf{A}_{\sigma}]_{\pi,\nu}}$}\label{main}

\subsection{Rearrangement of the columns and the rows}\label{454pm321}
In \cite{Lee-2020}, the columns and the rows of the $N^N \times N^N$ matrices are labelled by $1\cdots 1, \dots, N\cdots N$ in the lexicographical order. But, if we rearrange the columns and the rows by grouping their labels from the same multi-set $\mathcal{M}$, then $\mathbf{P}_Y(X;t)$, $\mathbf{A}_{\sigma}$ and $\mathbf{T}_i(\beta,\alpha)$ become block-diagonal. In each group, we list the columns and the rows in the lexicographical order. If we follow this procedure of rearranging the columns and the rows, then, for example, the form of the matrices~$\mathbf{P}_Y(X;t)$,~$\mathbf{A}_{\sigma}$ and $\mathbf{T}_i(\beta,\alpha)$ with $N=3$ looks as in Figure~\ref{fig:M2}. We denote the block corresponding to a multi-set~$\mathcal{M}$ in the $N^N \times N^N$ matrix~$\mathbf{A}$ by $\mathbf{A}^{\mathcal{M}}$. In other words, $\mathbf{A}^{\mathcal{M}}$~is a~sub-matrix of $\mathbf{A}$ obtained by taking the columns and the rows whose labels are the permutations of the multi-set~$\mathcal{M}$. Although the columns and the rows of~$\mathbf{A}^{\mathcal{M}}$ are labelled by permutations~$\pi$,~$\nu$ of $\mathcal{M}$ so that~$\big[\mathbf{A}^{\mathcal{M}}\big]_{\pi,\nu}$ is the $(\pi,\nu)^{\rm th}$ element of~$\mathbf{A}^{\mathcal{M}}$, we will sometimes write $\big[\mathbf{A}^{\mathcal{M}}\big]_{i,j}$ where $i,j = 1, \dots, L$ for the $(i,j)^{\rm th}$ element in the usual notation for matrix elements when necessary if there is not confusion. Here, $L$ is the total number of permutations of~$\mathcal{M}$.
 \begin{figure}[t]
 \centering \begin{tikzpicture}[
 >=latex,
 line width=1pt,
 Brace/.style={
 decorate,
 decoration={
 brace,
 raise=-7pt
 }
 },
 Matrix/.style={
 matrix of nodes,
 text height=2.5ex,
 text depth=0.75ex,
 text width=3.4ex,
 align=center,
 column sep=2pt,
 row sep=2pt,
 nodes in empty cells,
 },
 DA/.style={
 fill,
 opacity=0.7,
 inner sep=-2.0pt,
 line width=1pt,
 },
 DL/.style={
 left delimiter=[,
 right delimiter=],
 inner sep=-3pt,
 }
 ]

 \matrix[Matrix] at (0,0) (M){
 & 111 & 222 & 333 & $\cdots$ & 122 & 212 & 221& $\cdots$& 321\\
 111 & & & & & & & & &\\
 222 & & & & & & & & &\\
 333 & & & & & & & & &\\
 \vdots & & & & $\ddots$ & & & & &\\
 122 & & & & & & & & &\\
 212 & & & & & & & & &\\
 221 & & & & & & & & &\\
 \vdots & & & & & & & & $\ddots$&\\
 321 & & & & & & & & &\\
 };

 \begin{scope}[on background layer]
 \node[DL,fit=(M-2-2)(M-10-10)](subM-1){};
 \node[DA,gray,fit=(M-2-2)(M-2-2)](subM-2){};
 \node[DA,gray,fit=(M-3-3)(M-3-3)](subM-2){};
 \node[DA,gray,fit=(M-4-4)(M-4-4)](subM-2){};
 \node[DA,gray,fit=(M-6-6)(M-8-8)](subM-3){};

 \end{scope}

 \end{tikzpicture}
 \caption{The form of the matrix after rearranging the columns and the rows.} \label{fig:M2}
\end{figure}

\subsection{Proof of Theorem \ref{420am36}}\label{755pm44}
An expression of $\sigma$ obtained by Theorem~\ref{700pm410} is a reduced expression of $\sigma$, that is, no other expression of $\sigma$ is shorter than that. Rewriting each $w_i$ in $\sigma = w_1w_2\cdots w_{N-1}$ explicitly by simple transpositions, we obtain an expression of $\sigma$ in terms of simple transpositions
\begin{gather}\label{713pm321}
\sigma =w_1w_2\cdots w_{N-1} = T_{i_j}(\beta_j,\alpha_j)\cdots T_{i_2}(\beta_2,\alpha_2) T_{i_1}(\beta_1,\alpha_1)
\end{gather}
and
\begin{gather*}
\{(\beta_1,\alpha_1), (\beta_2,\alpha_2), \dots, (\beta_j,\alpha_j)\}
\end{gather*}
is the set of all inversions of $\sigma$. Using the notation introduced in Section~\ref{454pm321}, we note that
\begin{gather}\label{400pm320}
{\mathbf{T}}_{l}^{\mathcal{M}_N}(\beta,\alpha) = (S_{\beta\alpha}\mathbf{I}_{l-1} ) \oplus \left[
 \begin{matrix}
 P_{\beta\alpha} & pT_{\beta\alpha} \\
 qT_{\beta\alpha} & Q_{\beta\alpha}
 \end{matrix}
\right] \oplus (S_{\beta\alpha}\mathbf{I}_{N-l-1}),\qquad l=1,\dots, N-1,
\end{gather}
when $\mathcal{M}_N = [1,\underbrace{2,\dots,2}_{N-1}]$. The form of ${\mathbf{T}}_{l}^{\mathcal{M}_N}(\beta,\alpha)$ is as in Figure~\ref{fig:M11}.
\begin{figure}[t]
\centering
 \begin{tikzpicture}[
 >=latex,
 line width=1pt,
 Brace/.style={
 decorate,
 decoration={
 brace,
 raise=-7pt
 }
 },
 Matrix/.style={
 matrix of nodes,
 text height=2.05ex,
 text depth=0.75ex,
 text width=3ex,
 align=center,
 column sep=2pt,
 row sep=2pt,
 nodes in empty cells,
 },
 BOX/.style={
 fill,
 opacity=1,
 inner sep=-1.0pt,
 line width=0.8pt,
 draw,
 fill opacity=0,
 },
 DEL/.style={
 left delimiter=[,
 right delimiter=],
 inner sep=-3pt,
 }
 ]

 \matrix[Matrix] at (0,0) (M2){
 & 1 & $\cdots$ & $\cdots$ & \scriptsize$l$ & \scriptsize$l+1$ & $\cdots$ & $\cdots$&$N$\\
 1 & * & & & & & & & &\\
 \vdots & & $\ddots$ & & & & & & &\\
 \vdots & & & * & & & & & &\\
 \footnotesize$l$& & & & * & * & & & &\\
 \scriptsize$l+1$ & & & & *& *& & & &\\
 \vdots & & & & & & * & & &\\
 \vdots & & & & & & &$\ddots$ & &\\
 $N$ & & & & & & & &*&\\
 };

 \begin{scope}[on background layer]
 \node[DEL,fit=(M2-2-2)(M2-9-9)](subM-1){};
 \node[BOX,black,fit=(M2-5-5)(M2-6-6)](subM-2){};

 \end{scope}

 \end{tikzpicture}
 \caption{The form of the matrix $\mathbf{T}_l^{\mathcal{M}_N}$.} \label{fig:M11}
 \end{figure}
 Since we are interested in the last column of $\mathbf{A}_{\sigma}^{\mathcal{M}_N} = (\mathbf{w}_1\cdots \mathbf{w}_{N-1})^{\mathcal{M}_N}$, we investigate the last column of $(\mathbf{w}_{N-1})^{\mathcal{M}_N}$.

\begin{Lemma}\label{715pm310}
For $N \geq 2$ and $l=1,\dots, N-1$,
\begin{gather*}
\big[\mathbf{T}_{N-1}^{\mathcal{M}_N}(\beta_{N-1},\alpha_{N-1})\cdots \mathbf{T}_{l}^{\mathcal{M}_N}(\beta_{l},\alpha_{l}) \big]_{i,N} \\
\qquad{} =
\begin{cases}
0 & \text{if $i = 1,\dots, N-2$},\\
pT_{\beta_{N-1}\alpha_{N-1}}S_{\beta_{N-2}\alpha_{N-2}}\cdots S_{\beta_{l}\alpha_{l}}& \text{if $i = N-1$},\\
Q_{\beta_{N-1}\alpha_{N-1}}S_{\beta_{N-2}\alpha_{N-2}}\cdots S_{\beta_{l}\alpha_{l}}& \text{if $i = N$}.
\end{cases}
\end{gather*}
\end{Lemma}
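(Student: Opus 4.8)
The plan is to read off the last column of the product directly from the block form \eqref{400pm320}, applying the factors one at a time from the right to the standard basis vector $e_N$, which is the column indexed by $\nu = 2\cdots 21 = \pi^{(N)}$. Throughout I identify the $N$-dimensional block $\mathbf{A}^{\mathcal{M}_N}$ with the space spanned by $e_1,\dots,e_N$ via the lexicographic labelling $\pi^{(i)}\leftrightarrow e_i$, so that the $N$-th column of the product $\mathbf{T}_{N-1}^{\mathcal{M}_N}\cdots \mathbf{T}_{l}^{\mathcal{M}_N}$ is the vector obtained by applying that product to $e_N$, and its $i$-th coordinate is exactly the claimed entry.

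The key observation is a placement fact read off \eqref{400pm320}: the matrix $\mathbf{T}_k^{\mathcal{M}_N}(\beta_k,\alpha_k)$ acts as the scalar $S_{\beta_k\alpha_k}$ on every index except the pair $k,k+1$, where it carries the $2\times 2$ block. Hence whenever the index $N$ lies strictly below this pair, i.e.\ $N>k+1$, which holds precisely when $k\le N-2$, the direction $e_N$ is preserved and merely scaled: $\mathbf{T}_k^{\mathcal{M}_N}e_N = S_{\beta_k\alpha_k}\,e_N$. First I would apply this to the factors $\mathbf{T}_l^{\mathcal{M}_N},\dots,\mathbf{T}_{N-2}^{\mathcal{M}_N}$ (all with index $\le N-2$). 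Fixing $l$ and inducting on $m$ from $l$ to $N-2$, with base case $\mathbf{T}_l^{\mathcal{M}_N}e_N=S_{\beta_l\alpha_l}e_N$, gives
\[
\mathbf{T}_{N-2}^{\mathcal{M}_N}\cdots \mathbf{T}_{l}^{\mathcal{M}_N}\,e_N = \Big(\prod_{k=l}^{N-2} S_{\beta_k\alpha_k}\Big)\,e_N ,
\]
where the product is empty, equal to $1$, in the degenerate case $l=N-1$.

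It then remains to apply the final factor $\mathbf{T}_{N-1}^{\mathcal{M}_N}(\beta_{N-1},\alpha_{N-1})$, whose $2\times 2$ block sits exactly on the pair $N-1,N$. Reading its last column off \eqref{400pm320} gives $\mathbf{T}_{N-1}^{\mathcal{M}_N}e_N = pT_{\beta_{N-1}\alpha_{N-1}}\,e_{N-1}+Q_{\beta_{N-1}\alpha_{N-1}}\,e_N$, so that
\[
\mathbf{T}_{N-1}^{\mathcal{M}_N}\cdots \mathbf{T}_{l}^{\mathcal{M}_N}\,e_N = \Big(\prod_{k=l}^{N-2} S_{\beta_k\alpha_k}\Big)\big(pT_{\beta_{N-1}\alpha_{N-1}}\,e_{N-1}+Q_{\beta_{N-1}\alpha_{N-1}}\,e_N\big).
\]
Reading off coordinates yields $0$ for $i\le N-2$, the coefficient $pT_{\beta_{N-1}\alpha_{N-1}}S_{\beta_{N-2}\alpha_{N-2}}\cdots S_{\beta_l\alpha_l}$ for $i=N-1$, and $Q_{\beta_{N-1}\alpha_{N-1}}S_{\beta_{N-2}\alpha_{N-2}}\cdots S_{\beta_l\alpha_l}$ for $i=N$, which is precisely the three-case formula.

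The argument is essentially bookkeeping, so I do not expect a serious obstacle. The only points requiring care are confirming the identification $\pi^{(i)}\leftrightarrow e_i$ (that the lexicographic ordering of the block places the single $1$ from the leftmost to the rightmost slot as $i$ runs from $1$ to $N$), verifying that the moving $2\times 2$ block never reaches index $N$ until the very last factor $\mathbf{T}_{N-1}^{\mathcal{M}_N}$, and correctly handling the degenerate case $l=N-1$ where the $S$-product is empty. These are exactly the places where an off-by-one slip could occur, so I would state the placement fact $N>k+1\iff k\le N-2$ explicitly and keep the final factor separate from the induction.
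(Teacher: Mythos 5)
Your proposal is correct and is essentially the paper's own argument: both rest on the block form \eqref{400pm320}, namely that every factor $\mathbf{T}_k^{\mathcal{M}_N}$ with $k\le N-2$ acts on the last coordinate as the scalar $S_{\beta_k\alpha_k}$ (the paper phrases this via the direct-sum decomposition $\mathbf{T}_k^{\mathcal{M}_N}=\mathbf{T}_k^{\mathcal{M}_{N-1}}\oplus S_{\beta_k\alpha_k}$, you phrase it as $\mathbf{T}_k^{\mathcal{M}_N}e_N=S_{\beta_k\alpha_k}e_N$), after which the single factor $\mathbf{T}_{N-1}^{\mathcal{M}_N}$ produces the $pT$ and $Q$ entries in the last column. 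The difference is purely one of bookkeeping, and your handling of the degenerate case $l=N-1$ and of the placement condition $k\le N-2$ is accurate.
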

\begin{proof}
If $N=2$, then
\begin{gather*}
\mathbf{T}_{1}^{\mathcal{M}_2}(\beta_{1},\alpha_{1}) = \left[
 \begin{matrix}
 P_{\beta_1\alpha_1} & pT_{\beta_1\alpha_1} \\
 qT_{\beta_1\alpha_1} & Q_{\beta_1\alpha_1}
 \end{matrix}
\right].
\end{gather*}
Suppose that $N>2$.
First, we observe that
\begin{gather}\label{333pm322}
\mathbf{T}^{\mathcal{M}_N}_l(\beta,\alpha) = \mathbf{T}^{\mathcal{M}_{N-1}}_{l}(\beta,\alpha)\oplus S_{\beta\alpha} \qquad \text{for $l = 1,\dots, N-2$}
\end{gather}
 and
\begin{gather*}
\mathbf{T}_{N-1}^{\mathcal{M}_N}(\beta,\alpha) = (S_{\beta\alpha}\mathbf{I}_{N-2} ) \oplus \left[
 \begin{matrix}
 P_{\beta\alpha} & pT_{\beta\alpha} \\
 qT_{\beta\alpha} & Q_{\beta\alpha}
 \end{matrix}
\right].
\end{gather*}
Hence,
\begin{gather*}
\mathbf{T}^{\mathcal{M}_N}_{N-1}(\beta_{N-1},\alpha_{N-1})\cdots \mathbf{T}^{\mathcal{M}_N}_l(\beta_{l},\alpha_{l}) \\
\qquad{} = \left( (S_{\beta_{N-1}\alpha_{N-1}}\mathbf{I}_{N-2} ) \oplus \left[
 \begin{matrix}
 P_{\beta_{N-1}\alpha_{N-1}} & pT_{\beta_{N-1}\alpha_{N-1}} \\
 qT_{\beta_{N-1}\alpha_{N-1}} & Q_{\beta_{N-1}\alpha_{N-1}}
 \end{matrix}
\right]\right) \\
\qquad\quad{} \times \big(\mathbf{T}^{\mathcal{M}_{N-1}}_{N-2}(\beta_{N-2},\alpha_{N-2}) \oplus S_{\beta_{N-2}\alpha_{N-2}}\big)\cdots \big(\mathbf{T}^{\mathcal{M}_{N-1}}_{l}(\beta_{l},\alpha_{l}) \oplus S_{\beta_{l}\alpha_{l}}\big) \\
\qquad {} = \left(\big(S_{\beta_{N-1}\alpha_{N-1}}\mathbf{I}_{N-2}\big) \oplus \left[
 \begin{matrix}
 P_{\beta_{N-1}\alpha_{N-1}} & pT_{\beta_{N-1}\alpha_{N-1}} \\
 qT_{\beta_{N-1}\alpha_{N-1}} & Q_{\beta_{N-1}\alpha_{N-1}}
 \end{matrix}
\right]\right) \\
\qquad\quad{} \times \big(\mathbf{T}^{\mathcal{M}_{N-1}}_{N-2}(\beta_{N-2},\alpha_{N-2}) \cdots \mathbf{T}^{\mathcal{M}_{N-1}}_{l}(\beta_{l},\alpha_{l})\big)\oplus \big(S_{\beta_{N-2}\alpha_{N-2}}\cdots S_{\beta_{l}\alpha_{l}}\big),
\end{gather*}
whose the last column is clearly
\begin{gather*}
\Bigg[ \underbrace{0~ \cdots ~ 0}_{N-2}~~ pT_{\beta_{N-1}\alpha_{N-1}}\prod_{i=l}^{N-2}S_{\beta_i\alpha_i} ~~ Q_{\beta_{N-1}\alpha_{N-1}}\prod_{i=l}^{N-2}S_{\beta_i\alpha_i} \Bigg]^{\rm T}.\tag*{\qed}
\end{gather*}\renewcommand{\qed}{}
\end{proof}

\begin{proof}[Proof of Theorem \ref{420am36}]
For all $N \geq 2$, if $\sigma$ is the identity permutation, then it is obvious that $[{\mathbf{A}}_{\sigma}]_{2\cdots 21, 2\cdots 21} = 1 $ and $ [{\mathbf{A}}_{\sigma}]_{\pi, 2\cdots 21} = 0 $ for $\pi \neq 2\cdots 21$ because $\mathbf{A}_{12\cdots N}$ is the identity matrix. Suppose that $\sigma$ is not the identity permutation. We prove by induction on $N$. If $N=2$, then $\mathbf{A}_{21} = \mathbf{R}_{21}$. Hence,
\begin{gather*}
[\mathbf{A}_{21}]_{21,21} = Q_{21}, \qquad
[\mathbf{A}_{21}]_{12,21} = pT_{21}.
\end{gather*}
 Suppose that the statement holds for $N-1$. Recall (\ref{333pm322}). If $w_{N-1} = 1$ in~(\ref{713pm321}), then
 \begin{align*}
\mathbf{A}_{\sigma}^{\mathcal{M}_N} &= \mathbf{T}_{i_j}^{\mathcal{M}_N}(\beta_j,\alpha_j)\cdots \mathbf{T}_{i_2}^{\mathcal{M}_N}(\beta_2,\alpha_2)\mathbf{T}_{i_1}^{\mathcal{M}_N}(\beta_1,\alpha_1) \\
&=\big(\mathbf{T}_{i_j}^{\mathcal{M}_{N-1}}(\beta_j,\alpha_j)\oplus S_{\beta_j\alpha_j}\big)\cdots \big(\mathbf{T}_{i_1}^{\mathcal{M}_{N-1}}(\beta_1,\alpha_1) \oplus S_{\beta_1\alpha_1}\big) \\
& = \big(\mathbf{T}_{i_j}^{\mathcal{M}_{N-1}}(\beta_j,\alpha_j) \cdots \mathbf{T}_{i_1}^{\mathcal{M}_{N-1}}(\beta_1,\alpha_1)\big) \oplus \prod_{i=1}^jS_{\beta_i\alpha_i}.
\end{align*}
Hence, the last column of $\mathbf{A}_{\sigma}^{\mathcal{M}_N}$ is
\begin{gather*}
\bigg[0~ \cdots~ f0~~ \prod_{i=1}^jS_{\beta_i\alpha_i} \bigg]^T.
\end{gather*}
If
\begin{gather*}
w_{N-1} = T_{i_m}(\beta_m,\alpha_m) \cdots T_{i_1}(\beta_1,\alpha_1)
\end{gather*}
in (\ref{713pm321}) so that $ T_{i_m}(\beta_m,\alpha_m) = T_{N-1}(\beta_{m},\alpha_{m})$, then
 \begin{gather*}
\big[\mathbf{A}_{\sigma}^{\mathcal{M}_N}\big]_{\pi, 2\cdots2 1} \\
\quad{}= \sum_{\nu}\big[\mathbf{T}_{i_j}^{\mathcal{M}_N}(\beta_j,\alpha_j)\cdots \mathbf{T}_{i_{m+1}}^{\mathcal{M}_N}(\beta_{m+1},\alpha_{m+1}) \big]_{\pi,\nu} \big[\mathbf{T}_{i_{m}}^{\mathcal{M}_N}(\beta_{m},\alpha_{m})\cdots\mathbf{T}_{i_1}^{\mathcal{M}_N}(\beta_1,\alpha_1)\big]_{\nu, 2\cdots 21}.
\end{gather*}
We have
\begin{gather*}
\mathbf{T}_{i_j}^{\mathcal{M}_N}(\beta_j,\alpha_j)\cdots \mathbf{T}_{i_{m+1}}^{\mathcal{M}_N}(\beta_{m+1},\alpha_{m+1}) \\
\qquad{} =\big(\mathbf{T}_{i_j}^{\mathcal{M}_{N-1}}(\beta_j,\alpha_j)\oplus S_{\beta_j\alpha_j}\big)\cdots \big(\mathbf{T}_{i_{m+1}}^{\mathcal{M}_{N-1}}(\beta_{m+1},\alpha_{m+1}) \oplus S_{\beta_{m+1}\alpha_{m+1}}\big) \\
\qquad{} = \big(\mathbf{T}_{i_j}^{\mathcal{M}_{N-1}}(\beta_j,\alpha_j) \cdots \mathbf{T}_{i_{m+1}}^{\mathcal{M}_{N-1}}(\beta_{m+1},\alpha_{m+1})\big) \oplus \prod_{i=m+1}^{j}S_{\beta_i\alpha_i},
\end{gather*}
whose the $i^{{\rm th}}$ row ($i=1,\dots, N-3$) is in the form of
 \begin{gather*}
\big[~\underbrace{*~\cdots ~*}_{N-2} ~ 0~~0~\big]
\end{gather*}
and the $(N-1)^{{\rm st}}$ and the $(N-2)^{{\rm nd}}$ rows are in the form of
 \begin{gather*}
\Bigg[\underbrace{*~\cdots ~*}_{N-2} ~~ \prod_{i=m+1}^{j}R_{\beta_i\alpha_i}~~0 \Bigg]
\end{gather*}
by the induction hypothesis for $N-1$. The $N^{{\rm th}}$ row is
 \begin{gather*}
\Bigg[\underbrace{0~\cdots ~0}_{N-1}~~\prod_{i=m+1}^{j}S_{\beta_i\alpha_i}\Bigg].
\end{gather*}
By Lemma \ref{715pm310}, the last column of $\mathbf{T}_{i_{m}}^{\mathcal{M}_N}(\beta_{m},\alpha_{m})\cdots\mathbf{T}_{i_1}^{\mathcal{M}_N}(\beta_1,\alpha_1)$ is
\begin{gather*}
\begin{aligned}
\Bigg[\underbrace{0~\cdots ~0}_{N-2}~~pT_{\beta_m\alpha_m}\prod_{i=1}^{m-1}S_{\beta_{i}\alpha_{i}} ~~ Q_{\beta_m\alpha_m}\prod_{i=1}^{m-1}S_{\beta_{i}\alpha_{i}}\Bigg]^T.
\end{aligned}
\end{gather*}
Hence, $\big[\mathbf{A}_{\sigma}^{\mathcal{M}_N}\big]_{\pi, 2\cdots2 1}$ is zero or in the form of $\prod_{i=1}^{j}R_{\beta_i\alpha_i}.$
\end{proof}

\subsection[Expressions of sigma for \protect{[A\_\{sigma\}]\_\{pi,12...2\}}]{Expressions of $\boldsymbol{\sigma}$ for $\boldsymbol{[\mathbf{A}_{\sigma}]_{\pi, 12\cdots 2}}$}\label{1034pm106}

If $\sigma \in S_N$ is expressed as in Theorem~\ref{700pm410}, then $[\mathbf{A}_{\sigma}]_{\pi, 12\cdots 2}$ may not be in a simplified form. In this section, we will find another way to express $\sigma$ as a product of simple transpositions for which $[\mathbf{A}_{\sigma}]_{\pi, 12\cdots 2}$ is directly written as a factorized form.
 According to \cite[Lemma~4.2]{Kassel}, every element $\sigma \in S_N$ can be written as a product of $T_1,\dots, T_{N-1}$ with $T_{N-1}$ appearing at most once. Similarly, we have the following result.
\begin{Lemma}\label{1140pm45}
Every element $\sigma \in S_N$ can be written as a product of $T_1,\dots, T_{N-1}$ with $T_1$ appearing at most once.
\end{Lemma}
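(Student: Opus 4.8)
The plan is to deduce the statement from the already-cited fact that every $\sigma \in S_N$ can be written using $T_{N-1}$ at most once (\cite[Lemma~4.2]{Kassel}), by exploiting the symmetry of the Coxeter presentation of $S_N$ that interchanges the roles of $T_1$ and $T_{N-1}$. Concretely, I would introduce the map $\phi$ defined on generators by $\phi(T_i) = T_{N-i}$ for $i = 1,\dots, N-1$ and show that it extends to an automorphism of $S_N$. Note that $i \mapsto N-i$ is a bijection of $\{1,\dots,N-1\}$ sending $1 \leftrightarrow N-1$, which is exactly the correspondence we want to transport the known result across.

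First I would verify that $\phi$ respects the defining relations in \eqref{952pm411}. Since $\phi(T_i)^2 = T_{N-i}^2 = 1$, since $|i-j|\geq 2$ holds if and only if $|(N-i)-(N-j)|\geq 2$, and since $|i-j|=1$ holds if and only if $|(N-i)-(N-j)|=1$, each of the three relations is preserved. Hence $\phi$ extends to a group homomorphism, and because $\phi(\phi(T_i)) = T_{N-(N-i)} = T_i$ it is an involutive automorphism. (Equivalently, $\phi$ is conjugation by the longest element $w_0$ sending $k \mapsto N+1-k$, under which the transposition $T_i=(i,\,i+1)$ is carried to $(N-i,\,N-i+1)=T_{N-i}$.) This verification of well-definedness is the only point requiring any care, and I regard it as the crux; once the relations are checked the rest is immediate.

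The conclusion then follows in one line. Given $\sigma \in S_N$, apply \cite[Lemma~4.2]{Kassel} to the element $\phi(\sigma)$ to obtain an expression $\phi(\sigma) = T_{i_1}\cdots T_{i_k}$ in which the index $N-1$ occurs at most once. Applying $\phi$ and using that it is an involutive homomorphism gives $\sigma = \phi(T_{i_1})\cdots \phi(T_{i_k}) = T_{N-i_1}\cdots T_{N-i_k}$, a product of $T_1,\dots,T_{N-1}$ in which $T_1$ appears exactly as many times as $T_{N-1}$ appeared in the chosen expression of $\phi(\sigma)$, namely at most once.

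For completeness I would remark that a self-contained route avoiding \cite{Kassel} is to mirror Kassel's own induction on $N$: one shows that every $\sigma$ admits a factorization $\sigma = v_{N-1}\cdots v_1$ with $v_j \in \{1, T_j, T_j T_{j+1}, \dots, T_j T_{j+1}\cdots T_{N-1}\}$, which is the left--right reflection of the families $\Sigma_i$ in Theorem~\ref{700pm410}; in such a factorization $T_1$ can occur only inside the single factor $v_1$. The symmetry argument above is shorter, however, and reuses the lemma already invoked in the text, so I would present that as the main proof.
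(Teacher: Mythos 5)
Your proof is correct, but it takes a genuinely different route from the paper's. The paper disposes of Lemma~\ref{1140pm45} in one sentence: it re-runs the induction on $N$ from the proof of \cite[Lemma~4.2]{Kassel}, using that the bijections of $\{2,\dots,N\}$ form a copy of the bijections of $\{1,\dots,N-1\}$; in other words, the authors repeat Kassel's argument with the roles of the two ends of the index set exchanged. You instead transport the finished statement of \cite[Lemma~4.2]{Kassel} across the involutive automorphism $\phi(T_i)=T_{N-i}$, which swaps $T_1$ and $T_{N-1}$: an expression of $\phi(\sigma)$ containing $T_{N-1}$ at most once becomes, after applying $\phi$, an expression of $\sigma$ containing $T_1$ at most once. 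Two observations. First, deducing that $\phi$ is a homomorphism from the relation check against \eqref{952pm411} tacitly uses that those relations constitute a \emph{presentation} of $S_N$ (the Coxeter presentation), a fact the paper never states; your parenthetical identification of $\phi$ with conjugation by the longest element $w_0\colon k\mapsto N+1-k$ removes this dependency entirely, since conjugation is automatically an automorphism and $w_0(i,i+1)w_0^{-1}=(N-i,N-i+1)$ is a one-line computation, so I would promote that to the main justification. Second, your approach buys more than the lemma: the same $\phi$ carries the sets $\Sigma_i$ of Theorem~\ref{700pm410} onto the sets $\Gamma_{N-i}$ of Proposition~\ref{514pm323}, and the sets $\Xi_i$ of Proposition~\ref{1044pm44} onto the sets $\Omega_i$ of Proposition~\ref{442pm321}, so the existence-and-uniqueness statements of those decompositions could likewise be obtained by transport instead of by repeating the inductions, which is arguably more illuminating than the paper's case-by-case treatment. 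Your closing remark (mirroring Kassel's induction directly) is essentially the paper's actual proof, so in effect you have supplied both arguments.
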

Lemma \ref{1140pm45} can be proved by the mathematical induction in a similar way to the proof of Lemma~4.2 in~\cite{Kassel} by using the fact that the set of all bijections on $\{2,\dots, N\}$ is isomorphic to the set of all bijections on $\{1,\dots, N-1\}$. Proposition~\ref{442pm321}, similar to Theorem~\ref{700pm410}, provides a~method to express $\sigma$ as a product of simple transpositions with~$T_1$ appearing at most once, and we use this method for simplified forms of $[\mathbf{A}_{\sigma}]_{\pi, 12\cdots 2}$.
\begin{Proposition}\label{442pm321}
Consider the following subsets of the symmetric group~$S_N$:
\begin{gather*}
\begin{aligned}
&\Omega_{N-1} = \{1,T_1,T_2T_1,\dots, T_{N-1}\cdots T_2T_1\}, \\
&\Omega_{N-2} = \{1,T_2,T_3T_2, \dots, T_{N-1}\cdots T_2\}, \\
&\hspace{0.3cm} \vdots \\
&\Omega_{2} = \{1,T_{N-2}, T_{N-1}T_{N-2}\}, \\
&\Omega_{1} = \{1,T_{N-1}\}.
\end{aligned}
\end{gather*}
For any permutation $\sigma \in S_N$, there is a unique element
\begin{gather*}
(w_{N-1},\dots, w_{1}) \in \Omega_{N-1} \times \Omega_{N-2} \times \cdots \times \Omega_{1}
\end{gather*}
such that $\sigma = w_{N-1}w_{N-2}\cdots w_{1}$.
\end{Proposition}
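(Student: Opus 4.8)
The plan is to prove the statement by induction on $N$, peeling off the \emph{leftmost} factor $w_{N-1}\in\Omega_{N-1}$ and recognizing the remaining product $w_{N-2}\cdots w_1$ as an element of a parabolic subgroup. Working abstractly with $T_i=(i,i+1)\in S_N$, observe that for every $k\le N-2$ the set $\Omega_k$ involves only the generators $T_{N-k},\dots,T_{N-1}$ with $N-k\ge 2$; hence any product $w_{N-2}\cdots w_1$ with $w_i\in\Omega_i$ lies in $H:=\langle T_2,\dots,T_{N-1}\rangle=\mathrm{Stab}(1)$, which is isomorphic to $S_{N-1}$. So the desired factorization reduces to writing $\sigma=w_{N-1}\,h$ with $w_{N-1}\in\Omega_{N-1}$ and $h\in H$, and then factoring $h$ by the induction hypothesis.

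The crux is to show that $\Omega_{N-1}$ is a complete and irredundant set of left-coset representatives for $H$ in $S_N$. Since $H=\mathrm{Stab}(1)$, the left cosets $gH$ are classified bijectively by the value $g(1)$, and there are exactly $N=[S_N:H]$ of them. A direct computation shows that the element $v=T_mT_{m-1}\cdots T_1\in\Omega_{N-1}$ (with $m=0$ giving $v=1$) satisfies $v(1)=m+1$: applying $T_1,T_2,\dots,T_m$ in turn sends $1\mapsto 2\mapsto 3\mapsto\cdots\mapsto m+1$. Thus as $m$ ranges over $0,1,\dots,N-1$ the $N$ elements of $\Omega_{N-1}$ realize all $N$ distinct values $1,2,\dots,N$ at the point $1$, so they lie in distinct cosets. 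Since $|\Omega_{N-1}|=N$, this forces $\Omega_{N-1}$ to meet every left coset of $H$ exactly once, yielding both existence and uniqueness of the decomposition $\sigma=w_{N-1}\,h$.

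To close the induction I would relabel $\{2,\dots,N\}$ by $j\mapsto j-1$, under which $T_j\mapsto T_{j-1}$ and $H$ is identified with $S_{N-1}$ carrying its own generators $T_1,\dots,T_{N-2}$. Under this identification each $\Omega_k$ with $k\le N-2$, which in the original labels reads $\{1,T_{N-k},\dots,T_{N-1}\cdots T_{N-k}\}$, maps precisely to the set $\Omega_k$ attached to $S_{N-1}$ by the proposition with $N$ replaced by $N-1$. Applying the induction hypothesis to $h\in H$ gives a unique $(w_{N-2},\dots,w_1)$, and prepending $w_{N-1}$ finishes the step; the base case $N=2$ is immediate since $\Omega_1=\{1,T_1\}=S_2$. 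I expect the only real obstacle to be the transversal computation $v(1)=m+1$ together with the bookkeeping that the relabeling carries each reduced $\Omega_k$ to the correct lower-rank set and preserves the left-to-right order of factors, so that uniqueness propagates cleanly through both stages.

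An alternative, purely formal route bypasses the induction by deducing the claim from Theorem~\ref{700pm410} via the diagram automorphism $\phi$ of $S_N$ determined by $\phi(T_i)=T_{N-i}$ (equivalently conjugation by the longest element $w_0$), combined with the anti-automorphism $\sigma\mapsto\sigma^{-1}$. One checks that $\phi$ respects the braid relations and that $\phi(\Sigma_i^{-1})=\Omega_i$; then, given $\sigma$, one applies Theorem~\ref{700pm410} to $\mu:=\phi(\sigma)^{-1}$ to get a unique $\mu=u_1\cdots u_{N-1}$ with $u_i\in\Sigma_i$, and sets $w_i:=\phi(u_i^{-1})\in\Omega_i$, so that $\sigma=\phi(\mu^{-1})=w_{N-1}\cdots w_1$. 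Since $\sigma\mapsto\mu$ and $u_i\mapsto w_i$ are bijections, uniqueness transfers directly from Theorem~\ref{700pm410}.
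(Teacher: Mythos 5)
Your proposal is correct, but both routes you give differ genuinely from the paper's. The paper also inducts on $N$, but its engine is word manipulation: it invokes Lemma~\ref{1140pm45} to write $\sigma=\sigma'T_1\sigma''$ with $T_1$ occurring exactly once, factors $\sigma'$ by the induction hypothesis, slides $T_1$ past $w'_1,\dots,w'_{N-3}$ using the commutation relations in~(\ref{952pm411}), absorbs $w'_{N-2}T_1$ into $\Omega_{N-1}$, and then applies the induction hypothesis a second time to the remaining word $w'_{N-3}\cdots w'_1\sigma''$. Your first route replaces all of this with the standard parabolic-coset mechanism: $w_{N-2}\cdots w_1$ lies in $H=\langle T_2,\dots,T_{N-1}\rangle=\mathrm{Stab}(1)\cong S_{N-1}$, and $\Omega_{N-1}$ is a transversal of the $N$ cosets of $H$ because its $N$ elements take pairwise distinct values at $1$. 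This buys a cleaner treatment of uniqueness: it falls out of the transversal property plus induction, whereas the paper's construction is essentially an existence argument whose uniqueness claim is most easily completed by the unstated counting observation $|\Omega_1|\cdots|\Omega_{N-1}|=N!=|S_N|$; your route also avoids Lemma~\ref{1140pm45} entirely, whose proof the paper itself only sketches. One cosmetic caveat: the paper multiplies generators by letting $T_i$ act on positions of words, so its products correspond to function composition in the reverse order; under that convention your transversal computation should track $\sigma^{-1}(1)$ (the position of the symbol $1$ in the word) rather than $\sigma(1)$ --- the argument is otherwise unchanged, since both conventions realize the same abstract group. Your second route is different again and is the most economical of the three: once $\phi(T_i)=T_{N-i}$ is known to define an automorphism (conjugation by the longest element $w_0$) and the identity $\phi\big(\Sigma_i^{-1}\big)=\Omega_i$ is checked, the proposition becomes a formal corollary of the already-cited Theorem~\ref{700pm410}, with uniqueness inherited through the bijections $\sigma\mapsto\phi(\sigma)^{-1}$ and $u_i\mapsto\phi\big(u_i^{-1}\big)$; the same trick ($\phi$ alone, or inversion alone) would likewise yield Propositions~\ref{514pm323} and~\ref{1044pm44} with no further induction.
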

\begin{proof}
We prove this by induction on $N$. The statement is obvious for $N=2$. Suppose that the statement holds for $N-1$. It suffices to prove for $\sigma \in S_N$ represented by $T_1,\dots,T_{N-1}$ in which $T_1$ appears exactly once by Lemma~\ref{1140pm45}. (If $T_1$ does not appear in the representation of $\sigma$, then $\sigma(1)=1$ and there exist $w_i \in \Omega_i,~i=1,\dots, N-2$ such that $\sigma$ is uniquely written as $\sigma = w_{N-2}\cdots w_{1}$ by the induction hypothesis for $N-1$.)
Let $\sigma =\sigma' T_1 \sigma''$ where $\sigma'$ and $\sigma''$ are \textit{words} consisting of $T_2,\dots, T_{N-1}$. By the induction hypothesis, there exist $w'_i \in \Omega_i,~i=1,\dots, N-2$ such that $\sigma'$ is uniquely written as $\sigma' = w'_{N-2}\cdots w'_{1}$ so that $\sigma = w'_{N-2}\cdots w'_{1}T_1\sigma''$. By~(\ref{952pm411}), $T_1$~commutes with $w'_1,\dots,w'_{N-3}$, so
\begin{gather*}
\sigma = w'_{N-2}T_1w'_{N-3}\cdots w'_{1}\sigma''.
\end{gather*}
Since $w'_{N-3}\cdots w'_{1}\sigma''$ is a \textit{word} consisting of $T_2,\dots, T_{N-1}$, it may be written uniquely as \linebreak $w''_{N-2}\cdots w''_{1}$ for some $w''_i \in \Omega_i$ by the induction hypothesis. It is clear that $ w'_{N-2}T_1\in \Omega_{N-1}$.
\end{proof}

$[\mathbf{A}_{\sigma}]_{\pi, 12\cdots 2}$ implies the first column of $\mathbf{A}_{\sigma}^{\mathcal{M}_N} = (\mathbf{w}_{N-1}\cdots \mathbf{w}_{1})^{\mathcal{M}_N}$. Let us find the form of the first column of $(\mathbf{w}_{N-1})^{\mathcal{M}_N}$ (The reason why we consider this will be clearer soon). Note that non-identity $w_{N-1} \in \Omega_{N-1}$ is in the form of $T_lT_{l-1}\cdots T_1$ for some $l$.
\begin{Lemma}\label{1048pm104}
For $N \geq 2$ and $l=1,\dots, N-1$,
\begin{gather*}
\big[\mathbf{T}_{l}^{\mathcal{M}_N}(\beta_l,\alpha_l)\cdots \mathbf{T}_{1}^{\mathcal{M}_N}(\beta_{1},\alpha_{1}) \big]_{i,1}
\end{gather*}
is in the form of $\prod_{k=1}^lR_{\beta_k\alpha_k}$ where $R_{\beta\alpha}$ is one of $S_{\beta\alpha}$, $P_{\beta\alpha}$, $Q_{\beta\alpha}$, $pT_{\beta\alpha}$ or $qT_{\beta\alpha}$ if $1\leq i \leq l+1$, and is zero if $l+1<i \leq N$.
\end{Lemma}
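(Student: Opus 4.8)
The plan is to fix $N$ and argue by induction on $l$, tracking the first column of the partial product directly. This is the mirror image of Lemma~\ref{715pm310}, in which the \emph{last} column of the ascending product $\mathbf{T}_{N-1}^{\mathcal{M}_N}\cdots\mathbf{T}_l^{\mathcal{M}_N}$ was computed; here the roles of the first and last index are swapped. I would write $v^{(l)}$ for the first column of $\mathbf{T}_l^{\mathcal{M}_N}(\beta_l,\alpha_l)\cdots\mathbf{T}_1^{\mathcal{M}_N}(\beta_1,\alpha_1)$ and take as the inductive statement that each entry $v^{(l)}_i$ is of the form $\prod_{k=1}^l R_{\beta_k\alpha_k}$ for $1\le i\le l+1$ and that $v^{(l)}_i=0$ for $l+1<i\le N$.

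For the base case $l=1$, I would read off from~(\ref{400pm320}) that the first column of $\mathbf{T}_1^{\mathcal{M}_N}(\beta_1,\alpha_1)$ is $P_{\beta_1\alpha_1}$ in row~$1$, $qT_{\beta_1\alpha_1}$ in row~$2$, and $0$ in rows $3,\dots,N$, which is precisely the claimed form with the single factor $R_{\beta_1\alpha_1}$. For the inductive step, since $v^{(l)}$ is the first column of $\mathbf{T}_l^{\mathcal{M}_N}\bigl(\mathbf{T}_{l-1}^{\mathcal{M}_N}\cdots\mathbf{T}_1^{\mathcal{M}_N}\bigr)$, I would compute $v^{(l)}=\mathbf{T}_l^{\mathcal{M}_N}(\beta_l,\alpha_l)\,v^{(l-1)}$ entry by entry using~(\ref{400pm320}). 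Because $\mathbf{T}_l^{\mathcal{M}_N}(\beta_l,\alpha_l)$ equals $S_{\beta_l\alpha_l}$ along the diagonal except for the $2\times2$ block $\left[\begin{smallmatrix}P_{\beta_l\alpha_l}&pT_{\beta_l\alpha_l}\\ qT_{\beta_l\alpha_l}&Q_{\beta_l\alpha_l}\end{smallmatrix}\right]$ in rows and columns $l,l+1$, the rows $i<l$ give $v^{(l)}_i=S_{\beta_l\alpha_l}v^{(l-1)}_i$, row $l$ gives $P_{\beta_l\alpha_l}v^{(l-1)}_l+pT_{\beta_l\alpha_l}v^{(l-1)}_{l+1}$, row $l+1$ gives $qT_{\beta_l\alpha_l}v^{(l-1)}_l+Q_{\beta_l\alpha_l}v^{(l-1)}_{l+1}$, and rows $i>l+1$ give $S_{\beta_l\alpha_l}v^{(l-1)}_i$.

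The crux of the argument, and the only place the induction hypothesis is genuinely used, is that $v^{(l-1)}_{l+1}=0$ (indeed $v^{(l-1)}_i=0$ for every $i>l$). This single vanishing kills every off-diagonal contribution of the $2\times2$ block: row $l$ collapses to $P_{\beta_l\alpha_l}v^{(l-1)}_l$ and row $l+1$ to $qT_{\beta_l\alpha_l}v^{(l-1)}_l$, each a new single-factor multiple of $v^{(l-1)}_l=\prod_{k=1}^{l-1}R_{\beta_k\alpha_k}$, while rows $i<l$ are simply scaled by $S_{\beta_l\alpha_l}$ and rows $i>l+1$ stay $0$. Thus the nonzero support of the first column advances by exactly one row, from $\{1,\dots,l\}$ to $\{1,\dots,l+1\}$, and every surviving entry remains a product of $l$ factors of the required type, closing the induction. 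I do not expect a genuine obstacle: the whole content is the bookkeeping of this zero-propagation, so isolating ``$v^{(l-1)}_{l+1}=0$'' as the load-bearing part of the hypothesis is the cleanest way to organize the proof.
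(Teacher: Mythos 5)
Your proof is correct, and it closes the induction exactly where it must: the vanishing $v^{(l-1)}_{l+1}=0$ is indeed what reduces each of the two block rows to a single-factor multiple of $v^{(l-1)}_l$, so the support of the first column grows by exactly one row per step and every surviving entry is a product of the right length.

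The route, however, is organized differently from the paper's. The paper fixes the induction on $N$ rather than on $l$: for $l\le N-2$ it invokes the direct-sum identity $\mathbf{T}^{\mathcal{M}_N}_{l}(\beta,\alpha)=\mathbf{T}^{\mathcal{M}_{N-1}}_{l}(\beta,\alpha)\oplus S_{\beta\alpha}$ (equation~(\ref{333pm322})) to pull the whole statement back to the $\mathcal{M}_{N-1}$ case, and only for $l=N-1$ does it perform an explicit multiplication by $\mathbf{T}_{N-1}^{\mathcal{M}_N}=(S_{\beta\alpha}\mathbf{I}_{N-2})\oplus\left[\begin{smallmatrix}P_{\beta\alpha}&pT_{\beta\alpha}\\ qT_{\beta\alpha}&Q_{\beta\alpha}\end{smallmatrix}\right]$ --- and that one computation is precisely your inductive step specialized to $l=N-1$, with the zero in the last entry of the previous column playing the same load-bearing role you identify. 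What your scheme buys is uniformity and self-containment: there is a single case in the inductive step, no appeal to the embedding of $\mathcal{M}_{N-1}$-matrices into $\mathcal{M}_N$-matrices, and the zero-propagation mechanism is stated once and for all. What the paper's scheme buys is economy within its own framework: the same direct-sum reduction (\ref{333pm322}) is reused throughout (in Lemma~\ref{715pm310} and in the proofs of Theorems~\ref{420am36} and~\ref{420am366}), so inducting on $N$ lets this lemma share its skeleton with the neighboring arguments rather than introduce a second induction pattern. Either proof is complete; yours could be substituted without loss.
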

\begin{proof}
We prove by induction on $N$. If $N=2$,
\begin{gather*}
\mathbf{T}_{1}^{\mathcal{M}_2}(\beta_{1},\alpha_{1}) = \left[
 \begin{matrix}
 P_{\beta_1\alpha_1} & pT_{\beta_1\alpha_1} \\
 qT_{\beta_1\alpha_1} & Q_{\beta_1\alpha_1}
 \end{matrix}
\right].
\end{gather*}
Suppose that the statement is true for $N-1$. If $l=1,\dots, N-2$, the induction hypothesis and~(\ref{333pm322}) imply that
\begin{gather*}
\big[ \mathbf{T}_{l}^{\mathcal{M}_N}(\beta_l,\alpha_l)\cdots \mathbf{T}_{1}^{\mathcal{M}_N}(\beta_{1},\alpha_{1}) \big]_{i,1}
\end{gather*}
is in the form of $\prod_{k=1}^lR_{\beta_k\alpha_k}$ if $1\leq i \leq l+1$, and is zero if $l+1<i \leq N$.
Let $l=N-1$. Then,
\begin{gather*}
\big[\mathbf{T}_{N-2}^{\mathcal{M}_N}(\beta_{N-2},\alpha_{N-2})\cdots \mathbf{T}_{1}^{\mathcal{M}_N}(\beta_{1},\alpha_{1}) \big]_{i,1}
\end{gather*}
is in the form of $\prod_{k=1}^{N-2}R_{\beta_k\alpha_k}$ for $i=1,\dots, N-1$ and is zero for $i=N$ by the induction hypothesis. Noting that
\begin{gather*}
\mathbf{T}_{N-1}^{\mathcal{M}_N}(\beta,\alpha) =
 \big(S_{\beta\alpha}\mathbf{I}_{N-2}\big) \oplus \left[ \begin{matrix}
 P_{\beta\alpha} & pT_{\beta\alpha} \\
 qT_{\beta\alpha} & Q_{\beta\alpha}
 \end{matrix}
\right]
\end{gather*}
and computing
\begin{gather}\label{1124pm107}
\big[ \mathbf{T}_{N-1}^{\mathcal{M}_N}(\beta_{N-1},\alpha_{N-1})\mathbf{T}_{N-2}^{\mathcal{M}_N}(\beta_{N-2},\alpha_{N-2})\cdots \mathbf{T}_{1}^{\mathcal{M}_N}(\beta_{1},\alpha_{1}) \big]_{i,1}
\end{gather}
directly, we can show that~(\ref{1124pm107}) is in the form of $\prod_{k=1}^{N-1}R_{\beta_k\alpha_k}$ for all $i$.
\end{proof}

Rewriting each $w_i$ in $\sigma = w_{N-1}\cdots w_1$ in Proposition~\ref{442pm321} explicitly by simple transpositions, we obtain an expression of $\sigma$ in terms of simple transpositions
\begin{gather*}
\sigma =w_{N-1}\cdots w_2w_{1} = T_{i_j}(\beta_j,\alpha_j)\cdots T_{i_2}(\beta_2,\alpha_2)T_{i_1}(\beta_1,\alpha_1)
\end{gather*}
and
\begin{gather*}
\{(\beta_1,\alpha_1),(\beta_2,\alpha_2), \dots, (\beta_j,\alpha_j)\}
\end{gather*}
is the set of all inversions in $\sigma$.
\begin{Theorem}\label{420am366}
Let $\sigma = w_{N-1}\cdots w_1 $ be an expression as in Proposition~{\rm \ref{442pm321}} and let $\mathbf{A}_{\sigma}$
be the matrix corresponding $\sigma$. Then, for all $N \geq 2$,
\begin{itemize}\itemsep=0pt
 \item [$(a)$] $ \big[{\mathbf{A}}_{12\cdots N}\big]_{12\cdots 2, 12\cdots 2} = 1 $ and $ \big[{\mathbf{A}}_{12\cdots N}\big]_{\pi, 12\cdots 2} = 0 $ if $\pi \neq 12\cdots 2$.
 \item [$(b)$] If $\sigma \neq 12\cdots N$, then $ \big[{\mathbf{A}}_{\sigma}\big]_{\pi, 12\cdots 2}$ is zero or written as
 \begin{gather}\label{428am366}
 \big[{\mathbf{A}}_{\sigma}\big]_{\pi, 12\cdots 2} = \prod_{(\beta,\alpha)}R_{\beta\alpha}
\end{gather}
where $R_{\beta\alpha}$ is one of $ S_{\beta\alpha}$, $P_{\beta\alpha}$, $Q_{\beta\alpha}$, $pT_{\beta\alpha}$, $qT_{\beta\alpha}$ and the product in~\eqref{428am366} is taken over all inversions $(\beta,\alpha)$ in~$\sigma$.
\end{itemize}
\end{Theorem}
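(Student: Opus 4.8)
The plan is to run the mirror image of the proof of Theorem~\ref{420am36}. Since $\nu = 12\cdots 2 = \pi^{(1)}$ labels the \emph{first} row and column of the block $\mathbf{A}_\sigma^{\mathcal{M}_N}$, the numbers $[\mathbf{A}_\sigma]_{\pi,12\cdots2}$ are read off from the first column of $\mathbf{A}_\sigma^{\mathcal{M}_N} = (\mathbf{w}_{N-1}\mathbf{w}_{N-2}\cdots\mathbf{w}_1)^{\mathcal{M}_N}$, i.e.\ from $\mathbf{A}_\sigma^{\mathcal{M}_N}e_1$, where $e_1$ denotes the first coordinate vector. The decisive feature of the decomposition in Proposition~\ref{442pm321} is that $w_{N-1}\in\Omega_{N-1}$ is the \emph{only} factor permitted to contain $T_1$; each of $w_{N-2},\dots,w_1$ is a word in $T_2,\dots,T_{N-1}$ alone. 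In contrast to Theorem~\ref{420am36}, where the distinguished factor sits on the right and forces an induction on~$N$, here the distinguished factor $w_{N-1}$ sits on the \emph{left}, so the remaining factors act first and, as explained below, contribute only a scalar. This should collapse the argument to a single application of Lemma~\ref{1048pm104}, with no induction on~$N$ required.

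Part~(a) is immediate, since $\mathbf{A}_{12\cdots N}$ is the identity matrix, whose first column is $e_1$. For part~(b) I would first record that the state $12\cdots2$, whose unique species-$1$ particle occupies slot~$1$, is fixed by each $\mathbf{T}_l^{\mathcal{M}_N}(\beta,\alpha)$ with $l\geq 2$: such a transposition acts on two adjacent species-$2$ slots, so by~\eqref{625pm72443} it sends $e_1\mapsto S_{\beta\alpha}e_1$. Multiplying these through from the right, $(\mathbf{w}_{N-2}\cdots\mathbf{w}_1)^{\mathcal{M}_N}e_1 = \big(\prod S_{\beta\alpha}\big)e_1$, the product running over the simple transpositions — equivalently, the inversions — carried by $w_{N-2}\cdots w_1$. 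Consequently the first column of $\mathbf{A}_\sigma^{\mathcal{M}_N}$ equals this scalar times the first column of $(\mathbf{w}_{N-1})^{\mathcal{M}_N}$.

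It then remains to invoke Lemma~\ref{1048pm104}: a non-identity $w_{N-1}\in\Omega_{N-1}$ has the form $T_l\cdots T_1$, and the lemma asserts that each entry of the first column of $(\mathbf{w}_{N-1})^{\mathcal{M}_N}$ is either zero or a factorized product $\prod R_{\beta\alpha}$ over the inversions of $w_{N-1}$. Multiplying by the scalar $\prod S_{\beta\alpha}$ from the previous step turns every nonzero entry into a single factorized product over all inversions of $\sigma$, which is exactly~\eqref{428am366}; when $w_{N-1}=1$ only the top entry survives, equal to $\prod S_{\beta\alpha}$, still of the claimed form. The case $N=2$ needs no separate treatment, as the group of ``other'' factors is then empty and the base case of Lemma~\ref{1048pm104} supplies the first column directly.

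The one point that needs genuine care — and the main obstacle — is the inversion bookkeeping. I must verify that the inversions assigned an $S$-factor by $w_{N-2}\cdots w_1$ and those assigned an $R$-factor by Lemma~\ref{1048pm104} inside $w_{N-1}$ together exhaust the inversion set of $\sigma$, each exactly once. This rests on the fact that the expression $\sigma = w_{N-1}\cdots w_1$ produced by Proposition~\ref{442pm321} is reduced, so its simple transpositions are in bijection with the inversions of $\sigma$ and split cleanly into the two groups of factors; the scalar computation of the second paragraph then guarantees that every transposition outside $w_{N-1}$ is registered by the state $12\cdots2$ as precisely an $S_{\beta\alpha}$, so the two partial products combine into one product over all inversions.
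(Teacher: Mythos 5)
Your proposal is correct and follows essentially the same route as the paper's own proof: the paper likewise splits $\mathbf{A}_{\sigma}=\mathbf{w}_{N-1}(\mathbf{w}_{N-2}\cdots\mathbf{w}_1)$, observes that the absence of $T_1$ in $w_{N-2}\cdots w_1$ forces $(\mathbf{w}_{N-2}\cdots\mathbf{w}_1)^{\mathcal{M}_N}=\prod_i S_{\beta_i\alpha_i}\oplus\mathbf{G}_{N-1}$ so that this factor contributes only the scalar $\prod_i S_{\beta_i\alpha_i}$ to the first column, and then applies Lemma~\ref{1048pm104} to the first column of $(\mathbf{w}_{N-1})^{\mathcal{M}_N}$, with no further induction on $N$. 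Your observation that the placement of the distinguished factor on the left (rather than on the right, as in Theorem~\ref{420am36}) is what eliminates the induction, and your reliance on the reducedness of the expression from Proposition~\ref{442pm321} for the inversion count, both match the paper's reasoning.
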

\begin{proof}
Let
\begin{gather*}
\mathbf{A}_{\sigma} = \mathbf{w}_{N-1}\cdots \mathbf{w}_1
 =\underbrace{\mathbf{T}_{i_j}(\beta_j,\alpha_j)\cdots \mathbf{T}_{i_{m+1}}(\beta_{m+1},\alpha_{m+1})}_{=\mathbf{w}_{N-1}}~\underbrace{\mathbf{T}_{i_m}(\beta_m,\alpha_m)\cdots \mathbf{T}_{i_1}(\beta_1,\alpha_1)}_{=\mathbf{w}_{N-2}\cdots \mathbf{w}_1}.
\end{gather*}
For all $N \geq 2$, if $\sigma$ is the identity permutation, then it is obvious that $[{\mathbf{A}}_{\sigma}]_{12\cdots 2, 12\cdots 2} = 1 $ and $ [{\mathbf{A}}_{\sigma}]_{\pi, 12\cdots 2} = 0 $ for $\pi \neq 12\cdots 2$ because $\mathbf{A}_{12\cdots N}$ is the identity matrix. Suppose that $\sigma$ is not the identity permutation. Since there is no $T_1$ in $w_{N-2}\cdots w_1$,
\begin{gather*}
(\mathbf{w}_{N-2}\cdots \mathbf{w}_1)^{\mathcal{M}_N} = \prod_{i=1}^mS_{\beta_i\alpha_i} \oplus \mathbf{G}_{N-1}
\end{gather*}
for some $(N-1) \times (N-1)$ matrix $\mathbf{G}_{N-1}$ by (\ref{400pm320}). By Lemma \ref{1048pm104}, $\big[(\mathbf{w}_{N-1})^{\mathcal{M}_N}\big]_{l,1}$ is zero if $i_j+1 < l \leq N$, and is in the form of $\prod_{k=m+1}^{j}R_{\beta_k\alpha_k}$ if $1\leq l \leq i_j+1$. Hence,
\begin{gather*}
\big[(\mathbf{w}_{N-1})^{\mathcal{M}_N}(\mathbf{w}_{N-2}\cdots \mathbf{w}_1)^{\mathcal{M}_N} \big]_{l,1} = \prod_{i=1}^mS_{\beta_i\alpha_i}\prod_{k=m+1}^{j}R_{\beta_k\alpha_k}
\end{gather*}
for some $R_{\beta_k\alpha_k}$ if $1\leq l \leq i_j+1$, and is zero $i_j+1 < l \leq N$.
\end{proof}

\subsection[Expressions of sigma for \protect{[A\_\{sigma\}]\_\{pi,21...1\}} and \protect{[A\_\{sigma\}]\_\{pi,1...12\}}]{Expressions of $\boldsymbol{\sigma}$ for $\boldsymbol{[\mathbf{A}_{\sigma}]_{\pi, 21\cdots 1}}$ and $\boldsymbol{[\mathbf{A}_{\sigma}]_{\pi, 1\cdots 12}}$}\label{1035pm106}

In order to find $[\mathbf{A}_{\sigma}]_{\pi, 21\cdots 1}$ in factorized forms, $\sigma$ should be expressed as follows.
\begin{Proposition}\label{514pm323}
Consider the following subsets of the symmetric group $S_N$:
\begin{gather*}
\begin{aligned}
&\Gamma_{N-1} = \{1,T_{N-1}\}, \\
&\Gamma_{N-2} = \{1,T_{N-2},T_{N-2}T_{N-1}\}, \\
&\Gamma_{N-3} = \{1,T_{N-3},T_{N-3}T_{N-2},T_{N-3}T_{N-2}T_{N-1}\}, \\
&\hspace{0.3cm} \vdots \\
&\Gamma_{1} = \{1,T_{1},T_{1}T_{2},\dots, T_{1}\cdots T_{N-2}T_{N-1}\}.
\end{aligned}
\end{gather*}
For any permutation $\sigma \in S_N$, there is a unique element
\begin{gather*}
(w_{N-1},\dots, w_2,w_{1}) \in \Gamma_{N-1} \times \cdots \times \Gamma_2 \times \Gamma_{1}
\end{gather*}
such that $\sigma = w_{N-1}\cdots w_2w_{1}$.
\end{Proposition}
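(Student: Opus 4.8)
The plan is to obtain Proposition~\ref{514pm323} as the mirror image of Theorem~\ref{700pm410}, transporting the Kassel factorization through the inner automorphism of $S_N$ given by conjugation by the longest element. Let $w_0 \in S_N$ denote the permutation $i \mapsto N+1-i$, so that $w_0^{-1} = w_0$, and define $\phi\colon S_N \to S_N$ by $\phi(\tau) = w_0\,\tau\,w_0^{-1}$. As an inner automorphism, $\phi$ is automatically a bijective group homomorphism, and $\phi^2 = \mathrm{id}$ since $w_0^2 = 1$. The one computation I need is $\phi(T_i) = T_{N-i}$: because $T_i$ is the transposition $(i,i+1)$, conjugating by $w_0$ produces the transposition on $w_0(i) = N+1-i$ and $w_0(i+1) = N-i$, which are consecutive, hence equal to $T_{N-i}$.

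With this in hand, the key structural fact is $\phi(\Sigma_j) = \Gamma_{N-j}$ for every $j = 1, \dots, N-1$. I would verify it by matching elements of equal length: the element of length $r$ in $\Sigma_j$ is $T_j T_{j-1}\cdots T_{j-r+1}$, and applying $\phi$ term by term gives
\[
\phi\big(T_j T_{j-1}\cdots T_{j-r+1}\big) = T_{N-j}\,T_{N-j+1}\cdots T_{N-j+r-1},
\]
which is exactly the element of length $r$ in $\Gamma_{N-j}$. Letting $r$ range over $0, \dots, j$ shows $\phi$ restricts to a bijection $\Sigma_j \to \Gamma_{N-j}$, so the two subsets correspond under $\phi$.

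Existence and uniqueness then follow formally. For existence, given an arbitrary $\tau \in S_N$, set $\sigma = \phi(\tau)$ and invoke Theorem~\ref{700pm410} to write $\sigma = w_1 w_2 \cdots w_{N-1}$ with $w_i \in \Sigma_i$. Applying the homomorphism $\phi$ yields $\tau = \phi(w_1)\phi(w_2)\cdots\phi(w_{N-1})$, and since $\phi(w_i) \in \phi(\Sigma_i) = \Gamma_{N-i}$, relabelling $u_k = \phi(w_{N-k}) \in \Gamma_k$ gives $\tau = u_{N-1}u_{N-2}\cdots u_1$, the asserted form. For uniqueness, suppose $\tau = u_{N-1}\cdots u_1$ with $u_k \in \Gamma_k$. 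Applying $\phi$ gives a Kassel factorization $\sigma = \phi(u_{N-1})\cdots\phi(u_1)$ of $\sigma$, with $\phi(u_{N-k}) \in \phi(\Gamma_{N-k}) = \Sigma_k$. By the uniqueness in Theorem~\ref{700pm410} these factors are determined by $\sigma$, and since $\phi$ is a bijection the factors $u_k$ are in turn determined by $\tau$.

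I do not anticipate a real obstacle: the entire argument is a transport of structure, and the only points requiring care are the single conjugation identity $\phi(T_i) = T_{N-i}$ and the set identity $\phi(\Sigma_j) = \Gamma_{N-j}$, together with the attendant relabelling of indices. If one prefers to avoid the automorphism altogether, the statement can instead be proved by a direct induction on $N$ modelled on the proof of Proposition~\ref{442pm321}, isolating the rightmost factor $w_1 \in \Gamma_1$ and applying the inductive hypothesis to the residual permutation on $\{2,\dots,N\}$; I would nonetheless present the conjugation argument as the primary proof, since it is shorter and exhibits the symmetry with Theorem~\ref{700pm410} explicitly.
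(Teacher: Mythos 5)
Your proof is correct, but it proceeds along a genuinely different route from the paper's. The paper proves Proposition~\ref{514pm323} by induction on $N$, exactly mirroring its proof of Proposition~\ref{442pm321}: it invokes Lemma~\ref{1140pm45} to reduce to the case where $T_1$ appears exactly once, writes $\sigma = \sigma' T_1 \sigma''$, applies the inductive hypothesis to $\sigma''$ to get $\sigma'' = w''_{N-1}\cdots w''_2$, uses the braid relations (\ref{952pm411}) to commute $T_1$ past $w''_3,\dots,w''_{N-1}$, and then absorbs $T_1 w''_2$ into $\Gamma_1$. You instead transport Theorem~\ref{700pm410} through conjugation by the longest element: your identity $\phi(T_i)=T_{N-i}$ is the standard fact $\pi(a,b)\pi^{-1}=(\pi(a),\pi(b))$ applied to consecutive pairs, and your matching $\phi(\Sigma_j)=\Gamma_{N-j}$ is verified correctly (the length-$r$ descending word $T_jT_{j-1}\cdots T_{j-r+1}$ maps to the length-$r$ ascending word $T_{N-j}T_{N-j+1}\cdots T_{N-j+r-1}$, and the factor order after applying $\phi$ comes out as $\Gamma_{N-1}$ leftmost down to $\Gamma_1$ rightmost, as required); uniqueness transports cleanly because $\phi$ is an involutive automorphism. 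What your approach buys: it is shorter, requires no fresh induction, bypasses Lemma~\ref{1140pm45} entirely, and exposes the mirror symmetry between the two decompositions explicitly --- the same trick would also derive Proposition~\ref{1044pm44} from Proposition~\ref{442pm321}. What the paper's approach buys: it is uniform with the proofs of Propositions~\ref{442pm321} and~\ref{1044pm44} (one induction schema serves all three), stays entirely within the toolkit of generator manipulations via~(\ref{952pm411}) that the rest of Section~\ref{main} relies on, and does not presuppose familiarity with conjugation by the longest element.
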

\begin{proof}
We prove this by induction on $N$. The statement is obvious for $N=2$. Suppose that the statement holds for $N-1$. It suffices to prove for $\sigma \in S_N$ represented by $T_1,\dots,T_{N-1}$ in which $T_1$ appears exactly once by Lemma~\ref{1140pm45}. If $T_1$ does not appear in the representation of $\sigma$, then $\sigma(1)=1$ and $\sigma$ is uniquely written $\sigma = w_{N-1}\cdots w_{2}$ by the induction hypothesis for $N-1$ because the set of all permutations on $\{1,\dots, N-1\}$ is isomorphic to the set of all permutations on $\{2,\dots, N\}$.
Let $\sigma =\sigma' T_1 \sigma''$ where $\sigma'$ and $\sigma''$ are \textit{words} consisting of $T_2,\dots, T_{N-1}$. By the induction hypothesis, there exist $w''_i \in \Gamma_i,~i=2,\dots, N-1$ such that $\sigma''$ is uniquely written as $\sigma'' = w''_{N-1}\cdots w''_{2}$ so that $\sigma = \sigma'T_1w''_{N-1}\cdots w''_{2}$. By (\ref{952pm411}), $T_1$ commutes with $w''_3,\dots,w''_{N-1}$, so
\begin{gather*}
\sigma = \sigma'w''_{N-1}\cdots w''_{3}T_1w''_2.
\end{gather*}
Since $\sigma'w''_{N-1}\cdots w''_{3}$ is a \textit{word} consisting of $T_2,\dots, T_{N-1}$, it may be written uniquely as \linebreak $w'_{N-1}\cdots w'_{2}$ for some $w'_i \in \Gamma_i$, $i=2,\dots, N-1$ by the induction hypothesis. It is clear that $ T_1w''_2\in \Gamma_{1}$.
\end{proof}

\begin{Corollary}\label{1112pm44}
Let $\sigma =w_{N-1}\cdots w_1$ be an expression as in Proposition~{\rm \ref{514pm323}} and let~$\mathbf{A}_{\sigma}$ be the matrix corresponding to $\sigma$.
 Then, for all $N \geq 2$,
\begin{itemize}\itemsep=0pt
 \item [$(a)$] $ [{\mathbf{A}}_{12\cdots N}]_{21\cdots 1, 21\cdots 1} = 1 $ and $ [{\mathbf{A}}_{12\cdots N}]_{\pi, 21\cdots 1} = 0 $ if $\pi \neq 21\cdots 1$.
 \item [$(b)$] If $\sigma \neq 12\cdots N$, then $[{\mathbf{A}}_{\sigma}]_{\pi, 21\cdots 1}$ is zero or written as
 \begin{gather}\label{428am3666}
 [{\mathbf{A}}_{\sigma}]_{\pi, 21\cdots 1} = \prod_{(\beta,\alpha)}R_{\beta\alpha},
\end{gather}
where $R_{\beta\alpha}$ is one of $ S_{\beta\alpha}$, $P_{\beta\alpha}$, $Q_{\beta\alpha}$, $pT_{\beta\alpha}$, $qT_{\beta\alpha}$ and the product in~\eqref{428am3666} is taken over all inversions $(\beta,\alpha)$ in $\sigma$.
\end{itemize}
\end{Corollary}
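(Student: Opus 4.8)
The plan is to deduce Corollary~\ref{1112pm44} from Theorem~\ref{420am36} by reflecting the lattice, rather than rerunning the induction. Write $\mathcal{M}_N' = [2,\underbrace{1,\dots,1}_{N-1}]$ for the relevant multi-set and order its permutations lexicographically as in Section~\ref{454pm321}. Part~$(a)$ is immediate, since $\mathbf{A}_{12\cdots N}$ is the identity matrix and hence its column indexed by $21\cdots 1$ is the corresponding standard basis vector. For part~$(b)$, the first thing I would record is that in the lexicographic order the permutation of $\mathcal{M}_N'$ carrying the single $2$ at position $i$ occupies the $(N+1-i)$-th slot; in particular $\nu = 21\cdots 1$ (the $2$ at position~$1$) is the \emph{last} label, which is exactly the column treated in Theorem~\ref{420am36}.

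The technical heart is to establish, for $l=1,\dots,N-1$, the reflected analogue of~\eqref{400pm320}, namely
\begin{gather*}
\mathbf{T}_l^{\mathcal{M}_N'}(\beta,\alpha) = \big(S_{\beta\alpha}\mathbf{I}_{N-l-1}\big) \oplus \left[\begin{matrix} P_{\beta\alpha} & pT_{\beta\alpha}\\ qT_{\beta\alpha} & Q_{\beta\alpha}\end{matrix}\right] \oplus \big(S_{\beta\alpha}\mathbf{I}_{l-1}\big) = \mathbf{T}_{N-l}^{\mathcal{M}_N}(\beta,\alpha).
\end{gather*}
To see this I would note that $T_l$ acts nontrivially only on the two labels having the $2$ at position $l$ or $l+1$, i.e.\ on lexicographic slots $N-l$ and $N-l+1$; reading off the local species pairs $(1,2)$ and $(2,1)$ from~\eqref{625pm72443} gives the diagonal entries $P_{\beta\alpha}$ (top-left) and $Q_{\beta\alpha}$ (bottom-right) and the hopping entries $pT_{\beta\alpha}$, $qT_{\beta\alpha}$, with $S_{\beta\alpha}$ on every remaining diagonal slot. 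Next I would check that the reflection $l\mapsto N-l$ of transposition indices carries the normal form of Proposition~\ref{514pm323} to the normal form of Theorem~\ref{700pm410}: under this reflection $\Gamma_k$ is sent to (a reindexing of) $\Sigma_{N-k}$, and the product $\sigma=w_{N-1}\cdots w_1$ becomes a product $u_1\cdots u_{N-1}$ with $u_m\in\Sigma_m$, which is precisely the shape of Theorem~\ref{700pm410}.

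Combining the two observations, expanding $\sigma=w_{N-1}\cdots w_1$ into simple transpositions $\sigma = T_{i_j}(\beta_j,\alpha_j)\cdots T_{i_1}(\beta_1,\alpha_1)$ (a reduced expression, so the carried pairs $\{(\beta_k,\alpha_k)\}$ are exactly the inversions of $\sigma$) gives $\mathbf{A}_\sigma^{\mathcal{M}_N'} = \mathbf{T}_{N-i_j}^{\mathcal{M}_N}(\beta_j,\alpha_j)\cdots \mathbf{T}_{N-i_1}^{\mathcal{M}_N}(\beta_1,\alpha_1)$, whose index sequence is now of $\Sigma$-type. The proof of Theorem~\ref{420am36} uses only~\eqref{333pm322}, the block form~\eqref{400pm320}, and Lemma~\ref{715pm310}, each of which treats $\beta,\alpha$ as free parameters and concerns the last column; so that argument applies verbatim and shows the last column of $\mathbf{A}_\sigma^{\mathcal{M}_N'}$ is zero or equal to $\prod_{(\beta,\alpha)}R_{\beta\alpha}$, the product ranging over the very same carried pairs, i.e.\ over the inversions of $\sigma$. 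Since the last column is the one indexed by $\nu=21\cdots1$, this is~\eqref{428am3666}. I expect the only delicate point to be the bookkeeping of the lexicographic reversal in the block identity above — in particular verifying that $P_{\beta\alpha}$ and $Q_{\beta\alpha}$ land in the correct corners of the $2\times2$ block after the order is reversed — while the passage from $\Gamma$-form to $\Sigma$-form and the transfer of Theorem~\ref{420am36} are then routine. (Alternatively, one could mirror the proof of Theorem~\ref{420am366} directly, replacing Proposition~\ref{442pm321} by Proposition~\ref{514pm323} and working with the last column instead of the first.)
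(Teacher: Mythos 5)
Your proposal is correct and takes essentially the same route as the paper: the paper's own justification (the remark following the corollary) is precisely the identity $\mathbf{T}_{l}^{\mathcal{N}_N}(\beta,\alpha) = \mathbf{T}_{N-l}^{\mathcal{M}_N}(\beta,\alpha)$ of~\eqref{1055pm44} together with the observation that the proof of Theorem~\ref{420am36} then carries over essentially verbatim. Your extra bookkeeping --- the lexicographic reversal placing $21\cdots 1$ in the last column, and the check that the index reflection $l\mapsto N-l$ sends the $\Gamma$-normal form of Proposition~\ref{514pm323} to the $\Sigma$-normal form of Theorem~\ref{700pm410} --- simply makes explicit what the paper leaves implicit.
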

\begin{Remark}
Let $\mathcal{N}_N $ be the multi-set $[\underbrace{1,\dots,1}_{N-1},2]$. Then, we observe that
\begin{gather}\label{1055pm44}
{\mathbf{T}}_{l}^{\mathcal{N}_N}(\beta,\alpha) = (S_{\beta\alpha}\mathbf{I}_{N-l-1}) \oplus \left[
 \begin{matrix}
 P_{\beta\alpha} & pT_{\beta\alpha} \\
 qT_{\beta\alpha} & Q_{\beta\alpha} \\
 \end{matrix}
\right] \oplus (S_{\beta\alpha}\mathbf{I}_{l-1}),\qquad l=1,\dots, N-1,
\end{gather}
and $ \mathbf{T}_{l}^{\mathcal{N}_N}(\beta,\alpha) = \mathbf{T}_{N-l}^{\mathcal{M}_N}(\beta,\alpha)$. Using these properties, Corollary \ref{1112pm44} can be proved essentially in the same way as the proof Theorem~\ref{420am36}.
\end{Remark}
Simplified forms of $[\mathbf{A}_{\sigma}]_{\pi, 1\cdots 12}$ are obtained via the following method of expressing $\sigma$.
\begin{Proposition}\label{1044pm44}
Consider the following subsets of the symmetric group $S_N$:
\begin{gather*}
\begin{aligned}
&\Xi_{N-1} = \{1,T_{N-1},T_{N-2}T_{N-1},\dots, T_{1}\cdots T_{N-1}\}, \\
&\Xi_{N-2} = \{1,T_{N-2},T_{N-3}T_{N-2}, \dots, T_{1}\cdots T_{N-2}\}, \\
&\hspace{0.3cm} \vdots \\
&\Xi_{2} = \{1,T_{2}, T_{1}T_{2}\}, \\
&\Xi_{1} = \{1,T_{1}\}.
\end{aligned}
\end{gather*}
For any permutation $\sigma \in S_N$, there is a unique element
\begin{gather*}
(w_{N-1},\dots, w_{1}) \in \Xi_{N-1} \times \cdots \times \Xi_{1}
\end{gather*}
such that $\sigma = w_{N-1}\cdots w_{1}$.
\end{Proposition}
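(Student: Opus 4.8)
The plan is to prove Proposition~\ref{1044pm44} by induction on $N$, closely following the argument used for Proposition~\ref{442pm321} but with the roles of $T_1$ and $T_{N-1}$ interchanged. The key structural observation is that for $i \leq N-2$ every element of $\Xi_i$ is a word in $T_1,\dots,T_{N-2}$ only, so that $w_{N-2}\cdots w_1$ always lies in the subgroup $\langle T_1,\dots,T_{N-2}\rangle \cong S_{N-1}$ (the permutations fixing the position $N$), while the single factor $w_{N-1}\in\Xi_{N-1}$ is the only place where $T_{N-1}$ may enter. Accordingly, the induction should be driven by the variant of Kassel's lemma in which $T_{N-1}$ (rather than $T_1$) appears at most once; this is exactly \cite[Lemma~4.2]{Kassel}, quoted in the text just before Lemma~\ref{1140pm45}.

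For the base case $N=2$ we have $S_2=\{1,T_1\}=\Xi_1$, so the decomposition is trivially unique. Assuming the statement for $N-1$, I would first dispose of the case in which $\sigma$ admits an expression with no $T_{N-1}$: then $\sigma$ fixes position $N$, hence $\sigma\in\langle T_1,\dots,T_{N-2}\rangle$, and the induction hypothesis (whose factors $\Xi_{N-2},\dots,\Xi_1$ coincide with those occurring here) writes $\sigma=w_{N-2}\cdots w_1$ uniquely; one sets $w_{N-1}=1$. In the remaining case, \cite[Lemma~4.2]{Kassel} lets me write $\sigma=\sigma' T_{N-1}\sigma''$ with $\sigma'$, $\sigma''$ words in $T_1,\dots,T_{N-2}$.

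The heart of the argument is to absorb the lone $T_{N-1}$ into the leftmost factor. Applying the induction hypothesis to $\sigma'$ gives $\sigma'=w'_{N-2}\cdots w'_1$ with $w'_i\in\Xi_i$, so $\sigma=w'_{N-2}\cdots w'_1\,T_{N-1}\,\sigma''$. Since each of $w'_{N-3},\dots,w'_1$ is a word in $T_1,\dots,T_{N-3}$, all of which commute with $T_{N-1}$ by the first relation in~\eqref{952pm411}, I can slide $T_{N-1}$ to the left to obtain $\sigma=\big(w'_{N-2}T_{N-1}\big)\,w'_{N-3}\cdots w'_1\,\sigma''$. The decisive bookkeeping point, which I would verify explicitly, is that $\Xi_{N-1}\setminus\{1\}=\{u\,T_{N-1}:u\in\Xi_{N-2}\}$, so $w_{N-1}:=w'_{N-2}T_{N-1}\in\Xi_{N-1}$. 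The tail $w'_{N-3}\cdots w'_1\,\sigma''$ is again a word in $T_1,\dots,T_{N-2}$, hence by the induction hypothesis equals $w''_{N-2}\cdots w''_1$ for unique $w''_i\in\Xi_i$, giving the desired factorization $\sigma=w_{N-1}w''_{N-2}\cdots w''_1$.

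This establishes existence of the decomposition for every $\sigma$; uniqueness then follows from a cardinality count, since $\prod_{i=1}^{N-1}|\Xi_i|=\prod_{i=1}^{N-1}(i+1)=N!=|S_N|$, so the surjection $(w_{N-1},\dots,w_1)\mapsto w_{N-1}\cdots w_1$ between sets of equal finite size is a bijection. I expect the only delicate step to be the commutation bookkeeping in the inductive step---making sure that $T_{N-1}$ commutes past precisely $w'_{N-3},\dots,w'_1$ (but not past $w'_{N-2}$, which carries the non-commuting generator $T_{N-2}$) and that the resulting leftmost factor lands in $\Xi_{N-1}$; everything else is a direct transcription of the proof of Proposition~\ref{442pm321}.
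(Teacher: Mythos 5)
Your proof is correct and takes essentially the same route as the paper's: induction on $N$, reduction via \cite[Lemma~4.2]{Kassel} to the case where $T_{N-1}$ appears exactly once, writing $\sigma = \sigma' T_{N-1}\sigma''$, applying the induction hypothesis to $\sigma'$, sliding $T_{N-1}$ leftward past $w'_{N-3},\dots,w'_1$, and absorbing it as $w'_{N-2}T_{N-1}\in\Xi_{N-1}$ before re-expanding the tail by the induction hypothesis. Your explicit cardinality count $\prod_{i=1}^{N-1}|\Xi_i| = N! = |S_N|$ to convert existence into uniqueness is a small addition worth keeping, since the paper's own proof leaves that final step implicit.
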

\begin{proof}
We prove this by induction on $N$. The statement is obvious for $N=2$. Suppose that the statement holds for $N-1$. It suffices to prove for $\sigma \in S_N$ represented by $T_1,\dots,T_{N-1}$ in which $T_{N-1}$ appears exactly once by \cite[Lemma~4.2]{Kassel}. If $T_{N-1}$ does not appear in the representation of $\sigma$, then $\sigma(N)=N$ and $\sigma$ is uniquely written $\sigma = w_{N-2}\cdots w_{1}$ by the induction hypothesis for $N-1$.
Let $\sigma =\sigma' T_{N-1} \sigma''$ where $\sigma'$ and $\sigma''$ are \textit{words} consisting of $T_1,\dots, T_{N-2}$. By the induction hypothesis, there exist $w'_i \in \Xi_i,~i=1,\dots, N-2$ such that $\sigma'$ is uniquely written as $\sigma' = w'_{N-2}\cdots w'_{1}$ so that $\sigma = w'_{N-2}\cdots w'_{1}T_{N-1}\sigma''$. By (\ref{952pm411}), $T_{N-1}$ commutes with $w'_1,\dots,w'_{N-3}$, so
\begin{gather*}
\sigma = w'_{N-2}T_{N-1}w'_{N-3}\cdots w'_{1}\sigma''.
\end{gather*}
Since $w'_{N-3}\cdots w'_{1}\sigma''$ is a \textit{word} consisting of $T_1,\dots, T_{N-2}$, it may be written as $w''_{N-2}\cdots w''_{1}$ for some $w''_i \in \Xi_i$ by the induction hypothesis. It is clear that $ w'_{N-2}T_{N-1}\in \Xi_{N-1}$.
\end{proof}

\begin{Corollary}\label{1111pm44}
Let $\sigma=w_{N-1}\cdots w_1$ be an expressed as in Proposition~{\rm \ref{1044pm44}} and let $\mathbf{A}_{\sigma}$ be the matrix corresponding to $\sigma$.
Then, for all $N \geq 2$,
\begin{itemize}\itemsep=0pt
 \item [$(a)$] $ \big[{\mathbf{A}}_{12\cdots N}\big]_{1\cdots 12, 1\cdots 12} = 1 $ and $ \big[{\mathbf{A}}_{12\cdots N}\big]_{\pi, 1\cdots 12} = 0 $ if $\pi \neq 1\cdots 12$.
 \item [$(b)$] If $\sigma \neq 12\cdots N$, then $ \big[{\mathbf{A}}_{\sigma}\big]_{\pi, 1\cdots 12}$ is zero or written as
 \begin{gather}\label{428am36666}
 \big[{\mathbf{A}}_{\sigma}\big]_{\pi, 1\cdots 12} = \prod_{(\beta,\alpha)}R_{\beta\alpha}
\end{gather}
where $R_{\beta\alpha}$ is one of $S_{\beta\alpha}$, $P_{\beta\alpha}$, $Q_{\beta\alpha}$, $pT_{\beta\alpha}$, $qT_{\beta\alpha}$ and the product in~\eqref{428am36666} is taken over all inversions $(\beta,\alpha)$ in~$\sigma$.
\end{itemize}
\end{Corollary}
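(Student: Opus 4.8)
The plan is to prove Corollary \ref{1111pm44} by transcribing the proof of Theorem \ref{420am366}, with the multi-set $\mathcal{M}_N$ and its first column replaced by $\mathcal{N}_N = [\underbrace{1,\dots,1}_{N-1},2]$ and the appropriate column. First I would record which column is relevant: listing the permutations of $\mathcal{N}_N$ lexicographically, the string $1\cdots 12$ is the smallest (at the first position where any other permutation places its $2$, the string $1\cdots 12$ carries a $1$), so $1\cdots 12$ labels the first column and $[\mathbf{A}_{\sigma}]_{\pi,1\cdots 12}$ is exactly the first column of $\mathbf{A}_{\sigma}^{\mathcal{N}_N}$. Part $(a)$ is immediate because $\mathbf{A}_{12\cdots N}$ is the identity matrix. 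For part $(b)$, write $\sigma = w_{N-1}\cdots w_1$ as in Proposition~\ref{1044pm44} and split $\mathbf{A}_{\sigma}^{\mathcal{N}_N} = \mathbf{w}_{N-1}^{\mathcal{N}_N}\,(\mathbf{w}_{N-2}\cdots \mathbf{w}_1)^{\mathcal{N}_N}$.

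The two structural inputs are the block form \eqref{1055pm44} of $\mathbf{T}_l^{\mathcal{N}_N}$ together with the identity $\mathbf{T}_l^{\mathcal{N}_N}(\beta,\alpha)=\mathbf{T}_{N-l}^{\mathcal{M}_N}(\beta,\alpha)$. Since only $w_{N-1}\in\Xi_{N-1}$ can involve $T_{N-1}$, the factor $w_{N-2}\cdots w_1$ is a word in $T_1,\dots,T_{N-2}$; by \eqref{1055pm44} each matrix $\mathbf{T}_l^{\mathcal{N}_N}$ with $l\le N-2$ carries an $S_{\beta\alpha}$ in its $(1,1)$ entry and has first row and first column equal to $[\,S_{\beta\alpha}\ 0\ \cdots\ 0\,]$. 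Hence $(\mathbf{w}_{N-2}\cdots \mathbf{w}_1)^{\mathcal{N}_N} = \prod S_{\beta_i\alpha_i}\oplus \mathbf{G}_{N-1}$ for some $(N-1)\times(N-1)$ matrix $\mathbf{G}_{N-1}$, exactly as in the proof of Theorem~\ref{420am366}; in particular its first column is $[\,\prod S_{\beta_i\alpha_i}\ 0\ \cdots\ 0\,]^{\rm T}$.

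Next I would treat $\mathbf{w}_{N-1}^{\mathcal{N}_N}$. A non-identity element of $\Xi_{N-1}$ has the form $T_kT_{k+1}\cdots T_{N-1}$, so using $\mathbf{T}_l^{\mathcal{N}_N}=\mathbf{T}_{N-l}^{\mathcal{M}_N}$ its matrix equals $\mathbf{T}_{N-k}^{\mathcal{M}_N}\mathbf{T}_{N-k-1}^{\mathcal{M}_N}\cdots \mathbf{T}_{1}^{\mathcal{M}_N}$, which is precisely a product of the type $\mathbf{T}_{l}^{\mathcal{M}_N}\cdots \mathbf{T}_{1}^{\mathcal{M}_N}$ (with $l=N-k$) covered by Lemma~\ref{1048pm104}. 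Because the identity $\mathbf{T}_l^{\mathcal{N}_N}=\mathbf{T}_{N-l}^{\mathcal{M}_N}$ leaves the labels $(\beta,\alpha)$ unchanged and Lemma~\ref{1048pm104} holds for arbitrary labels, I may apply it verbatim: the first column of $\mathbf{w}_{N-1}^{\mathcal{N}_N}$ is of the factorized form $\prod R_{\beta_k\alpha_k}$ in rows $1\le i\le N-k+1$ and is zero below. Multiplying, the first column of $\mathbf{A}_{\sigma}^{\mathcal{N}_N}$ is $\prod S_{\beta_i\alpha_i}$ times the first column of $\mathbf{w}_{N-1}^{\mathcal{N}_N}$, so every entry is either zero or a single product $\prod R_{\beta\alpha}$, which is \eqref{428am36666}; since the $\Xi$-expression is reduced, its letters are in bijection with the inversions of $\sigma$, and the $S$-factors coming from $w_{N-2}\cdots w_1$ together with the $R$-factors coming from $w_{N-1}$ account for every inversion exactly once.

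The part requiring the most care is the bookkeeping around the index reflection: one must check that the ascending block $T_kT_{k+1}\cdots T_{N-1}$ is turned by $l\mapsto N-l$ into exactly the descending block $\mathbf{T}_{N-k}^{\mathcal{M}_N}\cdots \mathbf{T}_{1}^{\mathcal{M}_N}$ that Lemma~\ref{1048pm104} demands, and that the same reflection applied to $w_{N-2}\cdots w_1$ is compatible with \eqref{400pm320} so that the top-left scalar block $\prod S_{\beta_i\alpha_i}$ really does split off. The remaining point to verify, as in the $\Omega$-case, is that the $\Xi$-normal form of Proposition~\ref{1044pm44} is reduced, so that the count of $S$- and $R$-factors matches the number of inversions; this follows by the same induction used to establish Proposition~\ref{1044pm44}, since each $w_i\in\Xi_i$ is itself a reduced word and the lengths add.
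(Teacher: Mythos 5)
Your proposal is correct and takes essentially the same approach as the paper: the paper's own ``proof'' of Corollary~\ref{1111pm44} is just the remark that it follows the proof of Theorem~\ref{420am366} via the identity $\mathbf{T}_{l}^{\mathcal{N}_N}(\beta,\alpha) = \mathbf{T}_{N-l}^{\mathcal{M}_N}(\beta,\alpha)$, which is exactly your splitting $\mathbf{A}_{\sigma}^{\mathcal{N}_N}=\mathbf{w}_{N-1}^{\mathcal{N}_N}\,(\mathbf{w}_{N-2}\cdots \mathbf{w}_1)^{\mathcal{N}_N}$ with Lemma~\ref{1048pm104} applied to the index-reflected product $\mathbf{T}_{N-k}^{\mathcal{M}_N}\cdots\mathbf{T}_{1}^{\mathcal{M}_N}$. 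Your additional bookkeeping (that $1\cdots 12$ labels the first column, that the scalar block $\prod S_{\beta_i\alpha_i}$ splits off since $w_{N-2}\cdots w_1$ avoids $T_{N-1}$, and that the $\Xi$-normal form is reduced so its letters are exactly the inversions) is sound and matches what the paper assumes implicitly.
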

\begin{Remark}
 Corollary~\ref{1111pm44} can be proved essentially in the same way as the proof Theorem~\ref{420am366} by using $\mathbf{T}_{l}^{\mathcal{N}_N}(\beta,\alpha) = \mathbf{T}_{N-l}^{\mathcal{M}_N}(\beta,\alpha)$.
\end{Remark}

\section[Simplified forms of \protect{[A\_\{sigma\}]\_\{pi,nu\}}]{Simplified forms of $\boldsymbol{[\mathbf{A}_{\sigma}]_{\pi,\nu}}$}\label{920am49}

In the previous section, we discussed the existence of the factorized forms of $[\mathbf{A}_{\sigma}]_{\pi,\nu}$. In this section, we provide the explicit forms of $[\mathbf{A}_{\sigma}]_{\pi,\nu}$.
\begin{Remark}
In this section, sometimes, we will simply write any $N \times N$ sub-matrix $\mathbf{A}^{\mathcal{M}_N}$ of~$\mathbf{A}$ as just $\mathbf{A}$ for notational convenience. Also, recall that we interchangeably use the notations~$[\mathbf{A}]_{i,j}$ where $i,j=1,\dots, N$ and $[\mathbf{A}]_{\pi,\nu}$ where $\pi$ and $\nu$ are permutations of a multi-set. For example, $[\mathbf{A}]_{1,N} = [\mathbf{A}]_{12\cdots 2, 2\cdots 21}$.
\end{Remark}

\subsection[\protect{[A\_\{sigma\}]\_\{nu,nu\}}]{$\boldsymbol{[\mathbf{A}_{\sigma}]_{\nu,\nu}}$}\label{300pm106}

Theorem \ref{1013pm104} states that $[\mathbf{A}_{\sigma}]_{2\cdots 21, 2\cdots 21}$ is nonzero for each $\sigma$ and expressed as a product over all inversions in~$\sigma$.
\begin{proof}[Proof of Theorem~\ref{1013pm104}]
First, recall that $T_{N-1}$ appears at most once in any expression of $\sigma$ given by Theorem \ref{700pm410}. Suppose that $i_m = N-1$ for some integer $1\leq m \leq j$ so that
\begin{gather*}
 {T}_{i_m}(\beta_m,\alpha_m)\cdots{T}_{i_1}(\beta_1,\alpha_1)= {T}_{N-1}(\beta_m,\alpha_m)\cdots{T}_{N-m}(\beta_1,\alpha_1)=w_{N-1} \in \Sigma_{N-1}.
\end{gather*}
By Lemma \ref{715pm310} and recalling the form of (\ref{400pm320}),
\begin{align*}
[\mathbf{T}_{i_m}(\beta_m,\alpha_m)\cdots\mathbf{T}_{i_1}(\beta_1,\alpha_1) ]_{N,N} & = Q_{\beta_m\alpha_m}S_{\beta_{m-1}\alpha_{m-1}}\cdots S_{\beta_1\alpha_1}\\
&= [\mathbf{T}_{N-1}(\beta_m,\alpha_m) ]_{N,N} \cdots [\mathbf{T}_{N-m}(\beta_1,\alpha_1) ]_{N,N}.
\end{align*}
Since $\mathbf{T}_{i_j}, \dots, \mathbf{T}_{i_{m+1}}$ are not from $w_{N-1}$, the integers $i_j,\dots, i_{m+1}$ are not equal to $N-1$. Hence, each of $\mathbf{T}_{i_j}, \dots, \mathbf{T}_{i_{m+1}}$ is in the form of $\mathbf{G}_{N-1} \oplus S_{\beta\alpha}$ where $\mathbf{G}_{N-1}$ is an $(N-1) \times (N-1)$ matrix (recall the form of (\ref{400pm320})). Hence,
\begin{gather*}
[\mathbf{T}_{i_j} \cdots \mathbf{T}_{i_{m+1}}]_{N,N} = [\mathbf{T}_{i_j}]_{N,N} \cdots [\mathbf{T}_{i_{m+1}}]_{N,N} = S_{\beta_{j}\alpha_{j}}\cdots S_{\beta_{m+1}\alpha_{m+1}}
\end{gather*}
and
\begin{gather*}
[\mathbf{A}_{\sigma}]_{N,N}= \big[\mathbf{T}_{i_j} \cdots \mathbf{T}_{i_1}\big]_{N,N} = S_{\beta_{j}\alpha_{j}}\cdots S_{\beta_{m+1}\alpha_{m+1}} Q_{\beta_m\alpha_m}S_{\beta_{m-1}\alpha_{m-1}}\cdots S_{\beta_1\alpha_1}.
\end{gather*}
If $i_1,\dots,i_j \neq N-1$, then each of $\mathbf{T}_{i_j}, \dots, \mathbf{T}_{i_1}$ is in the form of $\mathbf{G}_{N-1} \oplus S_{\beta\alpha}$. Hence,
\begin{gather*}
[\mathbf{A}_{\sigma}]_{N,N}= [\mathbf{T}_{i_j} \cdots \mathbf{T}_{i_1}]_{N,N} = S_{\beta_{j}\alpha_{j}}\cdots S_{\beta_1\alpha_1}.\tag*{\qed}
\end{gather*}\renewcommand{\qed}{}
\end{proof}

The formulas for other cases of $\nu$ that are considered in this paper can be similarly obtained by using the same techniques, so we provide their formulas without proofs.
\begin{Proposition}\label{323am109}
Let $\nu=12\cdots 2$ and
\begin{gather*}
\sigma = w_{N-1}\cdots w_{1} = {T}_{i_j}(\beta_j,\alpha_j)\cdots {T}_{i_1}(\beta_1,\alpha_1)
\end{gather*}
be given as in Proposition~{\rm \ref{442pm321}}. Let
\begin{gather*}
\mathbf{A}_{\sigma}=\mathbf{T}_{i_j}(\beta_j,\alpha_j)\cdots \mathbf{T}_{i_1}(\beta_1,\alpha_1)
\end{gather*}
 be the matrix corresponding to $\sigma$. Then,
\begin{align*}
 [\mathbf{A}_{\sigma}]_{\nu, \nu} &= [\mathbf{T}_{i_j} ]_{\nu, \nu} \cdots [\mathbf{T}_{i_m} ]_{\nu, \nu} \cdots [\mathbf{T}_{i_1} ]_{\nu, \nu} \\
& = \begin{cases}
S_{\beta_{j}\alpha_{j}}\cdots S_{\beta_{m+1}\alpha_{m+1}} P_{\beta_m\alpha_m}S_{\beta_{m-1}\alpha_{m-1}}\cdots S_{\beta_1\alpha_1}&\text{if~$i_m = 1$},\\
S_{\beta_{j}\alpha_{j}}\cdots S_{\beta_1\alpha_1}&\text{if~$i_i,\dots, i_j \neq 1$}.
 \end{cases}
\end{align*}
 $($Note that $T_1$ appears at most once in any expression of $\sigma$ in Proposition~{\rm \ref{442pm321}}.$)$
\end{Proposition}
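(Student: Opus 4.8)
The plan is to run the argument of Theorem~\ref{1013pm104} in mirror image, interchanging the roles of the first and last coordinates and of $T_1$ and $T_{N-1}$. First I would identify $\nu = 12\cdots 2$ with the index $1$ in the lexicographic labelling of the permutations of $\mathcal{M}_N = [1,2,\dots,2]$ (consistent with $[\mathbf{A}]_{1,N} = [\mathbf{A}]_{12\cdots 2,\, 2\cdots 21}$), so that $[\mathbf{A}_\sigma]_{\nu,\nu} = [\mathbf{A}_\sigma]_{1,1}$. Reading the $(1,1)$ entry off the block form~\eqref{400pm320} gives
\begin{gather*}
\big[\mathbf{T}_l^{\mathcal{M}_N}(\beta,\alpha)\big]_{1,1} =
\begin{cases}
P_{\beta\alpha} & \text{if } l = 1,\\
S_{\beta\alpha} & \text{if } l \geq 2,
\end{cases}
\end{gather*}
since for $l \geq 2$ the top-left block is $S_{\beta\alpha}\mathbf{I}_{l-1}$, while for $l = 1$ the index $1$ occupies the top-left corner of the middle $2\times 2$ block.

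The core of the proof is a confinement argument showing that the $(1,1)$ entry of the product equals the product of the individual $(1,1)$ entries. Expanding
\begin{gather*}
[\mathbf{A}_\sigma]_{1,1} = \sum_{k_1,\dots,k_{j-1}} \big[\mathbf{T}_{i_j}\big]_{1,k_{j-1}}\big[\mathbf{T}_{i_{j-1}}\big]_{k_{j-1},k_{j-2}}\cdots\big[\mathbf{T}_{i_1}\big]_{k_1,1}
\end{gather*}
over intermediate indices, I would observe from~\eqref{400pm320} that the only off-diagonal entries of each $\mathbf{T}_l^{\mathcal{M}_N}$ lie at $(l,l+1)$ and $(l+1,l)$; hence the index $1$ is coupled to no index other than $2$, and this coupling occurs \emph{only} through a factor $\mathbf{T}_1$. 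Because $\sigma$ is expressed as in Proposition~\ref{442pm321}, the transposition $T_1$ appears at most once, so at most one factor $\mathbf{T}_1$ occurs in the product. A nonzero term corresponds to a path that starts and ends at index $1$; any departure from index $1$ requires one $\mathbf{T}_1$ and any subsequent return to index $1$ requires a second, which is unavailable. Consequently every nonzero term keeps $k_s = 1$ for all $s$, and $[\mathbf{A}_\sigma]_{1,1} = \prod_{s=1}^{j}\big[\mathbf{T}_{i_s}\big]_{1,1}$.

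It then remains only to evaluate this diagonal product in the two cases. If $i_1,\dots,i_j \neq 1$, every factor contributes $S_{\beta_s\alpha_s}$, giving $[\mathbf{A}_\sigma]_{\nu,\nu} = S_{\beta_j\alpha_j}\cdots S_{\beta_1\alpha_1}$. If $i_m = 1$ for the unique such $m$, the forced confinement means that even step $m$ uses the $(1,1)$ entry $P_{\beta_m\alpha_m}$ of $\mathbf{T}_1$, while every other step contributes an $S$, yielding $S_{\beta_j\alpha_j}\cdots S_{\beta_{m+1}\alpha_{m+1}}P_{\beta_m\alpha_m}S_{\beta_{m-1}\alpha_{m-1}}\cdots S_{\beta_1\alpha_1}$, exactly the claimed formula. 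The one point I would treat with care --- and the only genuine obstacle --- is the ``no return trip'' step: one must verify that the single available $\mathbf{T}_1$ cannot both carry the path away from index $1$ and later bring it back, since every arrival at index $1$ must use a $\mathbf{T}_1$. Once this is established the entire computation collapses to the diagonal product, and the two displayed formulas follow immediately.
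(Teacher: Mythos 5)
Your proof is correct, and it takes a genuinely different route from the paper's. The paper actually omits the proof of Proposition~\ref{323am109}, stating only that it follows by the same techniques as Theorem~\ref{1013pm104}; that technique is a block-matrix computation: split $\mathbf{A}_{\sigma}$ into the group of factors containing the unique $T_1$ (the contiguous run $\mathbf{w}_{N-1}=\mathbf{T}_l\cdots\mathbf{T}_1$) and the remaining factors, evaluate the relevant entries of the first group by a column lemma (the mirror of Lemma~\ref{715pm310}, cf.\ Lemma~\ref{1048pm104}), and use that every $\mathbf{T}_l$ with $l\geq 2$ has the block form $S_{\beta\alpha}\oplus\mathbf{G}_{N-1}$, so the $(1,1)$ entry of the product of the two groups factorizes. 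Your confinement argument replaces all of this: expanding the $(1,1)$ entry over intermediate indices and using only the sparsity visible in \eqref{400pm320} (off-diagonal entries solely at $(l,l+1)$ and $(l+1,l)$) together with the single occurrence of $T_1$, you force every nonzero path to remain at index $1$, because a departure from index $1$ and a later return to it are necessarily two distinct steps, each consuming its own factor $\mathbf{T}_1$ --- which settles the one point you flagged as delicate. This is more elementary and slightly more general: it yields the diagonal factorization $[\mathbf{A}_{\sigma}]_{1,1}=\prod_{s}[\mathbf{T}_{i_s}]_{1,1}$ for \emph{any} word in $T_1,\dots,T_{N-1}$ containing at most one $T_1$, with no appeal to the contiguity of $w_{N-1}$ or to column computations, and the identical argument (run at index $N$, or on $\mathcal{N}_N$ using $\mathbf{T}_l^{\mathcal{N}_N}=\mathbf{T}_{N-l}^{\mathcal{M}_N}$) proves Theorem~\ref{1013pm104} and Propositions~\ref{1122pm104} and~\ref{101149pm104} uniformly. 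What the paper's heavier machinery buys instead is information your argument does not produce: the column lemmas simultaneously give the off-diagonal entries $[\mathbf{A}_{\sigma}]_{\pi,\nu}$ with $\pi\neq\nu$, which are needed for Theorems~\ref{420am36} and~\ref{312am109}.
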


\begin{Proposition}\label{1122pm104}
Let $\nu=21\cdots 1$ and
\begin{gather*}
\sigma = w_{N-1}\cdots w_{1} = {T}_{i_j}(\beta_j,\alpha_j)\cdots {T}_{i_1}(\beta_1,\alpha_1)
\end{gather*}
be given as in Proposition~{\rm \ref{514pm323}}. Let
\begin{gather*}
\mathbf{A}_{\sigma}=\mathbf{T}_{i_j}(\beta_j,\alpha_j)\cdots \mathbf{T}_{i_1}(\beta_1,\alpha_1)
\end{gather*}
 be the matrix corresponding to $\sigma$. Then,
\begin{align*}
 [\mathbf{A}_{\sigma}]_{\nu, \nu} &= [\mathbf{T}_{i_j}]_{\nu, \nu} \cdots [\mathbf{T}_{i_m}]_{\nu, \nu} \cdots [\mathbf{T}_{i_1}]_{\nu, \nu} \\
& = \begin{cases}
S_{\beta_{j}\alpha_{j}}\cdots S_{\beta_{m+1}\alpha_{m+1}} Q_{\beta_m\alpha_m}S_{\beta_{m-1}\alpha_{m-1}}\cdots S_{\beta_1\alpha_1}&\text{if~$i_m = 1$},\\
S_{\beta_{j}\alpha_{j}}\cdots S_{\beta_1\alpha_1}&\text{if~$i_i,\dots, i_j \neq 1$}.
 \end{cases}
\end{align*}
 $($Note that $T_1$ appears at most once in any expression of $\sigma$ in Proposition~{\rm \ref{514pm323}}.$)$
\end{Proposition}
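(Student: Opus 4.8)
The plan is to mirror the proof of Theorem~\ref{1013pm104}, exchanging the role of $T_{N-1}$ with that of $T_1$ and replacing the multi-set $\mathcal{M}_N$ by $\mathcal{N}_N = [\underbrace{1,\dots,1}_{N-1},2]$. First I would note that $\nu = 21\cdots 1$ is the lexicographically largest permutation of $\mathcal{N}_N$ (it is the only one beginning with $2$), so it occupies the last index in the block $\mathbf{A}_\sigma^{\mathcal{N}_N}$; hence $[\mathbf{A}_\sigma]_{\nu,\nu} = \big[\mathbf{A}_\sigma^{\mathcal{N}_N}\big]_{N,N}$, exactly as $2\cdots 21$ sits at index $N$ in the $\mathcal{M}_N$ case.

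Next I would extract from~\eqref{1055pm44} the two structural facts that drive the computation. For $l \geq 2$ the $2\times 2$ block of $\mathbf{T}_l^{\mathcal{N}_N}(\beta,\alpha)$ sits in coordinates $(N-l,N-l+1)$, which avoid the last coordinate, so $\mathbf{T}_l^{\mathcal{N}_N}(\beta,\alpha) = \mathbf{G}_{N-1}\oplus S_{\beta\alpha}$ for some $(N-1)\times(N-1)$ matrix $\mathbf{G}_{N-1}$; in particular row $N$ and column $N$ vanish off the diagonal, with $\big[\mathbf{T}_l^{\mathcal{N}_N}\big]_{N,N} = S_{\beta\alpha}$. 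For $l=1$ the block occupies coordinates $(N-1,N)$, so $\big[\mathbf{T}_1^{\mathcal{N}_N}\big]_{N,N} = Q_{\beta\alpha}$ and the only other nonzero entry of row $N$ is $\big[\mathbf{T}_1^{\mathcal{N}_N}\big]_{N,N-1} = qT_{\beta\alpha}$. I would then recall from Proposition~\ref{514pm323} that $T_1$ can only arise from the factor $w_1 \in \Gamma_1$ and occurs there at most once, so $T_1$ appears at most once in $\sigma$ while every other factor has index $\geq 2$.

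The argument then splits into two cases. If no factor equals $T_1$, every $\mathbf{T}_{i_s}^{\mathcal{N}_N}$ has the form $\mathbf{G}_{N-1}\oplus S_{\beta_s\alpha_s}$, so the product is block diagonal with the last coordinate split off and $\big[\mathbf{A}_\sigma^{\mathcal{N}_N}\big]_{N,N} = S_{\beta_j\alpha_j}\cdots S_{\beta_1\alpha_1}$. If $i_m = 1$ for some $m$, I would write $\mathbf{A}_\sigma^{\mathcal{N}_N} = M_1\,\mathbf{T}_1^{\mathcal{N}_N}(\beta_m,\alpha_m)\,M_3$ with $M_1 = \mathbf{T}_{i_j}^{\mathcal{N}_N}\cdots\mathbf{T}_{i_{m+1}}^{\mathcal{N}_N}$ and $M_3 = \mathbf{T}_{i_{m-1}}^{\mathcal{N}_N}\cdots\mathbf{T}_{i_1}^{\mathcal{N}_N}$, both products of factors of index $\geq 2$. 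By the first case these are block diagonal, $M_1 = \mathbf{H}_{N-1}\oplus (S_{\beta_j\alpha_j}\cdots S_{\beta_{m+1}\alpha_{m+1}})$ and $M_3 = \mathbf{K}_{N-1}\oplus (S_{\beta_{m-1}\alpha_{m-1}}\cdots S_{\beta_1\alpha_1})$, so that $[M_1]_{N,k}=0$ for $k<N$ and $[M_3]_{N-1,N}=0$.

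The one point that needs genuine checking — and the only obstacle worth naming — is that the off-diagonal coupling $qT_{\beta_m\alpha_m}$ inside $\mathbf{T}_1^{\mathcal{N}_N}$ contributes nothing. Expanding $\big[\mathbf{A}_\sigma^{\mathcal{N}_N}\big]_{N,N} = \sum_{k,k'}[M_1]_{N,k}\big[\mathbf{T}_1^{\mathcal{N}_N}\big]_{k,k'}[M_3]_{k',N}$, the first factor forces $k=N$, and $\big[\mathbf{T}_1^{\mathcal{N}_N}\big]_{N,k'}$ is nonzero only for $k'=N-1$ (value $qT_{\beta_m\alpha_m}$) or $k'=N$ (value $Q_{\beta_m\alpha_m}$); since $[M_3]_{N-1,N}=0$ only the $k'=N$ term survives, giving $\big[\mathbf{A}_\sigma^{\mathcal{N}_N}\big]_{N,N} = \big(S_{\beta_j\alpha_j}\cdots S_{\beta_{m+1}\alpha_{m+1}}\big)Q_{\beta_m\alpha_m}\big(S_{\beta_{m-1}\alpha_{m-1}}\cdots S_{\beta_1\alpha_1}\big)$. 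Reading this product termwise recovers the asserted factorization $[\mathbf{A}_\sigma]_{\nu,\nu} = [\mathbf{T}_{i_j}]_{\nu,\nu}\cdots[\mathbf{T}_{i_1}]_{\nu,\nu}$, since $[\mathbf{T}_{i_s}]_{\nu,\nu}$ equals $S_{\beta_s\alpha_s}$ when $i_s\geq 2$ and $Q_{\beta_s\alpha_s}$ when $i_s=1$. As a shortcut, the identity $\mathbf{T}_l^{\mathcal{N}_N} = \mathbf{T}_{N-l}^{\mathcal{M}_N}$ recorded in the remark after Corollary~\ref{1112pm44} would let one transcribe the $\mathcal{M}_N$ proof of Theorem~\ref{1013pm104} almost verbatim, the role of $T_{N-1}$ being played by $T_1$.
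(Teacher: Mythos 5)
Your proof is correct and takes essentially the approach the paper intends: the paper states Proposition~\ref{1122pm104} without proof, remarking that it follows by the same techniques as Theorem~\ref{1013pm104}, and your argument is exactly that transcription to the multi-set $\mathcal{N}_N$ --- using the block structure \eqref{1055pm44}, the fact that $T_1$ occurs at most once and only in $w_1$, and the block-diagonality (of the form $\mathbf{G}_{N-1}\oplus S_{\beta\alpha}$) of every factor of index $\geq 2$ to isolate the single $Q_{\beta_m\alpha_m}$. Your minor variation --- splitting off $\mathbf{T}_1^{\mathcal{N}_N}(\beta_m,\alpha_m)$ as a lone middle factor and killing the $qT_{\beta_m\alpha_m}$ cross term via $[M_3]_{N-1,N}=0$, rather than first computing the full last column of $\mathbf{w}_1$ as in the analogue of Lemma~\ref{715pm310} --- is computationally identical and equally valid.
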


\begin{Proposition}\label{101149pm104}
Let $\nu=1\cdots 12$ and
\begin{gather*}
\sigma = w_{N-1}\cdots w_{1} = {T}_{i_j}(\beta_j,\alpha_j)\cdots {T}_{i_1}(\beta_1,\alpha_1)
\end{gather*}
be given as in Proposition~{\rm \ref{1044pm44}}. Let
\begin{gather*}
\mathbf{A}_{\sigma}=\mathbf{T}_{i_j}(\beta_j,\alpha_j)\cdots \mathbf{T}_{i_1}(\beta_1,\alpha_1)
\end{gather*}
 be the matrix corresponding to $\sigma$. Then,
\begin{align*}
 [\mathbf{A}_{\sigma}]_{\nu, \nu} &= [\mathbf{T}_{i_j} ]_{\nu, \nu} \cdots [\mathbf{T}_{i_m} ]_{\nu, \nu} \cdots [\mathbf{T}_{i_1} ]_{\nu, \nu} \\
& = \begin{cases}
S_{\beta_{j}\alpha_{j}}\cdots S_{\beta_{m+1}\alpha_{m+1}} P_{\beta_m\alpha_m}S_{\beta_{m-1}\alpha_{m-1}}\cdots S_{\beta_1\alpha_1}&\text{if~$i_m = N-1$},\\
S_{\beta_{j}\alpha_{j}}\cdots S_{\beta_1\alpha_1}&\text{if~$i_i,\dots, i_j \neq 1$}.
 \end{cases}
\end{align*}
 $($Note that $T_{N-1}$ appears at most once in any expression of $\sigma$ in Proposition~{\rm \ref{1044pm44}}.$)$
\end{Proposition}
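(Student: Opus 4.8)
The plan is to mirror the proof of Theorem~\ref{1013pm104}, letting the first coordinate play the role that the last coordinate played there. Under the lexicographic ordering of the permutations of $\mathcal{N}_N=[\underbrace{1,\dots,1}_{N-1},2]$, the word $\nu=1\cdots 12$ is the smallest, so it labels the first row and the first column of the block $\mathbf{A}_\sigma^{\mathcal{N}_N}$; thus $[\mathbf{A}_\sigma]_{\nu,\nu}=\big[\mathbf{A}_\sigma^{\mathcal{N}_N}\big]_{1,1}$. By~\eqref{1055pm44} the $2\times 2$ block of $\mathbf{T}_l^{\mathcal{N}_N}(\beta,\alpha)$ occupies rows and columns $N-l$ and $N-l+1$, so it touches the first coordinate exactly when $l=N-1$. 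Hence for each $l=1,\dots,N-2$ one has the decoupling $\mathbf{T}_l^{\mathcal{N}_N}(\beta,\alpha)=S_{\beta\alpha}\oplus\mathbf{G}_{N-1}$, so that the first row and the first column of $\mathbf{T}_l^{\mathcal{N}_N}$ are both $S_{\beta\alpha}\mathbf{e}_1$, while $\big[\mathbf{T}_{N-1}^{\mathcal{N}_N}(\beta,\alpha)\big]_{1,1}=P_{\beta\alpha}$.

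By Proposition~\ref{1044pm44}, $T_{N-1}$ appears at most once among $T_{i_1},\dots,T_{i_j}$. If it does not appear, every factor decouples the first coordinate, so $\mathbf{A}_\sigma^{\mathcal{N}_N}=(S_{\beta_j\alpha_j}\cdots S_{\beta_1\alpha_1})\oplus(\cdots)$ and $\big[\mathbf{A}_\sigma^{\mathcal{N}_N}\big]_{1,1}=S_{\beta_j\alpha_j}\cdots S_{\beta_1\alpha_1}$, which is the second case. If $i_m=N-1$ for the unique such $m$, I would split
\begin{gather*}
\mathbf{A}_\sigma=\underbrace{(\mathbf{T}_{i_j}\cdots\mathbf{T}_{i_{m+1}})}_{\mathbf{L}}\,\mathbf{T}_{i_m}\,\underbrace{(\mathbf{T}_{i_{m-1}}\cdots\mathbf{T}_{i_1})}_{\mathbf{R}},
\end{gather*}
where every factor of $\mathbf{L}$ and of $\mathbf{R}$ carries an index $\neq N-1$ and therefore decouples the first coordinate. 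Multiplying out, the first row of $\mathbf{L}$ is $\big(\prod_{k=m+1}^{j}S_{\beta_k\alpha_k}\big)\mathbf{e}_1^{\rm T}$ and the first column of $\mathbf{R}$ is $\big(\prod_{k=1}^{m-1}S_{\beta_k\alpha_k}\big)\mathbf{e}_1$.

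The entry then collapses to three scalars. Since the first row of $\mathbf{L}$ is supported on coordinate $1$,
\begin{gather*}
\big[\mathbf{A}_\sigma^{\mathcal{N}_N}\big]_{1,1}=\Big(\prod_{k=m+1}^{j}S_{\beta_k\alpha_k}\Big)\big[\mathbf{T}_{i_m}\mathbf{R}\big]_{1,1},
\end{gather*}
and since the first column of $\mathbf{R}$ is supported on coordinate $1$, only the $(1,1)$ entry $P_{\beta_m\alpha_m}$ of $\mathbf{T}_{i_m}=\mathbf{T}_{N-1}^{\mathcal{N}_N}$ contributes, giving $\big[\mathbf{T}_{i_m}\mathbf{R}\big]_{1,1}=P_{\beta_m\alpha_m}\prod_{k=1}^{m-1}S_{\beta_k\alpha_k}$. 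This produces the first case. Reading off the surviving factors simultaneously establishes the first displayed identity $[\mathbf{A}_\sigma]_{\nu,\nu}=[\mathbf{T}_{i_j}]_{\nu,\nu}\cdots[\mathbf{T}_{i_1}]_{\nu,\nu}$, because each surviving scalar is exactly the $(1,1)$ diagonal entry of the corresponding $\mathbf{T}_{i_k}^{\mathcal{N}_N}$.

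The one point needing care is the vanishing of the off-diagonal contributions when $i_m=N-1$: a priori the entries $pT_{\beta_m\alpha_m}$ and $qT_{\beta_m\alpha_m}$ of $\mathbf{T}_{N-1}^{\mathcal{N}_N}$ could leak into the $(1,1)$ entry. They are killed precisely because the first column of $\mathbf{R}$ is a multiple of $\mathbf{e}_1$, which in turn relies on $T_{N-1}$ appearing at most once so that no factor of $\mathbf{R}$ can restore weight on coordinate $2$. Establishing this decoupling is the main, albeit routine, obstacle; everything else is bookkeeping identical in spirit to the proof of Theorem~\ref{1013pm104}.
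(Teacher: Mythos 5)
Your proof is correct and follows essentially the route the paper intends: the paper states Proposition~\ref{101149pm104} without proof, remarking that it follows by the same techniques as Theorem~\ref{1013pm104}, and that is exactly what you carry out, with the first coordinate of the $\mathcal{N}_N$-block (labelled by $1\cdots 12$) playing the role that the last coordinate (labelled by $2\cdots 21$) plays there, using the direct-sum form \eqref{1055pm44} and the fact that $T_{N-1}$ occurs at most once in the expression from Proposition~\ref{1044pm44}. Your only deviation is cosmetic and if anything a streamlining: rather than proving an analogue of Lemma~\ref{715pm310} for the whole group containing $T_{N-1}$, you single out the factor $\mathbf{T}_{N-1}$ and sandwich it between products whose first row and first column are supported on coordinate~$1$, which yields the same factorization.
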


\subsection[Case \protect{[A\_\{sigma\}]\_\{pi,nu\}=0} where pi not= nu = 2...21]{Case $\boldsymbol{[\mathbf{A}_{\sigma}]_{\pi,\nu}=0}$ where $\boldsymbol{\pi\neq \nu = 2\cdots 21}$}\label{326am109}

If $\pi\neq \nu= 2\cdots 21$, then $[\mathbf{A}_{\sigma}]_{\pi,\nu}$ where $\pi\neq \nu$ is either zero or a product of some factors where the product is taken over all inversions in $\sigma$. First, let us investigate when it can be zero.
Let $\sigma= w_1\cdots w_{N-1}$ be expressed as in Theorem \ref{700pm410} and let
\begin{gather*}
\mathbf{w}_m = \mathbf{T}_m(\beta_m,\alpha_m)\mathbf{T}_{m-1}(\beta_{m-1},\alpha_{m-1}) \cdots \mathbf{T}_l(\beta_{l},\alpha_{l})
\end{gather*}
 if $w_m = T_m(\beta_m,\alpha_m)T_{m-1}(\beta_{m-1},\alpha_{m-1})\cdots T_l(\beta_{l},\alpha_{l})$, and let $\mathbf{w}_m$ be the identity matrix if \mbox{$w_m=1$}. Recalling the form of the matrix in (\ref{400pm320}), we see that $\mathbf{w}_m$ is in the form of $\mathbf{G}_{m+1}\oplus \mathbf{D}_{N-m-1}$ where $\mathbf{G}_{m+1}$ is an $(m+1) \times (m+1)$ matrix and $\mathbf{D}_{N-m-1}$ is an $(N-m-1)\times (N-m-1)$ diagonal matrix.
An immediate consequence of Lemma~\ref{715pm310} is as follows:
\begin{Lemma}\label{213am106}
The $(m+1)^{{\rm st}}$ column of $\mathbf{w}_m \neq 1$ is given by
\begin{gather*}
[\mathbf{w}_m]_{i,m+1}=
\begin{cases}
0&\text{if $1\leq i \leq m-1$}, \\
\displaystyle pT_{\beta_m\alpha_m}\prod_{k=1}^{m-l} S_{\beta_{m-k}\alpha_{m-k}}&\text{if $i =m$},\\
\displaystyle Q_{\beta_m\alpha_m}\prod_{k=1}^{m-l} S_{\beta_{m-k}\alpha_{m-k}}&\text{if $i = m+1$},\\
0&\text{if $ m+2\leq i \leq N$}.
\end{cases}
\end{gather*}
\end{Lemma}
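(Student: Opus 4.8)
The plan is to read the statement directly off Lemma~\ref{715pm310}, applied at size $m+1$ in place of $N$. The key observation is that $w_m = T_m(\beta_m,\alpha_m)T_{m-1}(\beta_{m-1},\alpha_{m-1})\cdots T_l(\beta_l,\alpha_l)$ involves only simple transpositions whose indices lie between $l$ and $m$, so $\mathbf{w}_m$ acts nontrivially only on the coordinates $l,\dots,m+1$ of the $\mathcal{M}_N$ block. Consequently the $(m+1)^{\rm st}$ column that we want is exactly the \emph{last} column of the top-left $(m+1)\times(m+1)$ block $\mathbf{G}_{m+1}$ of $\mathbf{w}_m$, and Lemma~\ref{715pm310} is precisely a formula for such a last column.

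First I would invoke the block decomposition already recorded just above the statement, namely that iterating~(\ref{333pm322}) gives $\mathbf{w}_m^{\mathcal{M}_N} = \mathbf{G}_{m+1}\oplus \mathbf{D}_{N-m-1}$ with $\mathbf{D}_{N-m-1}$ diagonal. Since the column index $m+1$ falls inside the $\mathbf{G}_{m+1}$ block, all entries in rows $m+2,\dots,N$ of this column vanish, which accounts for the first and last cases of the claim. What remains is to identify $\mathbf{G}_{m+1}$ explicitly: peeling the trailing $S$-summands off each factor via~(\ref{333pm322}) (legitimate because every index appearing in $\mathbf{w}_m$ is at most $m$, so it stays below the top coordinate at each stage of the reduction) yields $\mathbf{G}_{m+1} = \mathbf{T}_m^{\mathcal{M}_{m+1}}(\beta_m,\alpha_m)\cdots \mathbf{T}_l^{\mathcal{M}_{m+1}}(\beta_l,\alpha_l)$, with the degenerate case $m=N-1$ requiring no peeling at all since then $\mathbf{G}_{m+1}=\mathbf{w}_{N-1}^{\mathcal{M}_N}$.

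It then remains to read off the first $m+1$ entries. Applying Lemma~\ref{715pm310} with $N$ replaced by $m+1$ to $\mathbf{T}_m^{\mathcal{M}_{m+1}}(\beta_m,\alpha_m)\cdots \mathbf{T}_l^{\mathcal{M}_{m+1}}(\beta_l,\alpha_l)$ shows that its last column has zeros in rows $1,\dots,m-1$, the entry $pT_{\beta_m\alpha_m}S_{\beta_{m-1}\alpha_{m-1}}\cdots S_{\beta_l\alpha_l}$ in row $m$, and $Q_{\beta_m\alpha_m}S_{\beta_{m-1}\alpha_{m-1}}\cdots S_{\beta_l\alpha_l}$ in row $m+1$. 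Rewriting $S_{\beta_{m-1}\alpha_{m-1}}\cdots S_{\beta_l\alpha_l} = \prod_{k=1}^{m-l}S_{\beta_{m-k}\alpha_{m-k}}$ matches the asserted formula verbatim. There is no genuine obstacle here; the only point requiring care is the index bookkeeping, since the subscripts on $\beta,\alpha$ in $w_m$ run $m,m-1,\dots,l$ in lockstep with the transposition indices, which is exactly the labelling convention under which Lemma~\ref{715pm310} was stated, so the $S$-product transfers without any relabelling. The entire content is contained in Lemma~\ref{715pm310}, and the work is the routine verification that the $\mathcal{M}_N$ block splits as $\mathbf{G}_{m+1}\oplus\mathbf{D}_{N-m-1}$ so that the lemma may be invoked at the smaller size.
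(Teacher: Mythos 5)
Your proof is correct and takes essentially the same route as the paper, which presents this lemma as an immediate consequence of Lemma~\ref{715pm310} combined with the block decomposition $\mathbf{w}_m = \mathbf{G}_{m+1}\oplus\mathbf{D}_{N-m-1}$ recorded just before the statement. Your peeling of the trailing $S$-summands via~(\ref{333pm322}) to identify $\mathbf{G}_{m+1}$ and then invoking Lemma~\ref{715pm310} at size $m+1$ merely makes explicit the reduction the paper leaves implicit.
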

Since we are interested in the last column of $\mathbf{A}_{\sigma}$, that is, $[\mathbf{w}_1\cdots \mathbf{w}_{N-1}]_{i,N}$, we separate $\mathbf{w}_1\cdots \mathbf{w}_{N-1} $ into two parts $(\mathbf{w}_1\cdots)(\cdots \mathbf{w}_{N-1})$ and consider the last column of the second part. The following result tells that the last column of the second part is in the form of
\begin{gather*}
\left[
 \begin{matrix}
 0 \\
 \vdots \\
 0 \\
 * \\
 \vdots \\
 *
 \end{matrix}
\right].
\end{gather*}
\begin{Lemma}\label{304pm105}
Suppose that $\mathbf{w}_{N-k},\dots, \mathbf{w}_{N-1} \neq 1$ for some $1\leq k \leq N-2$. Then,
 \begin{gather*}
 \big[\mathbf{w}_{N-k}\cdots \mathbf{w}_{N-1} \big]_{i,N} = 0
 \end{gather*}
 for all $1\leq i \leq N -k-1$, and $ \big[\mathbf{w}_{N-k}\cdots \mathbf{w}_{N-1} \big]_{i,N} \neq 0$ for $N-k \leq i\leq N$.
\end{Lemma}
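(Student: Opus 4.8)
The plan is to prove the statement by induction on $k$, peeling off the leftmost factor $\mathbf{w}_{N-k}$ and tracking how it transforms the last column of the shorter product $\mathbf{w}_{N-k+1}\cdots\mathbf{w}_{N-1}$. Throughout I will use two structural facts already established above: first, that a nonidentity $\mathbf{w}_m$ has the block form $\mathbf{G}_{m+1}\oplus \mathbf{D}_{N-m-1}$ with $\mathbf{D}_{N-m-1}$ diagonal (in fact a nonzero product of $S_{\beta\alpha}$ factors times the identity, so that $\mathbf{w}_m$ acts as a single nonzero scalar on each of the rows and columns $m+2,\dots,N$); and second, Lemma~\ref{213am106}, which says that the only nonzero entries of the $(m+1)^{\rm st}$ column of $\mathbf{w}_m$ sit in rows $m$ and $m+1$, equal to $pT_{\beta_m\alpha_m}\prod S$ and $Q_{\beta_m\alpha_m}\prod S$ respectively.

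For the base case $k=1$, the product is just $\mathbf{w}_{N-1}$, and its last column is precisely the $(m+1)^{\rm st}$ column of $\mathbf{w}_m$ with $m=N-1$. Lemma~\ref{213am106} then gives zeros in rows $1,\dots,N-2$ and nonzero entries in rows $N-1$ and $N$, which is exactly the claimed pattern for $k=1$.

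For the inductive step I would let $u$ denote the last column of $\mathbf{w}_{N-k+1}\cdots\mathbf{w}_{N-1}$ and invoke the induction hypothesis that $u_i=0$ for $i\le N-k$ and $u_i\neq 0$ for $i\ge N-k+1$. Setting $m=N-k$, I apply $\mathbf{w}_m=\mathbf{G}_{m+1}\oplus \mathbf{D}_{N-m-1}$ to $u$. Because $u$ is supported on $\{m+1,\dots,N\}$, the only coordinate of $u$ lying in the range of the $\mathbf{G}_{m+1}$ block is $u_{m+1}$; hence, by block-diagonality, the top $m+1$ entries of $\mathbf{w}_m u$ equal $u_{m+1}$ times the $(m+1)^{\rm st}$ column of $\mathbf{w}_m$, which by Lemma~\ref{213am106} vanishes above row $m$ and equals $u_{m+1}\,pT_{\beta_m\alpha_m}\prod S$ in row $m$ and $u_{m+1}\,Q_{\beta_m\alpha_m}\prod S$ in row $m+1$. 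Meanwhile the diagonal block $\mathbf{D}_{N-m-1}$ simply rescales $u_{m+2},\dots,u_N$ by a nonzero scalar, keeping them in place. This produces zeros in rows $1,\dots,m-1=N-k-1$ and nonzero entries in rows $m,\dots,N$, i.e.\ rows $N-k,\dots,N$, completing the induction.

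The one point requiring care — and the main (if mild) obstacle — is the nonvanishing assertion. The clean feature here is that each surviving entry is a \emph{single} product of the factors $S_{\beta\alpha}$, $pT_{\beta\alpha}$, $Q_{\beta\alpha}$ with no competing summand, so there is no cancellation to rule out; it therefore suffices to observe that each of these is a nonzero rational function in the $\xi$ variables. This holds because $S_{\beta\alpha}$ and $Q_{\beta\alpha}$ have numerators that do not vanish identically, and $T_{\beta\alpha}=(\xi_\beta-\xi_\alpha)/(\cdots)$ is nonzero precisely because $\beta\neq\alpha$ for an inversion $(\beta,\alpha)$, where $\beta>\alpha$. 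The hypothesis $\mathbf{w}_{N-k},\dots,\mathbf{w}_{N-1}\neq 1$ is used exactly to guarantee that each peeled-off factor genuinely contributes a nonzero $pT$, $Q$, or $S$ entry, rather than collapsing the support of the column prematurely.
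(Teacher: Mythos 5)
Your proof is correct and takes essentially the same route as the paper's: the base case $k=1$ is Lemma~\ref{715pm310} (equivalently Lemma~\ref{213am106}), and the inductive step peels off the leftmost factor $\mathbf{w}_{N-k}=\mathbf{G}_{m+1}\oplus\mathbf{D}_{N-m-1}$ and applies its $(m+1)^{\rm st}$ column to the supported part of the shorter product's last column, which is precisely the ``repeating this procedure'' step the paper leaves informal. Your explicit induction on $k$ and the observation that no cancellation can occur (each surviving entry being a single product of nonzero rational functions) simply make the paper's argument more precise.
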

\begin{proof}
If $k=1$, the statement is the same as Lemma~\ref{715pm310}. Let $2\leq k \leq N-2$. Noting that $\mathbf{w}_{N-2}$ is in the form of $\mathbf{G}_{N-1} \oplus S$ where $\mathbf{G}_{N-1} $ is an $(N-1) \times (N-1)$ matrix and $S$ is a nonzero scalar, and applying Lemma~\ref{715pm310} to $\mathbf{G}_{N-1} $, we obtain $\big[\mathbf{w}_{N-2}\big]_{i,N-1} = 0$ for all $i \neq N-1,N-2$, and $[\mathbf{w}_{N-2}]_{i,N} = 0$ for all $i \neq N$. Hence, computing $\mathbf{w}_{N-2} \mathbf{w}_{N-1}$ directly, we obtain
 \begin{gather*}
[\mathbf{w}_{N-2} \mathbf{w}_{N-1}]_{i,N} = 0
 \end{gather*}
 for $1\leq i\leq N-3$ and $[\mathbf{w}_{N-2} \mathbf{w}_{N-1}]_{i,N} \neq 0$ for $i=N-2,N-1,N$. Repeating this procedure, we obtain the required result.
\end{proof}

Now, we prove Theorem \ref{302pm106}. Recall that we use the convention about sub-matrices introduced in the beginning of this section.
\begin{proof}[Proof of Theorem~\ref{302pm106}]
Let $l$ be the largest integer such that $\mathbf{w}_l = 1$ and
\begin{gather*}
\mathbf{A}_{\sigma} = \mathbf{w}_{1}\cdots \mathbf{w}_{l-1}\mathbf{w}_{l+1}\cdots \mathbf{w}_{N-1}.
\end{gather*}
By Lemma \ref{304pm105}, $[\mathbf{w}_{l+1}\cdots \mathbf{w}_{N-1}]_{i,N} = 0 $ for $1\leq i\leq l$ and $\big[\mathbf{w}_{l+1}\cdots \mathbf{w}_{N-1}\big]_{i,N} \neq 0$
for $l+1\leq i \leq N$. Recalling the form of the matrix in (\ref{400pm320}), we see that all matrices $\mathbf{w}_{1},\dots, \mathbf{w}_{l-1}$ are in the form of $\mathbf{G}_l \oplus \mathbf{D}_{N-l}$ where $\mathbf{G}_l$ is a $l \times l$ matrix and $\mathbf{D}_{N-l}$ is an $(N-l) \times (N-l)$ diagonal matrix with some nonzero diagonal entries. Hence, $[(\mathbf{w}_{1}\cdots \mathbf{w}_{l-1})(\mathbf{w}_{l+1}\cdots \mathbf{w}_{N-1})]_{i,N}$ is zero for $1\leq i\leq l$ and nonzero for $l+1 \leq i \leq N$. Now, suppose that $w_i \neq 1$ for all $i$ so that $\mathbf{A}_{\sigma} = \mathbf{w}_{1}\cdots \mathbf{w}_{N-1}$. By Lemma~\ref{304pm105}, $[\mathbf{w}_{2}\cdots \mathbf{w}_{N-1}]_{1,N}$ is zero and $[\mathbf{w}_{2}\cdots \mathbf{w}_{N-1}]_{i,N} \neq 0$ for all $i\neq 1$, and we note that
\begin{gather*}
\mathbf{w}_1= \mathbf{T}_1(\beta,\alpha) = \left[
 \begin{matrix}
 P_{\beta\alpha} & pT_{\beta\alpha} \\
 qT_{\beta\alpha} & Q_{\beta\alpha}
 \end{matrix}
\right] \oplus \mathbf{D}_{N-2}
\end{gather*}
 for some $(\beta,\alpha)$. Hence, $\big[\mathbf{w}_{1}\cdots \mathbf{w}_{N-1}\big]_{i,N} \neq 0$ for all $i$.
\end{proof}

\subsection[Case \protect{[A\_\{sigma\}]\_\{pi,nu\} not= 0} where pi not= nu = 2...21]{Case $\boldsymbol{[\mathbf{A}_{\sigma}]_{\pi,\nu}\neq 0}$ where $\boldsymbol{\pi\neq \nu = 2\cdots 21}$}\label{329am109}

Lemma \ref{304pm105} stated that $[\mathbf{w}_{l+1}\cdots \mathbf{w}_{N-1}]_{i,N} = 0$ for $i = 1,\dots, l$ and $[\mathbf{w}_{l+1}\cdots \mathbf{w}_{N-1} ]_{i,N} \neq 0$ for $i = l+1,\dots, N$. Now, we give the formulas for these nonzero terms.
\begin{Theorem}\label{427pm106}
Let $\sigma = w_{l+1}\cdots w_{N-1}$ be expressed as in Theorem~{\rm \ref{700pm410}} with $w_{l+1}, \dots, w_{N-1} \neq 1$, and let
\begin{gather*}
\mathbf{A}_{\sigma} = \mathbf{w}_{l+1}\cdots\mathbf{w}_{N-1}
\end{gather*}
be the matrix corresponding to $\sigma$.
Then,
\begin{gather}\label{325am106}
[\mathbf{A}_{\sigma}]_{i,N} = \prod_{k=l+1}^{i-1}[\mathbf{w}_k]_{i,i}\prod_{k=i}^{N-1}[\mathbf{w}_k]_{k,k+1}
\end{gather}
for $i=l+1,\dots, N-1$.
\end{Theorem}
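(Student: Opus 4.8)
The plan is to compute the last column of $\mathbf{A}_\sigma = \mathbf{w}_{l+1}\cdots\mathbf{w}_{N-1}$ by peeling off the factors from the right and proving a closed formula for the partial products by downward induction. Write $\mathbf{V}^{(m)}$ for the $N$th column of $\mathbf{w}_m\cdots\mathbf{w}_{N-1}$, so that $\mathbf{V}^{(m)} = \mathbf{w}_m\mathbf{V}^{(m+1)}$ and $[\mathbf{A}_\sigma]_{i,N} = [\mathbf{V}^{(l+1)}]_i$. The claim I would prove is that, for every $m$ with $l+1\le m\le N-1$, one has $[\mathbf{V}^{(m)}]_i = 0$ for $i<m$ and
\begin{gather*}
[\mathbf{V}^{(m)}]_i = \prod_{k=m}^{i-1}[\mathbf{w}_k]_{i,i}\prod_{k=i}^{N-1}[\mathbf{w}_k]_{k,k+1} \qquad \text{for } m\le i\le N,
\end{gather*}
with empty products read as $1$; specializing to $m=l+1$ then gives \eqref{325am106} for $i=l+1,\dots,N-1$ (and, as a bonus, for $i=N$). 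The vanishing statement for $i<m$ is exactly Lemma~\ref{304pm105}, so the real content is the product formula.

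For the base case $m=N-1$ the column $\mathbf{V}^{(N-1)}$ is just the last column of $\mathbf{w}_{N-1}$, which Lemma~\ref{213am106} (equivalently Lemma~\ref{715pm310}) gives explicitly: it vanishes above row $N-1$, equals $[\mathbf{w}_{N-1}]_{N-1,N}$ in row $N-1$ and $[\mathbf{w}_{N-1}]_{N,N}$ in row $N$, matching the formula with both one-factor products. For the inductive step I would expand $[\mathbf{V}^{(m)}]_i = \sum_b [\mathbf{w}_m]_{i,b}[\mathbf{V}^{(m+1)}]_b$ and argue that exactly one term survives. Two facts do the work: by the induction hypothesis $[\mathbf{V}^{(m+1)}]_b = 0$ for $b\le m$, and by the block form $\mathbf{w}_m = \mathbf{G}_{m+1}\oplus\mathbf{D}_{N-m-1}$ coming from \eqref{400pm320} together with Lemma~\ref{213am106}, the only nonzero entries $[\mathbf{w}_m]_{i,b}$ with $b\ge m+1$ are the diagonal entries $[\mathbf{w}_m]_{i,i}$ for $i\ge m+2$, and the two entries $[\mathbf{w}_m]_{m+1,m+1}$ and $[\mathbf{w}_m]_{m,m+1}$ in column $m+1$.

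Carrying this out splits into the boundary cases $i\ge m+2$, $i=m+1$, and $i=m$. When $i\ge m+2$ only $b=i$ contributes, giving $[\mathbf{V}^{(m)}]_i = [\mathbf{w}_m]_{i,i}[\mathbf{V}^{(m+1)}]_i$, which absorbs the extra factor $[\mathbf{w}_m]_{i,i}$ into the first product and shifts its lower index from $m+1$ to $m$. When $i=m+1$ only $b=m+1$ survives, producing $[\mathbf{w}_m]_{m+1,m+1}$ times the $i=m+1$ instance of the hypothesis, whose first product is empty; again this is exactly the asserted formula. When $i=m$ the term $b=m$ is killed by the induction hypothesis, so only $b=m+1$ remains, yielding $[\mathbf{w}_m]_{m,m+1}\prod_{k=m+1}^{N-1}[\mathbf{w}_k]_{k,k+1} = \prod_{k=m}^{N-1}[\mathbf{w}_k]_{k,k+1}$, the formula with empty first product.

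The one point demanding care---and the main obstacle---is precisely this bookkeeping at the two boundary rows $i=m,m+1$: one must check that the single surviving entry is the super-diagonal $[\mathbf{w}_m]_{m,m+1}$ (and not a diagonal entry) when $i=m$, and track how the product ranges shift so that the recursion reproduces the claimed indices rather than an off-by-one variant. Everything else is the routine collapse of the matrix product to a single path, forced by the block-diagonal structure of each $\mathbf{w}_m$ and the vanishing entries supplied by Lemma~\ref{304pm105}.
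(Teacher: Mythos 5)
Your proof is correct, but it takes a genuinely different route from the paper's. The paper proves Theorem~\ref{427pm106} by induction on the dimension $N$: it peels off the \emph{rightmost} factor $\mathbf{w}'_N$, identifies the remaining product with the $N$-dimensional case via the embeddings $\mathbf{w}'_i = \mathbf{w}_i \oplus \mathcal{S}_i$, and uses Lemma~\ref{213am106} to collapse the matrix product to the single surviving term $[\mathbf{w}_{l+1}\cdots\mathbf{w}_{N-1}]_{i,N}[\mathbf{w}'_N]_{N,N+1}$; the row $i=N$ must then be treated separately (using Theorem~\ref{1013pm104} together with Lemmas~\ref{213am106} and~\ref{304pm105}), and in fact the paper's theorem only asserts \eqref{325am106} for $i\le N-1$, recovering $i=N$ afterwards in Corollary~\ref{11301pm1010}. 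You instead fix $N$ and run a downward induction on the starting index $m$ of the partial product, peeling off the \emph{leftmost} factor $\mathbf{w}_m$ at each step via $\mathbf{V}^{(m)}=\mathbf{w}_m\mathbf{V}^{(m+1)}$. Both arguments rest on the same structural facts --- the block form $\mathbf{G}_{m+1}\oplus\mathbf{D}_{N-m-1}$ from \eqref{400pm320}, the column description in Lemma~\ref{213am106}, and the vanishing in Lemma~\ref{304pm105} --- and your three-case bookkeeping at rows $i\ge m+2$, $i=m+1$, $i=m$ checks out, including the index shifts. What your route buys: it never changes dimension, so there is no primed/unprimed bookkeeping across different $N$, and it treats all rows $m\le i\le N$ uniformly, so the $i=N$ case (the content of the paper's Corollary~\ref{11301pm1010}) comes out as a byproduct rather than requiring a separate appeal to Theorem~\ref{1013pm104}. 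What the paper's route buys: its $N$-induction matches the recursive structure used throughout (e.g., in the proofs of Theorem~\ref{420am36} and Lemma~\ref{715pm310}), and it reuses already-established diagonal formulas instead of re-deriving the boundary rows by hand.
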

\begin{proof}
We prove by induction on $N$. When $N=2$, the statement obviously holds. Suppose that the statement holds for $N$, that is, for the multi-set $[1,\underbrace{2,\dots,2}_{N-1}]$. We will show that the statement is true for $N+1$, that is, for the multi-set $[1,\underbrace{2,\dots,2}_{N}]$. Let $\sigma' = w'_{l+1}\cdots w'_N$ be given as in Theorem \ref{700pm410} with $S_{N+1}$. If $T_i$ is a simple transposition in $S_N$ and $T'_i$ is a simple transposition in $S_{N+1}$, then $\mathbf{T}_i(\beta,\alpha) \oplus S_{\beta\alpha} = \mathbf{T}'_i(\beta,\alpha)$ for $i=1,\dots, N-1$. Hence, $\mathbf{w}_i\oplus {\mathcal S}_i = \mathbf{w}'_i$ for $i=1,\dots, N-1$ for some scalar ${\mathcal S}_i$, and
 so
\begin{align*}
\mathbf{A}_{\sigma'} & = \mathbf{w}'_{l+1}\cdots \mathbf{w}'_N = \mathbf{w}'_{l+1}\cdots \mathbf{w}'_{N-1}\mathbf{w}'_N \\
&=\big( (\mathbf{w}_{l+1}\cdots \mathbf{w}_{N-1} )\oplus ({\mathcal S}_{l+1}\cdots {\mathcal S}_{N-1}) \big)\mathbf{w}'_N,\qquad l+1 \leq N-1.
\end{align*}
Also, note that
 if $1\leq m,n\leq N$, then
\begin{gather*}
[ \mathbf{w}_i\oplus {\mathcal S}_i]_{m,n} = [ \mathbf{w}_i]_{m,n}= [\mathbf{w}'_i]_{m,n},\qquad i=1,\dots, N-1,
\end{gather*}
and $[\mathbf{w}'_N]_{i,N+1} = 0$ for all $1\leq i \leq N-1$ by Lemma~\ref{213am106}. Hence, for $i=1,\dots, N-1$,
\begin{align*}
 [\mathbf{A}_{\sigma'} ]_{i,N+1}& = \sum_{k=1}^{N+1} \big( (\mathbf{w}_{l+1}\cdots \mathbf{w}_{N-1} )\oplus ({\mathcal S}_{l+1}\cdots {\mathcal S}_{N-1}) \big)_{i,k} [\mathbf{w}'_N ]_{k,N+1} \\
& = \big((\mathbf{w}_{l+1}\cdots \mathbf{w}_{N-1})\oplus ({\mathcal S}_{l+1}\cdots {\mathcal S}_{N-1}) \big)_{i,N} [\mathbf{w}'_N ]_{N,N+1} \\
& = (\mathbf{w}_{l+1}\cdots \mathbf{w}_{N-1})_{i,N}[\mathbf{w}'_N]_{N,N+1}.
\end{align*}
By the induction hypothesis,
\begin{align*}
[\mathbf{w}_{l+1}\cdots \mathbf{w}_{N-1}]_{i,N}[\mathbf{w}'_N]_{N,N+1} &=\Bigg( \prod_{k=l+1}^{i-1}[\mathbf{w}_k]_{i,i}\prod_{k=i}^{N-1}[\mathbf{w}_k]_{k,k+1}\Bigg) [\mathbf{w}'_N ]_{N,N+1}\\
& = \Bigg( \prod_{k=l+1}^{i-1} [\mathbf{w}'_k ]_{i,i}\prod_{k=i}^{N-1} [\mathbf{w}'_k ]_{k,k+1}\Bigg) [\mathbf{w}'_N ]_{N,N+1}\\
& = \prod_{k=l+1}^{i-1} [\mathbf{w}'_k ]_{i,i}\prod_{k=i}^{N} [\mathbf{w}'_k ]_{k,k+1}.
\end{align*}
for $i = l+1,\dots, N-1$. Hence, we obtained
\begin{gather*}
[\mathbf{A}_{\sigma'}]_{i,N+1} = \prod_{k=l+1}^{i-1}[\mathbf{w}'_k ]_{i,i}\prod_{k=i}^{N}[\mathbf{w}'_k ]_{k,k+1}
\end{gather*}
for $i = l+1,\dots, N-1$.
If $i=N$, then using that $\mathbf{w'}_{l+1}\cdots \mathbf{w'}_{N-1} = \mathbf{w}_{l+1}\cdots \mathbf{w}_{N-1} \oplus {\mathcal S}$ for some scalar~${\mathcal S}$, Theorem \ref{1013pm104}, Lemmas~\ref{213am106} and~\ref{304pm105}, we obtain
\begin{align*}
[\mathbf{A}_{\sigma'}]_{N,N+1} & = \sum_{k=1}^{N+1} [\mathbf{w'}_{l+1}\cdots \mathbf{w'}_{N-1} ]_{N,k} [\mathbf{w}'_N ]_{k,N+1}\\
& = [\mathbf{w'}_{l+1}\cdots \mathbf{w'}_{N-1} ]_{N,N} [\mathbf{w}'_N ]_{N,N+1} = [\mathbf{w}_{l+1}\cdots \mathbf{w}_{N-1} ]_{N,N} [\mathbf{w}'_N ]_{N,N+1}\\
& = \Bigg(\prod_{k=l+1}^{N-1}[\mathbf{w}_k]_{N,N}\Bigg) [\mathbf{w}'_N ]_{N,N+1} = \Bigg(\prod_{k=l+1}^{N-1} [\mathbf{w}'_k ]_{N,N}\Bigg) [\mathbf{w}'_N ]_{N,N+1}.\tag*{\qed}
\end{align*}\renewcommand{\qed}{}
\end{proof}

Theorem \ref{1013pm104}, Theorem \ref{427pm106} and the convention on the product introduced in Section~\ref{1126pm1010} imply that~(\ref{325am106}) actually holds for $i=N$.
\begin{Corollary}\label{11301pm1010}
Let $\sigma = w_{l+1}\cdots w_{N-1}$ be expressed as in Theorem~{\rm \ref{700pm410}} with $w_{l+1}, \dots, w_{N-1} \neq 1$, and let
\[\mathbf{A}_{\sigma} = \mathbf{w}_{l+1}\cdots\mathbf{w}_{N-1}
\]
be the matrix corresponding to~$\sigma$.
Then,
\begin{gather*}
[\mathbf{A}_{\sigma}]_{i,N} = \prod_{k=l+1}^{i-1}[\mathbf{w}_k]_{i,i}\prod_{k=i}^{N-1}[\mathbf{w}_k]_{k,k+1}
\end{gather*}
for $i=l+1,\dots, N$.
\end{Corollary}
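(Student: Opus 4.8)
The plan is to treat this Corollary as a one-line extension of Theorem~\ref{427pm106}. That theorem already establishes the identity in the statement for every index $i$ in the range $l+1 \le i \le N-1$, under exactly the hypotheses assumed here (the factorization $\mathbf{A}_\sigma = \mathbf{w}_{l+1}\cdots\mathbf{w}_{N-1}$ with all $w_{l+1},\dots,w_{N-1}\neq 1$). So the first step is simply to quote Theorem~\ref{427pm106} for $i \le N-1$, leaving only the boundary index $i = N$ to be checked.

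For $i = N$ I would first simplify the right-hand side using the product convention from Section~\ref{1126pm1010}. The second factor $\prod_{k=N}^{N-1}[\mathbf{w}_k]_{k,k+1}$ has lower limit $N$ strictly exceeding upper limit $N-1$, hence by convention equals $1$. The asserted formula therefore collapses to
\begin{gather*}
[\mathbf{A}_{\sigma}]_{N,N} = \prod_{k=l+1}^{N-1}[\mathbf{w}_k]_{N,N}.
\end{gather*}

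The remaining task is to recognize this as Theorem~\ref{1013pm104}. Since $\nu = 2\cdots 21$ sits in the last slot (so that $[\mathbf{A}_\sigma]_{N,N} = [\mathbf{A}_\sigma]_{\nu,\nu}$ under the indexing convention of Section~\ref{920am49}), Theorem~\ref{1013pm104} already expresses this diagonal entry as the product of the $(N,N)$-entries of the individual transposition matrices. I would then regroup those single-transposition factors according to the blocks $\mathbf{w}_{l+1},\dots,\mathbf{w}_{N-1}$; this is legitimate because each $\mathbf{T}_i$ with $i\neq N-1$ has the form $\mathbf{G}_{N-1}\oplus S_{\beta\alpha}$ (equation~\eqref{400pm320}) and the single $\mathbf{T}_{N-1}$, if present, is handled by Lemma~\ref{715pm310}, so the last diagonal entry of any such product factors blockwise. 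This yields precisely $\prod_{k=l+1}^{N-1}[\mathbf{w}_k]_{N,N}$, completing the case $i = N$.

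I anticipate no genuine obstacle: the whole argument is bookkeeping around two already-proven results. The only places demanding care are the correct reading of the empty-product convention at $i = N$ and the identification of the matrix index $N$ with the species label $\nu = 2\cdots 21$, so that Theorem~\ref{1013pm104} can be invoked without an off-by-one mismatch.
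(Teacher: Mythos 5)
Your proposal is correct and takes essentially the same route as the paper, which justifies this corollary by exactly the combination you describe: Theorem~\ref{427pm106} for $i=l+1,\dots,N-1$, and for $i=N$ the empty-product convention of Section~\ref{1126pm1010} together with Theorem~\ref{1013pm104} read blockwise as $[\mathbf{A}_\sigma]_{N,N}=\prod_{k=l+1}^{N-1}[\mathbf{w}_k]_{N,N}$. Your added care about the index identification $[\mathbf{A}_\sigma]_{N,N}=[\mathbf{A}_\sigma]_{\nu,\nu}$ for $\nu=2\cdots21$ and the blockwise regrouping via~(\ref{400pm320}) and Lemma~\ref{715pm310} is precisely the bookkeeping the paper leaves implicit.
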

The proof of Theorem \ref{312am109} is based on Corollary~\ref{11301pm1010}.
\begin{proof}[Proof of Theorem~\ref{312am109}]
Let $l$ be the largest integer such that $w_l = 1$ and
\begin{gather*}
\mathbf{A}_{\sigma} = \big(\mathbf{w}_1\cdots \mathbf{w}_{l-1}\big)\big(\mathbf{w}_{l+1}\cdots \mathbf{w}_{N-1}\big).
\end{gather*}
Then,
\begin{gather*}
\big[\mathbf{w}_{l+1}\cdots \mathbf{w}_{N-1}\big]_{N,N} = \big[\mathbf{w}_{l+1}\big]_{N,N}\cdots \big[\mathbf{w}_{N-1}\big]_{N,N}
\end{gather*}
by Theorem \ref{1013pm104}, and
\begin{gather*}
[\mathbf{w}_{l+1}\cdots \mathbf{w}_{N-1}]_{i,N} = \prod_{k=l+1}^{i-1}[\mathbf{w}_k]_{i,i}\prod_{k=i}^{N-1}[\mathbf{w}_k]_{k,k+1}
\end{gather*}
for $i = l+1,\dots, N$ by Corollary \ref{11301pm1010}, and
\begin{gather*}
[\mathbf{w}_{l+1}\cdots \mathbf{w}_{N-1}]_{i,N} = 0
\end{gather*}
for $i = 1,\dots, l$ by Theorem~\ref{302pm106}. Note that each $\mathbf{w}_i$ for $i=1,\dots, l-1$ is written as $\mathbf{G}_{l} \oplus \mathbf{D}_{N-l}$ for some $l\times l$ matrix $\mathbf{G}_{l}$ and $(N-l) \times (N-l)$ diagonal matrix $\mathbf{D}_{N-l}$. Hence, $(\mathbf{w}_1\cdots \mathbf{w}_{l-1})$ is also in the same form as $\mathbf{G}_{l} \oplus \mathbf{D}_{N-l}$ and
\begin{gather*}
[\mathbf{w}_1\cdots \mathbf{w}_{l-1}]_{i,i} =[\mathbf{w}_1]_{i,i}\cdots [\mathbf{w}_{l-1}]_{i,i}
\end{gather*}
for $i= l+1,\dots, N$.
 Therefore, for $i = l+1,\dots, N$, noting that $[\mathbf{w}_l]_{i,i} = 1$,
\begin{align*}
[\mathbf{A}_{\sigma}]_{i,N} &= \sum_{k=1}^N [\mathbf{w}_1\cdots \mathbf{w}_{l-1} ]_{i,k} [\mathbf{w}_{l+1}\cdots \mathbf{w}_{N-1} ]_{k,N}
= [\mathbf{w}_1\cdots \mathbf{w}_{l-1} ]_{i,i} [\mathbf{w}_{l+1}\cdots \mathbf{w}_{N-1} ]_{i,N} \\
& = [\mathbf{w}_1 ]_{i,i}\cdots [\mathbf{w}_{l-1} ]_{i,i}\prod_{k=l+1}^{i-1}[\mathbf{w}_k]_{i,i}\prod_{k=i}^{N-1}[\mathbf{w}_k]_{k,k+1}
 =\prod_{k=1}^{i-1}[\mathbf{w}_k]_{i,i}\prod_{k=i}^{N-1}[\mathbf{w}_k]_{k,k+1}.
\end{align*}
If $i\leq l$, then $[\mathbf{A}_{\sigma}]_{i,N}$ must be zero by Theorem~\ref{302pm106}. If $i\leq l$, then there exists a factor $[\mathbf{w}_l]_{l,l+1}$ in~(\ref{226am109}) but $[\mathbf{w}_l]_{l,l+1}=0$ because $\mathbf{w}_l$ is the identity matrix. Hence, (\ref{226am109}) holds for all $i=1,\dots, N$. Finally, if there is no integer $l$ with $w_l=1$, we just set $l=0$ in Theorem~\ref{427pm106} to complete the proof.
\end{proof}

The explicit formulas of $[\mathbf{w}_k]_{i,i}$ and $[\mathbf{w}_k]_{k,k+1}$ in (\ref{226am109}) and in (\ref{325am106}) are provided in Proposition~\ref{459pm106}.
\begin{proof}[Proof of Proposition \ref{459pm106}]
 Recalling the form of the matrix in (\ref{400pm320}), we can easily obtain (\ref{407am106})~and~(\ref{1234am108}) by directly performing matrix multiplication in
 \begin{gather*}
 \mathbf{w}_k=\mathbf{T}_{k}(\beta_{k_1},\alpha_{k_1}) \mathbf{T}_{k-1}(\beta_{k_2},\alpha_{k_2})\cdots \mathbf{T}_{k-l+1}(\beta_{k_l},\alpha_{k_l}).\tag*{\qed}
 \end{gather*}\renewcommand{\qed}{}
\end{proof}

\appendix

\section{Alternate approach}\label{appendix-A}
\subsection{Physical interpretation}
We give a physical interpretation for the formulas given in this paper and provide an alternate approach to find $[\mathbf{A}_{\sigma}]_{\pi,\nu}$. The cardinality of the set of all permutations of the multi-set $\mathcal{M}_N =[1,2,\dots,2]$ or $\mathcal{N}_N = [1,\dots,1,2]$ is $N$. Let $F$ be the function field of all rational functions of $N$ variables $\xi_1,\dots,\xi_N \in \mathbb{C}$ over $\mathbb{C}$. Let us consider the vector space $F^N$ over the field $F$ on which a bilinear form $\langle \cdot\,|\,\mathbf{A}\,|\,\cdot \rangle\colon F^N \times F^N \to F$ is defined by
\begin{gather*}
\langle (f_1,\dots, f_N)\,|\,\mathbf{A}\,|\, (g_1,\dots, g_N) \rangle = (f_1,\dots, f_N) \mathbf{A} (g_1,\dots, g_N)^t
\end{gather*}
for any $N \times N$ matrix $\mathbf{A}$ of rational functions of $N$ variables $\xi_1,\dots,\xi_N \in \mathbb{C}$.
We identify each permutation $\nu$ of $\mathcal{M}$ as a vector in the standard basis of~$F^N$. (Here, $\mathcal{M}$ is either~$\mathcal{M}_N$ or~$\mathcal{N}_N$.) In case of $\mathcal{M} = [1,2,\dots, 2]$, we identify the permutation $12\cdots2$ as $(1,0,\dots,0)$, $212\cdots2$ as $(0,1,0,\dots,0)$ and so on. In case of $\mathcal{M} = [1,\dots,1, 2]$, we identify the permutation $1\cdots12$ as $(1,0,\dots,0)$, $1\cdots121$ as $(0,1,0,\dots,0)$ and so on. Using the \textit{bra} and the \textit{ket} notations in physics, we write $|~1~\rangle = (1,0,\dots,0)$, $|~2~\rangle = (0,1,\dots,0)$ and so on. The matrix $\mathbf{R}_{\beta\alpha}$ in (\ref{625pm72443}) is interpreted as so called \textit{the $S$-matrix $($scattering matrix$)$} in physics.

\subsubsection{Revisit to Theorem \ref{1013pm104}, Propositions \ref{323am109}, \ref{1122pm104}, and \ref{101149pm104}}
Recall (\ref{400pm320}) and (\ref{1055pm44}). We will omit the superscripts $\mathcal{M}_N$ and $\mathcal{N}_N$ in $\mathbf{A}^{\mathcal{M}_N}$ and $\mathbf{A}^{\mathcal{N}_N}$ for convenience as in Section~\ref{920am49}.
By using the \textit{bra-ket} notation, the matrix elements of $\mathbf{T}_i(\beta,\alpha)$ are given by
\begin{gather}\label{1201am1011}
\langle \nu\,|\,\mathbf{T}_i(\beta,\alpha)\, |\,\nu \rangle =
\begin{cases}
S_{\beta\alpha}& \text{if $\nu(i) = \nu(i+1)$},\\
P_{\beta\alpha}& \text{if $\nu(i) < \nu(i+1)$},\\
Q_{\beta\alpha}& \text{if $\nu(i) > \nu(i+1)$},
\end{cases}
\end{gather}
and
\begin{gather}\label{516am1010}
\langle \nu'\,|\, \mathbf{T}_i(\beta,\alpha)\, |\,\nu \rangle =
\begin{cases}
pT_{\beta\alpha}& \text{if $\nu'(i) =1, \nu'(i+1)=2$ and $\nu(i)=2, \nu(i+1)=1$},\\
qT_{\beta\alpha}& \text{if $\nu'(i) =2, \nu'(i+1)=1$ and $\nu(i)=1, \nu(i+1)=2$},\\
0& \text{otherwise}.
\end{cases}
\end{gather}
The formula
\begin{gather*}
 [\mathbf{A}_{\sigma}]_{\nu, \nu} = [\mathbf{T}_{i_j} ]_{\nu, \nu} \cdots [\mathbf{T}_{i_m} ]_{\nu, \nu} \cdots [\mathbf{T}_{i_1} ]_{\nu, \nu},
\end{gather*}
in Theorem \ref{1013pm104}, Propositions~\ref{323am109}, \ref{1122pm104}, and~\ref{101149pm104}
 is written
\begin{gather}\label{1202am1011}
\langle \nu | \mathbf{A}_{\sigma} |\nu\rangle = \langle \nu| \mathbf{T}_{i_j} |\nu\rangle \cdots \langle \nu| \mathbf{T}_{i_m} |\nu\rangle \cdots \langle \nu| \mathbf{T}_{i_1} |\nu\rangle.
\end{gather}
(\ref{1201am1011}) and (\ref{1202am1011}) motivate us to define the following \textit{operator} which does not change a permutation of species. Let $P$ be the set of all permutations of a given multi-set $ [1,2,\dots, 2]$ or $ [1,\dots,1, 2]$, and let us denote an element of $S_N \times P$ by
\begin{gather*}
(\sigma, \nu) = \left(
 \begin{matrix}
 \sigma \\
 \nu
 \end{matrix}
 \right) =
 \left(
 \begin{matrix}
 \sigma(1)\sigma(2)\cdots\sigma(N) \\
 \nu(1)\nu(2)\cdots\nu(N) \\
 \end{matrix}
 \right).
\end{gather*}
For given simple transposition $T_i$, define a mapping ${T}^*_i$ on the set of all objects written
\begin{gather*}
R \left(
 \begin{matrix}
 \sigma(1)\sigma(2)\cdots\sigma(N) \\
 \nu(1)\nu(2)\cdots\nu(N)
 \end{matrix}
 \right),
\end{gather*}
 where $R$ is 1 or a product of factors in the form of $S_{\beta\alpha}$, $P_{\beta\alpha}$, $Q_{\beta\alpha}$, $pT_{\beta\alpha}$, $qT_{\beta\alpha}$ by
 \begin{gather*}
{T}^*_i R\left(
 \begin{matrix}
 \sigma(1) & \cdots & \sigma(i) & \sigma(i+1) & \cdots & \sigma(N) \\
 \nu(1) & \cdots & \nu(i) & \nu(i+1) & \cdots & \nu(N)
 \end{matrix}
 \right)
 \\
\qquad{} = \begin{cases}
 R S_{\sigma(i+1)\sigma(i)} \left(
 \begin{matrix}
 \sigma(1) & \cdots & \sigma(i+1) & \sigma(i) & \cdots & \sigma(N) \\
 \nu(1) & \cdots & \nu(i) & \nu(i+1) & \cdots & \nu(N)
 \end{matrix}
 \right)&\text{if $\nu(i) = \nu({i+1})$},\vspace{1mm}\\
 R Q_{\sigma(i+1)\sigma(i)}\left(
 \begin{matrix}
 \sigma(1) & \cdots & \sigma(i+1) & \sigma(i) & \cdots & \sigma(N) \\
 \nu(1) & \cdots & \nu(i) & \nu(i+1) & \cdots & \nu(N) \\
 \end{matrix}
 \right)&\text{if $\nu(i) > \nu({i+1})$},\vspace{1mm}\\
 R P_{\sigma(i+1)\sigma(i)}\left(
 \begin{matrix}
 \sigma(1) & \cdots & \sigma(i+1) & \sigma(i) & \cdots & \sigma(N) \\
 \nu(1) & \cdots & \nu(i) & \nu(i+1) & \cdots & \nu(N)
 \end{matrix}
 \right)&\text{if $\nu(i) < \nu({i+1})$}.
 \end{cases}
 \end{gather*}
In other words, $T_i^*$ acts as the usual simple transposition on permutations in $S_N$ but it acts as the identity on permutations of a multi-set. Then, $[\mathbf{A}_{\sigma}]_{\nu, \nu}$ in Theorem~\ref{1013pm104}, Propositions~\ref{323am109},~\ref{1122pm104} and~\ref{101149pm104} are read off as a \textit{by-product} obtained after acting $T^*_{i_j},\dots, T^*_{i_1}$ consecutively on $(12{\cdots}N, \nu)$, that is, $\prod_{(\beta,\alpha)}R_{\beta\alpha}$ in
 \begin{gather*}
 T^*_{i_j}\cdots T^*_{i_1}\left(
 \begin{matrix}
 1 & \cdots & N-1 & N \\
 \nu(1) & \nu(2) & \cdots & \nu(N)
 \end{matrix}
 \right) = \bigg(\prod_{(\beta,\alpha)}R_{\beta\alpha}\bigg)\left(
 \begin{matrix}
 \sigma(1) & \cdots & \sigma(N-1) & \sigma(N) \\
 \nu(1) & \nu(2) & \cdots & \nu(N)\\
 \end{matrix}
 \right).
 \end{gather*}

\subsection{Revisit to Theorem \ref{312am109}}
The formula
\begin{gather*}
[\mathbf{A}_{\sigma}]_{i,N} =\prod_{k=1}^{i-1}[\mathbf{w}_k]_{i,i}\prod_{k=i}^{N-1}[\mathbf{w}_k]_{k,k+1}
\end{gather*}
in Theorem \ref{312am109} is written by using the \textit{bra-ket} notation
\begin{gather*}
\langle i| \mathbf{A}_{\sigma} |N\rangle = \langle i| \mathbf{w}_{1} |i\rangle \cdots \langle i| \mathbf{w}_{i-1} |i\rangle \langle i| \mathbf{w}_{i} |i+1\rangle \cdots \langle N-1| \mathbf{w}_{N-1} |N\rangle.
\end{gather*}
We observe that each $\mathbf{w}_{k}$ with $k=i, \dots, N-1$ changes $|k+1\rangle$ to $|k\rangle$ but each $\mathbf{w}_{k}$ with $k=1,\dots, i-1$ does not change $|i\rangle$. Also, we observe that
\begin{gather*}
[\mathbf{w}_k]_{k,k+1}= \langle k | \mathbf{w}_{k} |k+1\rangle = \langle k| \mathbf{T}_{k} |k+1\rangle \langle k+1| \mathbf{T}_{k-1} |k+1\rangle \cdots \langle k+1| \mathbf{T}_{k-l+1} |k+1\rangle
\end{gather*}
 by (\ref{1234am108}), (\ref{1201am1011}) and (\ref{516am1010}).
 Motivated by these observations, let us define the following \textit{operator} which changes a permutation of species represented by $|k+1\rangle$ to a permutation of species represented by $|k\rangle$ and vice versa to consider all four cases of $\nu=2\cdots 21, 12\cdots 2, 21\cdots 1, 1\cdots 12.$ For given simple transposition~$T_k$, define a mapping $\hat{T}_k$ on the set of all objects written
\begin{gather*}
R \left(
 \begin{matrix}
 \sigma(1)\sigma(2)\cdots\sigma(N) \\
 \nu(1)\nu(2)\cdots\nu(N)
 \end{matrix}
 \right),
\end{gather*}
 where $R$ is 1 or a product of factors in the form of $S_{\beta\alpha}$, $P_{\beta\alpha}$, $Q_{\beta\alpha}$, $pT_{\beta\alpha}$, $qT_{\beta\alpha}$ by
 \begin{gather*}
\hat{T}_i R\left(
 \begin{matrix}
 \sigma(1) & \cdots & \sigma(i) & \sigma(i+1) & \cdots & \sigma(N) \\
 \nu(1) & \cdots & \nu(i) & \nu(i+1) & \cdots & \nu(N)
 \end{matrix}
 \right)
 \\
\qquad{} = \begin{cases}
 R pT_{\sigma(i+1)\sigma(i)}\left(
 \begin{matrix}
 \sigma(1) & \cdots & \sigma(i+1) & \sigma(i) & \cdots & \sigma(N) \\
 \nu(1) & \cdots & \nu(i+1) & \nu(i) & \cdots & \nu(N)
 \end{matrix}
 \right)&\text{if $\nu(i) > \nu({i+1})$},\vspace{1mm}\\
 R qT_{\sigma(i+1)\sigma(i)}\left(
 \begin{matrix}
 \sigma(1) & \cdots & \sigma(i+1) & \sigma(i) & \cdots & \sigma(N) \\
 \nu(1) & \cdots & \nu(i+1) & \nu(i) & \cdots & \nu(N)
 \end{matrix}
 \right)&\text{if $\nu(i) < \nu({i+1})$},\vspace{1mm}\\
 R S_{\sigma(i+1)\sigma(i)} \left(
 \begin{matrix}
 \sigma(1) & \cdots & \sigma(i+1) & \sigma(i) & \cdots & \sigma(N) \\
 \nu(1) & \cdots & \nu(i) & \nu(i+1) & \cdots & \nu(N)
 \end{matrix}
 \right)&\text{if $\nu(i) = \nu({i+1})$}.
 \end{cases}
 \end{gather*}
With the argument in the above, we reformulate Theorem \ref{312am109} in terms of the operators~$T^*$ and~$\hat{T}$ as follows:
\begin{Corollary}\label{235am1011}
Let
\begin{gather*}
\sigma = w_1\cdots w_{N-1}= \underbrace{T_{k_j}\cdots T_{k_{m-1}}}_{w_1\cdots w_{i-1}} \underbrace{T_{k_{m}}\cdots T_{k_1}}_{w_i\cdots w_{N-1}}
\end{gather*}
be expressed as in Theorem~{\rm \ref{700pm410}}. If $w_i,\dots, w_{N-1} \neq 1$, then
\begin{gather*}
T^*_{k_j}\cdots T^*_{k_{m-1}}\hat{T}_{k_{m}}\cdots \hat{T}_{k_1}\left(
 \begin{matrix}
 1~2~\cdots ~N \\
 \nu(1)\cdots \nu(N)
 \end{matrix}
 \right) = [\mathbf{A}_{\sigma}]_{\pi^{(i)},\nu} \left(
 \begin{matrix}
 \sigma(1) \cdots \sigma(N) \\
 \pi^{(i)}(1)\cdots \pi^{(i)}(N)
 \end{matrix}
 \right),
\end{gather*}
and if $i\leq l$ with $w_l =1$, then
$ [\mathbf{A}_{\sigma}]_{\pi^{(i)},\nu} = 0$.
\end{Corollary}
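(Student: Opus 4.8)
The plan is to show that the operator string, applied to the initial object $\binom{12\cdots N}{\nu}$, reproduces term by term the factorization of $[\mathbf{A}_{\sigma}]_{\pi^{(i)},\nu}$ furnished by Theorem~\ref{312am109}, and then to quote that theorem. First I would record the dictionary between the two operators and the matrix elements: directly from the definitions together with~\eqref{1201am1011} and~\eqref{516am1010}, acting by a single operator on an object $R\binom{\sigma}{\nu}$ replaces the scalar $R$ by $R$ times exactly the relevant entry of $\mathbf{T}_i(\sigma(i+1),\sigma(i))$. Thus $T^*_i$ multiplies by the diagonal entry $S$, $P$ or $Q$ selected by comparing $\nu(i)$ with $\nu(i+1)$ and leaves the bottom row untouched, while $\hat{T}_i$ multiplies by $pT$ or $qT$ and transposes the bottom row when $\nu(i)\neq\nu(i+1)$, but collapses to the $S$-case with the bottom row fixed when $\nu(i)=\nu(i+1)$. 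Since both operators act on the top row as the ordinary simple transposition, applying the string in the stated right-to-left order to $12\cdots N$ builds up $\sigma=T_{k_j}\cdots T_{k_1}$ on the top row, so the top row of the output is $\sigma(1)\cdots\sigma(N)$, as required.

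Next I would follow the bottom row through the $\hat{T}$ operators, grouped by the blocks $w_{N-1},\dots,w_i$ (processed in this order, since $\hat{T}_{k_1}$ comes from $w_{N-1}$). I claim the block belonging to $w_k=T_kT_{k-1}\cdots T_s$ carries the bottom row from $\pi^{(k+1)}$ to $\pi^{(k)}$ and contributes the scalar $[\mathbf{w}_k]_{\pi^{(k)},\pi^{(k+1)}}$. While the single $1$ sits at slot $k+1$, each of $\hat{T}_s,\dots,\hat{T}_{k-1}$ inspects two $2$'s, hence contributes an $S$-factor and leaves the $1$ fixed; the top operator $\hat{T}_k$ then inspects slots $k,k+1$, finds $\nu(k)=2>\nu(k+1)=1$, contributes $pT$, and moves the $1$ into slot $k$. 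The accumulated product is $S\cdots S\cdot pT$, which is precisely~\eqref{1234am108}. Iterating over $k=N-1,\dots,i$ walks the $1$ from slot $N$ down to slot $i$, so the bottom row becomes $\pi^{(i)}$ and the accrued scalar is $\prod_{k=i}^{N-1}[\mathbf{w}_k]_{\pi^{(k)},\pi^{(k+1)}}$.

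The remaining $T^*$ operators, coming from $w_1,\dots,w_{i-1}$, leave the bottom row frozen at $\pi^{(i)}$. Grouping again by $w_k$ with $k\le i-1$, every slot these operators inspect lies to the left of slot $i$, except the pair $(k,k+1)$ in the case $k+1=i$; hence each block yields $S\cdots S$ when $i>k+1$ and $S\cdots S\cdot Q$ when $i=k+1$, which is exactly~\eqref{407am106} for $[\mathbf{w}_k]_{\pi^{(i)},\pi^{(i)}}$. Multiplying the blocks gives $\prod_{k=1}^{i-1}[\mathbf{w}_k]_{\pi^{(i)},\pi^{(i)}}$, so the total scalar produced by the full operator string is $\prod_{k=1}^{i-1}[\mathbf{w}_k]_{\pi^{(i)},\pi^{(i)}}\prod_{k=i}^{N-1}[\mathbf{w}_k]_{\pi^{(k)},\pi^{(k+1)}}$, which equals $[\mathbf{A}_{\sigma}]_{\pi^{(i)},\nu}$ by Theorem~\ref{312am109}. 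Together with the output's top row $\sigma$ and bottom row $\pi^{(i)}$, this gives the displayed identity. The case $i\le l$ with $w_l=1$ needs no separate argument, since Theorem~\ref{302pm106} already gives $[\mathbf{A}_{\sigma}]_{\pi^{(i)},\nu}=0$ there.

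The step I expect to be most delicate is the block-by-block tracking of the bottom row: one must check that within each $w_k$ exactly one simple transposition, namely the top one $T_k$, ever straddles the boundary between the $2$'s and the lone $1$, so that it alone produces a $pT$ factor and the genuine swap while all the lower transpositions act on equal species and contribute only $S$-factors. Maintaining the correct $(\beta,\alpha)$ subscripts as the top row evolves is routine, since the notation $T_i(\beta,\alpha)$ was introduced precisely to record which two entries are interchanged at each stage.
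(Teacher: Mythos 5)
Your proposal is correct and follows essentially the same route as the paper: the paper presents this corollary as a direct reformulation of Theorem~\ref{312am109} (together with Proposition~\ref{459pm106} and Theorem~\ref{302pm106} for the zero case), justified by the observation that the $\hat{T}$ operators reproduce the factors $[\mathbf{w}_k]_{\pi^{(k)},\pi^{(k+1)}}$ while walking the single $1$ leftward and the $T^*$ operators reproduce the factors $[\mathbf{w}_k]_{\pi^{(i)},\pi^{(i)}}$ while freezing the bottom row. Your block-by-block tracking is precisely the detailed version of that argument, so there is nothing to add.
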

\begin{Example}\label{317pm127}
Suppose that we want to find $\big[\mathbf{A}_{4321}\big]_{2212, 2221}$. By Theorem \ref{700pm410}, we have
\begin{gather*}
4321 = \underbrace{(T_1)}_{w_1} \underbrace{(T_2T_1)}_{w_2} \underbrace{(T_3T_2T_1)}_{w_3}.
\end{gather*}
Figure \ref{fig:M444} shows how Corollary \ref{235am1011} is used to find
\begin{gather*}
 \big[A_{4321} \big]_{2212,2221}=S_{43}Q_{42}S_{32}pT_{41}S_{31}S_{21}.
\end{gather*}

\begin{figure}[h!]\centering
\begin{tikzpicture}
\draw [thick] (0.5,0) node[]{$1$} ;
\draw [thick] (1,0) node[]{$2$} ;
\draw [thick] (1.6,0) node[]{$3$} ;
\draw [thick] (2.1,0) node[]{$4$} ;
\draw[thick] (0.5, - 1) node[]{\circled{2}} ;
\draw[thick] (1, - 1) node[]{\circled{2}} ;
\draw[thick] (1.6, - 1) node[]{\circled{2}} ;
\draw[thick] (2.1, - 1) node[]{\circled{1}} ;
\draw [dashed, cyan] (0.1, 0.5) -- (1.3,0.5);
\draw [dashed, cyan] (0.1, -1.5) -- (1.3,-1.5);
\draw [dashed, cyan] (0.1, 0.5) -- (0.1,-1.5);
\draw [dashed, cyan] (1.3, 0.5) -- (1.3,-1.5);

\draw [thick, red] (0, 0.3) -- (2.4,0.3);
\draw [thick,red] (0, -1.3) -- (2.4,-1.3);
\draw [thick, red] (0, 0.3) -- (0,-1.3);
\draw [thick, red] (2.4, 0.3) -- (2.4,-1.3);

\draw[thick, ->] (2.6, -0.5) -- (3.6, -0.5) ;
\node at (3.1,-0.2) {$\hat{T}_1$};
\node at (3.1,-0.8) {$S_{21}$};

\draw [thick] (4.1,0) node[]{$2$} ;
\draw [thick] (4.7,0) node[]{$1$} ;
\draw [thick] (5.2,0) node[]{$3$} ;
\draw [thick] (5.8,0) node[]{$4$} ;
\draw[thick] (4.1, - 1) node[]{\circled{2}} ;
\draw[thick] (4.7, - 1) node[]{\circled{2}} ;
\draw[thick] (5.2, - 1) node[]{\circled{2}} ;
\draw[thick] (5.8, - 1) node[]{\circled{1}} ;
\draw [dashed, cyan] (4.4, 0.5) -- (5.5,0.5);
\draw [dashed, cyan] (4.4, -1.5) -- (5.5,-1.5);
\draw [dashed, cyan] (4.4, 0.5) -- (4.4, -1.5);
\draw [dashed, cyan] (5.5, 0.5) -- (5.5,-1.5);

\draw [thick, red] (3.8, 0.3) -- (6.1,0.3);
\draw [thick,red] (3.8, -1.3) -- (6.1,-1.3);
\draw [thick, red] (3.8, 0.3) -- (3.8,-1.3);
\draw [thick, red] (6.1, 0.3) -- (6.1,-1.3);

\draw[thick, ->] (6.3, -0.5) -- (7.3, -0.5) ;
\node at (6.8,-0.2) {$\hat{T}_2$};
\node at (6.8,-0.8) {$S_{31}$};

\draw [thick] (7.9,0) node[]{$2$} ;
\draw [thick] (8.4,0) node[]{$3$} ;
\draw [thick] (9,0) node[]{$1$} ;
\draw [thick] (9.5,0) node[]{$4$} ;
\draw[thick] (7.9, - 1) node[]{\circled{2}} ;
\draw[thick] (8.4, - 1) node[]{\circled{2}} ;
\draw[thick] (9, - 1) node[]{\circled{2}} ;
\draw[thick] (9.5, - 1) node[]{\circled{1}} ;
\draw [dashed, cyan] (8.7, 0.6) -- (9.8,0.6);
\draw [dashed, cyan] (8.7, -1.6) -- (9.8,-1.6);
\draw [dashed, cyan] (8.7, 0.6) -- (8.7,-1.6);
\draw [dashed, cyan] (9.8, 0.6) -- (9.8,-1.6);

\draw [thick, red] (7.6, 0.3) -- (9.9,0.3);
\draw [thick,red] (7.6, -1.3) -- (9.9,-1.3);
\draw [thick, red] (7.6, 0.3) -- (7.6,-1.3);
\draw [thick, red] (9.9, 0.3) -- (9.9,-1.3);

\draw[thick, ->] (10.1, -0.5) -- (11.1, -0.5) ;
\node at (10.6,-0.2) {$\hat{T}_3$};
\node at (10.6,-0.8) {$pT_{41}$};

\draw [thick] (11.8,0) node[]{$2$} ;
\draw [thick] (12.3,0) node[]{$3$} ;
\draw [thick] (13,0) node[]{$4$} ;
\draw [thick] (13.5,0) node[]{$1$} ;
\draw[thick] (11.8, - 1) node[]{\circled{2}} ;
\draw[thick] (12.3, - 1) node[]{\circled{2}} ;
\draw[thick] (13, - 1) node[]{\circled{1}} ;
\draw[thick] (13.5, - 1) node[]{\circled{2}} ;
\draw [dashed, cyan] (11.5, 0.6) -- (12.6,0.6);
\draw [dashed, cyan] (11.5, -1.6) -- (12.6,-1.6);
\draw [dashed, cyan] (11.5, 0.6) -- (11.5,-1.6);
\draw [dashed, cyan] (12.6, 0.6) -- (12.6,-1.6);

\draw [thick, red] (11.4, 0.3) -- (13.8,0.3);
\draw [thick,red] (11.4, -1.3) -- (13.8,-1.3);
\draw [thick, red] (11.4, 0.3) -- (11.4,-1.3);
\draw [thick, red] (13.8, 0.3) -- (13.8,-1.3);

\draw[thick, ->] (14.1, -0.5) -- (14.3, -0.5) -- (14.3, -3) -- (13.3, -3) ;
\node at (13.8,-2.7) {$T^*_1$};
\node at (13.8,-3.3) {$S_{32}$};

\draw [thick] (12.8,-2.5) node[]{$1$} ;
\draw [thick] (12.2,-2.5) node[]{$4$} ;
\draw [thick] (11.7,-2.5) node[]{$2$} ;
\draw [thick] (11.1,-2.5) node[]{$3$} ;
\draw[thick] (12.8, - 3.5) node[]{\circled{2}} ;
\draw[thick] (12.2, - 3.5) node[]{\circled{1}} ;
\draw[thick] (11.7, - 3.5) node[]{\circled{2}} ;
\draw[thick] (11.1, - 3.5) node[]{\circled{2}} ;
\draw [dashed, cyan] (11.4, -2.1) -- (12.5,-2.1);
\draw [dashed, cyan] (11.4, -4.1) -- (12.5,-4.1);
\draw [dashed, cyan] (11.4, -2.1) -- (11.4,-4.1);
\draw [dashed, cyan] (12.5, -2.1) -- (12.5,-4.1);

\draw [thick, red] (10.7, -2.2) -- (13.1,-2.2);
\draw [thick,red] (10.7, -3.8) -- (13.1,-3.8);
\draw [thick, red] (10.7, -2.2) -- (10.7,-3.8);
\draw [thick, red] (13.1, -2.2) -- (13.1,-3.8);

\draw[thick, ->] (10.4, -3) -- (9.4, -3) ;
\node at (9.9,-2.7) {$T^*_2$};
\node at (9.9,-3.3) {$Q_{42}$};

\draw [thick] (8.9,-2.5) node[]{$1$} ;
\draw [thick] (8.4,-2.5) node[]{$2$} ;
\draw [thick] (7.8,-2.5) node[]{$4$} ;
\draw [thick] (7.3,-2.5) node[]{$3$} ;
\draw[thick] (8.9, - 3.5) node[]{\circled{2}} ;
\draw[thick] (8.4, - 3.5) node[]{\circled{1}} ;
\draw[thick] (7.8, - 3.5) node[]{\circled{2}} ;
\draw[thick] (7.3, - 3.5) node[]{\circled{2}} ;
\draw [dashed, cyan] (7, -2.1) -- (8.1,-2.1);
\draw [dashed, cyan] (7, -4.1) -- (8.1,-4.1);
\draw [dashed, cyan] (7, -2.1) -- (7,-4.1);
\draw [dashed, cyan] (8.1, -2.1) -- (8.1,-4.1);

\draw [thick, red] (6.9, -2.2) -- (9.2,-2.2);
\draw [thick,red] (6.9, -3.8) -- (9.2,-3.8);
\draw [thick, red] (6.9, -2.2) -- (6.9,-3.8);
\draw [thick, red] (9.2, -2.2) -- (9.2,-3.8);

\draw[thick, ->] (6.6, -3) -- (5.6, -3) ;
\node at (6.1,-2.7) {$T^*_1$};
\node at (6.1,-3.3) {$S_{43}$};

\draw [thick] (3.6,-2.5) node[]{$4$} ;
\draw [thick] (4.1,-2.5) node[]{$3$} ;
\draw [thick] (4.6,-2.5) node[]{$2$} ;
\draw [thick] (5.1,-2.5) node[]{$1$} ;
\draw[thick] (3.6, - 3.5) node[]{\circled{2}} ;
\draw[thick] (4.1, - 3.5) node[]{\circled{2}} ;
\draw[thick] (4.6, - 3.5) node[]{\circled{1}} ;
\draw[thick] (5.1, - 3.5) node[]{\circled{2}} ;

\draw [thick, red] (5.4, -2.2) -- (3.3,-2.2);
\draw [thick,red] (5.4, -3.8) -- (3.3,-3.8);
\draw [thick, red] (5.4, -2.2) -- (5.4,-3.8);
\draw [thick, red] (3.3, -2.2) -- (3.3,-3.8);

\end{tikzpicture}
\caption{$[\textbf{A}_{4321}]_{2212,2221}=S_{43}Q_{42}S_{32}pT_{41}S_{31}S_{21}$.}\label{fig:M444}
\end{figure}
\end{Example}

\section[Matrix elements \protect{[A\_\{sigma\}]\_\{pi,2221\}}]{Matrix elements $\boldsymbol{[\mathbf{A}_{\sigma}]_{\pi,2221}}$}\label{715pm44}

\begin{table}[h!]\renewcommand{\arraystretch}{1.49}\centering\small
\begin{tabular}{|@{\,}c@{\,}|@{\,}c@{\,}|@{\,}c@{\,}|@{\,}c@{\,}|@{\,}c@{\,}|}
 \hline
 $\sigma$& $[\mathbf{A}_{\sigma}]_{2221,2221}$ & $[\mathbf{A}_{\sigma}]_{2212,2221}$& $[\mathbf{A}_{\sigma}]_{2122,2221}$& $[\mathbf{A}_{\sigma}]_{1222,2221}$ \\
 \hline
 1234 & $1$ & $0$ &$0$&$0$\\ \hline
 1243 &$Q_{43}$ & $pT_{43}$ & $0$&$0$\\\hline
 1324 & $S_{32}$ & $0$ & $0$&$0$\\ \hline
 1342 & $S_{32}Q_{42}$ &$S_{32}pT_{42}$& $0$&$0$\\ \hline
 1423 & $Q_{43}S_{42}$ &$pT_{43}Q_{42}$ & $pT_{43}pT_{42}$&$0$\\ \hline
 1432 & $S_{32}Q_{42}S_{43}$ & $S_{32}pT_{42}Q_{43}$ & $pT_{43}pT_{42}S_{32}$&$0$\\ \hline
 \hline
 2134 & $S_{21}$ & $0$ &$0$&$0$\\ \hline
 2143 &$S_{21}Q_{43}$ & $S_{21}pT_{43}$ & $0$&$0$\\\hline
 2314 & $S_{21}S_{31}$ & $0$ & $0$&$0$\\ \hline
 2341 & $S_{21}S_{31}Q_{41}$ &$S_{21}S_{31}pT_{41}$& $0$&$0$\\ \hline
 2413 & $S_{21}Q_{43}S_{41}$ &$S_{21}pT_{43}Q_{41}$ & $S_{21}pT_{43}pT_{41}$&$0$\\ \hline
 2431 & $S_{21}S_{31}Q_{41}S_{43}$ & $S_{21}S_{31}pT_{41}Q_{43}$ & $S_{21}pT_{43}pT_{41}S_{31}$&$0$\\ \hline
\hline
 3124 & $S_{32}S_{31}$ & $0$ &$0$&$0$\\ \hline
 3142 & $S_{32}S_{31}Q_{42}$ & $S_{32}S_{31}pT_{42}$ & $0$&$0$\\\hline
 3214 & $S_{32}S_{31}S_{21}$ & $0$ & $0$&$0$\\ \hline
 3241 & $S_{32}S_{31}S_{21}Q_{41}$ &$S_{32}S_{31}S_{21}pT_{41}$& $0$&$0$\\ \hline
 3412 & $S_{32}S_{31}Q_{42}S_{41}$ &$S_{32}S_{31}pT_{42}Q_{41}$ & $S_{32}S_{31}pT_{42}pT_{41}$&$0$\\ \hline
 3421 & $S_{32}S_{31}S_{21}Q_{41}S_{42}$ & $S_{32}S_{31}S_{21}pT_{41}Q_{42}$ & $S_{32}S_{31}pT_{42}pT_{41}S_{21}$&$0$\\ \hline
 \hline
 4123 & $Q_{43}S_{42}S_{41}$ & $pT_{43}Q_{42}S_{41}$ &$pT_{43}pT_{42}Q_{41}$&$pT_{43}pT_{42}pT_{41}$\\ \hline
 4132 & $S_{32}Q_{42}S_{43}S_{41}$ & $S_{32}pT_{42}Q_{43}S_{41}$ & $S_{32}pT_{42}pT_{43}Q_{41}$&$S_{32}pT_{42}pT_{43}pT_{41}$\\\hline
 4213 & $S_{21}Q_{43}S_{41}S_{42}$ & $S_{21}pT_{43}Q_{41}S_{42}$ & $S_{21}pT_{43}pT_{41}Q_{42}$&$S_{21}pT_{43}pT_{41}pT_{42}$\\ \hline
 4231 & $S_{21}S_{31}Q_{41}S_{43}S_{42}$ &$S_{21}S_{31}pT_{41}Q_{43}S_{42}$& $S_{21}S_{31}pT_{41}pT_{43}Q_{42}$&$S_{21}S_{31}pT_{41}pT_{43}pT_{42}$\\ \hline
 4312 & $S_{32}S_{31}Q_{42}S_{41}S_{43}$ &$S_{32}S_{31}pT_{42}Q_{41}S_{43}$ & $S_{32}S_{31}pT_{42}pT_{41}Q_{43}$&$S_{32}S_{31}pT_{42}pT_{41}pT_{43}$\\ \hline
 4321 & $S_{32}S_{31}S_{21}Q_{41}S_{42}S_{43}$ & $S_{32}S_{31}S_{21}pT_{41}Q_{42}S_{43}$ & $S_{32}S_{31}S_{21}pT_{41}pT_{42}Q_{43}$&$S_{32}S_{31}S_{21}pT_{41}pT_{42}pT_{43}$\\ \hline
\end{tabular}
\end{table}

\section[Matrix elements \protect{[A\_\{sigma\}]\_\{pi,1112\}}]{Matrix elements $\boldsymbol{[\mathbf{A}_{\sigma}]_{\pi,1112}}$}\label{7155pm44}

\begin{table}[h!]\renewcommand{\arraystretch}{1.49}\centering\small
\begin{tabular}{|@{\,}c@{\,}|@{\,}c@{\,}|@{\,}c@{\,}|@{\,}c@{\,}|@{\,}c@{\,}|}
 \hline
 $\sigma$& $[\mathbf{A}_{\sigma}]_{1112,1112}$ & $[\mathbf{A}_{\sigma}]_{1121,1112}$& $[\mathbf{A}_{\sigma}]_{1211,1112}$& $[\mathbf{A}_{\sigma}]_{2111,1112}$ \\
 \hline
 1234 & $1$ & $0$ &$0$&$0$\\ \hline
 1243 & $P_{43}$& $qT_{43}$ & $0$ & $0$\\\hline
 1324 & $S_{32}$& $0$ & $0$ & $0$ \\ \hline
 1342 & $P_{42}S_{32}$& $qT_{42}S_{32}$ & $0$ & $0$ \\ \hline
 1423 & $P_{43}S_{42}$& $qT_{43}P_{42}$ & $qT_{43}qT_{42}$ & $0$ \\ \hline
 1432 & $S_{43}P_{42}S_{32}$& $S_{43}qT_{42}P_{32}$ & $S_{43}qT_{42}qT_{32}$ & $0$ \\ \hline
 \hline
 2134 & $S_{21}$ & $0$ & $0$ & $0$ \\ \hline
 2143 & $P_{43}S_{21}$ & $qT_{43}S_{21}$ & $0$ & $0$ \\\hline
 2314 & $S_{31}S_{21}$ & $0$ & $0$ & $0$\\ \hline
 2341 & $P_{41}S_{31}S_{21}$ & $qT_{41}S_{31}S_{21}$& $0$ & $0$\\ \hline
 2413 & $P_{43}S_{41}S_{21}$ & $qT_{43}P_{41}S_{21}$ & $qT_{43}qT_{41}S_{21}$& $0$\\ \hline
 2431 & $S_{43}P_{41}S_{31}S_{21}$ & $P_{43}qT_{41}S_{31}S_{21}$ & $qT_{43}qT_{41}S_{31}S_{21}$& $0$\\ \hline
 \hline
 3124 & $S_{32}S_{31}$ & $0$ & $0$ &$0$ \\ \hline
 3142 & $P_{42}S_{32}S_{31}$ & $qT_{42}S_{32}S_{31}$ & $0$ & $0$\\\hline
 3214 & $S_{32}S_{31}S_{21}$ & $0$ & $0$ & $0$\\ \hline
 3241 & $P_{41}S_{32}S_{31}S_{21}$ & $qT_{41}S_{32}S_{31}S_{21}$& $0$ & $0$\\ \hline
 3412 & $P_{42}S_{41}S_{32}S_{31}$ & $qT_{42}P_{41}S_{32}S_{31}$ & $qT_{42}qT_{41}S_{32}S_{31}$ & $0$\\ \hline
 3421 & $S_{42}P_{41}S_{32}S_{31}S_{21}$ & $P_{42}qT_{41}S_{32}S_{31}S_{21}$ & $qT_{42}qT_{41}S_{32}S_{31}S_{21}$ & $0$\\ \hline
 \hline
 4123 & $P_{43}S_{42}S_{41}$ & $qT_{43}P_{42}S_{41}$ & $qT_{43}qT_{42}P_{41}$ &$qT_{43}qT_{42}qT_{41}$\\ \hline
 4132 & $S_{43}P_{42}S_{41}S_{32}$ & $P_{43}qT_{42}S_{41}S_{32}$ & $qT_{43}qT_{42}P_{41}S_{32}$ & $qT_{43}qT_{42}qT_{41}S_{32}$\\\hline
 4213 & $P_{43}S_{42}S_{41}S_{21}$ & $qT_{43}S_{42}P_{41}S_{21}$ & $qT_{43}P_{42}qT_{41}S_{21}$ &$qT_{43}qT_{42}qT_{41}S_{21}$\\ \hline
 4231 & $S_{43}S_{42}P_{41}S_{31}S_{21}$ & $P_{43}S_{42}qT_{41}S_{31}S_{21}$ & $qT_{43}P_{42}qT_{41}S_{31}S_{21}$ &$qT_{43}qT_{42}qT_{41}S_{31}S_{21}$\\ \hline
 4312 & $S_{43}P_{42}S_{41}S_{32}S_{31}$ & $S_{43}qT_{42}P_{41}S_{32}S_{31}$ & $P_{43}qT_{42}qT_{41}S_{32}S_{31}$ &$qT_{43}qT_{42}qT_{41}S_{32}S_{31}$\\ \hline
 4321 & $S_{43}S_{42}P_{41}S_{32}S_{31}S_{21} $& $S_{43}P_{42}qT_{41}S_{32}S_{31}S_{21}$ & $P_{43}qT_{42}qT_{41}S_{32}S_{31}S_{21}$ &$qT_{43}qT_{42}qT_{41}S_{32}S_{31}S_{21}$\\ \hline
\end{tabular}\vspace{-2mm}
\end{table}

\subsection*{Acknowledgements} This work was supported by the faculty development competitive research grants (090118FD5341 and 021220FD4251) by Nazarbayev University. We are thankful to Kamila Izhanova for assisting in preparation for the manuscript and to Francesco Sica for valuable comments. Most of all, we deeply appreciate anonymous referees for providing valuable comments to improve the earlier version of this paper.

\vspace{-2mm}

\pdfbookmark[1]{References}{ref}
\LastPageEnding

\end{document}